\documentclass[a4paper,USenglish,cleveref,autoref,thm-restate]{lipics-v2021}

\usepackage{hyperref}

\usepackage{todonotes}

\graphicspath{{./figures/}}

\def\computationproblem#1#2#3{
  \begin{center}
  \begin{tabular}{rp{0.8\textwidth}}
  {\sc Problem:\enspace}&#1\\
  {\sc Input:\enspace}&#2\\
  {\sc Question:\enspace}&#3\\
  \end{tabular}
  \end{center}
}

\def\dar{degree adjusting reduction}
\def\DAR{Degree adjusting reduction}
\def\bg{block graph}

\def\ubg{uniblock graph}
\def\interbg{interblock graph}

\newtheorem*{conjecture2}{Conjecture}

\title{Computational Complexity of Covering Colored Mixed Multigraphs with Simple Degree Partitions
\thanks{The conference version of this paper appeared in the proceedings of WG'23~\cite{n:BFJKS23-WG}}} 

\titlerunning{Computational Complexity of Covering Colored Mixed Multigraphs...}

\author{Jan Bok}{Department of Algebra, Faculty of Mathematics and Physics, Charles University, Prague, Czech Republic\\ Université Clermont Auvergne, CNRS, Mines de Saint-Étienne, Clermont-Auvergne-INP,
LIMOS, 63000, Clermont-Ferrand, France}{jan.bok@matfyz.cuni.cz}{https://orcid.org/0000-0002-7973-1361}{Supported by research grant GAČR 20-15576S of the Czech Science Foundation, by the International Research Center "Innovation Transportation and Production Systems" of
the I-SITE CAP 20-25 and the ANR project GRALMECO (ANR-21-CE48-0004), and by the
European Union (ERC, POCOCOP, 101071674). Views and opinions expressed are however those of the author(s) only and do
not necessarily reflect those of the European Union or the European Research Council Executive Agency. Neither the European
Union nor the granting authority can be held responsible for them.}
\author{Jiří Fiala}{Department of Applied Mathematics, Faculty of Mathematics and Physics, Charles University, Prague, Czech Republic}{fiala@kam.mff.cuni.cz}{https://orcid.org/0000-0002-8108-567X}{Supported by research grant GAČR 20-15576S of the Czech Science Foundation.}
\author{Nikola Jedličková}{Department of Applied Mathematics, Faculty of Mathematics and Physics, Charles University, Prague, Czech Republic}{jedlickova@kam.mff.cuni.cz}{https://orcid.org/0000-0001-9518-6386}{Supported by research grant GAČR 20-15576S of the Czech Science Foundation and by SVV--2020--260578.}
\author{Jan Kratochvíl}{Department of Applied Mathematics, Faculty of Mathematics and Physics, Charles University, Prague, Czech Republic}{honza@kam.mff.cuni.cz}{https://orcid.org/0000-0002-2620-6133}{Supported by research grant GAČR 20-15576S of the Czech Science Foundation.}
\author{Michaela Seifrtová}{Department of Applied Mathematics, Faculty of Mathematics and Physics, Charles University, Prague, Czech Republic}{michaela.seifrtova@mff.cuni.cz}{https://orcid.org/000-0003-0050-480X}{Supported by research grant GAČR 20-15576S of the Czech Science Foundation.}

\authorrunning{J. Bok, J. Fiala, N. Jedličková, J. Kratochvíl, M. Seifrtová}

\Copyright{Jan Bok and Jiří Fiala and Nikola Jedličková and Jan Kratochvíl and Micheala Seifrtová} 

\ccsdesc[500]{Mathematics of computing~Graph theory}

\keywords{graph cover, covering projection, constrained homomorphism, semi-edges, multigraphs, computational complexity}



\nolinenumbers 

\hideLIPIcs  

\EventEditors{John Q. Open and Joan R. Access}
\EventNoEds{2}
\EventLongTitle{42nd Conference on Very Important Topics (CVIT 2016)}
\EventShortTitle{CVIT 2016}
\EventAcronym{CVIT}
\EventYear{2016}
\EventDate{December 24--27, 2016}
\EventLocation{Little Whinging, United Kingdom}
\EventLogo{}
\SeriesVolume{42}
\ArticleNo{23}

\begin{document}

\maketitle

\begin{abstract}
The notion of graph covers (also referred to as locally bijective homomorphisms) plays an important role in topological graph theory and has found its computer science applications in models of local computation. For a fixed target graph $H$, the {\sc $H$-Cover} problem asks if an input graph $G$ allows a graph covering projection onto $H$. Despite the fact that the quest for characterizing the computational complexity of {\sc $H$-Cover} had been started more than 30 years ago, only a handful of general results have been known so far.  

In this paper, we present a complete characterization of the computational complexity of covering coloured graphs for the case that every equivalence class in the degree partition of the target graph has at most two vertices. We prove this result in a very general form. Following the lines of current development of topological graph theory, we study graphs in the most relaxed sense of the definition. In particular, we consider graphs that are mixed (they may have both directed and undirected edges), may have multiple edges, loops, and semi-edges.  We show that a strong P/NP-complete dichotomy holds true in the sense that for each such fixed target graph $H$, the {\sc $H$-Cover} problem is either polynomial-time solvable for arbitrary inputs, or NP-complete even for simple input graphs. 
\end{abstract}

\section{Introduction}\label{sec:Intro}

The notion of {\em graph covers} stems from topology and is viewed as a discretization of the notion of covers of topological spaces. Apart from being used in combinatorics as a tool for constructing large highly symmetric graphs~\cite{n:Biggs74,n:Biggs81,n:Biggs82,n:Biggs84}, this notion has found computer science applications in the theory of local computation~\cite{n:Angluin80,n:Chalopin05,n:ChMZ06,n:ChalopinP11,n:CM94,n:LMZ93}.
{In this paper we aim to contribute to the kaleidoscope of results about computational complexity of graph covers. We first briefly comment on the known results and show where our main result is placed among them. The formal definitions of graphs under consideration (Definition~\ref{def:d-graph}) and of graph covering projections (Definitions~\ref{def:graph-cover} and~\ref{def:disconnected-cover}) are presented in Section~\ref{sec:Prelim}, as well as the detailed definition of the so called {\dar} (Definition~\ref{def:reduction}), the concept of the degree partition of a graph (Proposition~\ref{prop:degpart}) and identification of several special graphs which play the key role in our characterization in Theorem~\ref{thm:main} (Definition~\ref{def:key-graphs}).} 

Despite the efforts and attention that graph covers received in the computer science community, their computational complexity is still far from being fully understood. Bodlaender~\cite{n:Bodlaender89} proved that deciding if one graph covers another one is an NP-complete problem, if both graphs are part of the input. Abello, Fellows, and Stillwell~\cite{n:AFS91} considered the variant when the target graph, say $H$, is fixed, i.e., a parameter of the problem.

\computationproblem{{\sc $H$-Cover}}{A graph $G$.}{Does $G$ cover $H$?}

They showed examples of graphs $H$ for which the problem is polynomial-time solvable as well as examples for which it is NP-complete, but most importantly, they were the first to formulate the goal of a complete characterization of the computational complexity of the {\sc $H$-Cover} problem, depending on the target graph $H$. Some of the explicit questions of~\cite{n:AFS91} were answered by Kratochvíl, Proskurowski, and Telle~\cite{n:KPT94,n:KPT97}, some of the NP-hardness results have been strengthened to planar input graphs by Bílka, Jir\'asek, Klav\'{\i}k, Tancer, and Volec~\cite{n:BilkaJKTV11}. A connection to a generalization of the Frequency Assignment Problem has been identified through partial covers by Fiala and Kratochv\'{\i}l in~\cite{n:FK01}, with further results proven in~\cite{n:BLT11,n:FKP08}. The computationally even more sophisticated problem of {\em regular covers} has been treated in~\cite{n:FKKN14}. 
In a recent paper~\cite{n:BFHJK21-MFCS}, the authors initiated the study of the complexity of {\sc $H$-Cover} for graphs that allow multiple edges and loops, and also semi-edges. This is motivated by the recent development of topological graph theory where it has now become standard to consider this more general model of graphs~\cite{kwak2007graphs,n:MalnivcMP04,n:MalnicNS00,nedela_mednykh,n:NedelaS96}. The graphs with semi-edges were also introduced and used in mathematical physics, e.g.\ by Getzler and Karpanov~\cite{getzler1998modular}.
It should be pointed out right away that considering loops, multiple edges and directed edges was shown necessary already in~\cite{n:KPT97a}, where it is proven that in order to fully understand the computational complexity of {\sc $H$-Cover} for {\em simple} undirected graphs $H$ (i.e., undirected graphs without multiple edges, loops, and semi-edges), it is necessary and sufficient to understand the complexity of the problem for coloured mixed multigraphs of minimum degree greater than 2. All papers from that era restrict their attention to covers of connected graphs. Disconnected target graphs are carefully treated in detail only in~\cite{n:BFJKS24-DAM}, where it is argued that the right way to define covers of disconnected graphs is to request that the preimages of all vertices have the same size. Such covers are called {\em equitable covers} in~\cite{n:BFJKS24-DAM}, and in this paper we adopt this view and require graph covers to be equitable in case of covering disconnected graphs.    
Apart from several isolated results (which also include a complete characterization of the complexity of {\sc $H$-Cover} for  connected simple undirected graphs $H$ with at most 6 vertices~\cite{n:KPT94}, and proving that several cases of {\sc $H$-Cover}, including $H=K_4$, are NP-complete for planar input graphs~\cite{n:BilkaJKTV11}), 
the following general results have been known about the complexity of {\sc $H$-Cover} for infinite classes of graphs:

\begin{enumerate}
\item\label{result:PSimple} A polynomial-time algorithm for {\sc $H$-Cover} for connected simple undirected graphs $H$ that have at most two vertices in every equivalence class of their degree partitions~\cite{n:KPT94}.

\item NP-completeness of {\sc $H$-Cover} for regular simple undirected graphs $H$ of valency at least three~\cite{n:FK08,n:KPT97}.

\item Complete characterization of the complexity of {\sc $H$-Cover} for undirected (multi)graphs $H$ (without semi-edges) on at most three vertices~\cite{n:KTT16}.

\item\label{result:Mix} Complete characterization of the complexity of {\sc $H$-Cover} for coloured mixed (multi)graphs $H$ on at most two vertices~\cite{n:KPT97a} (for graphs without semi-edges) and~\cite{n:BFHJK21-MFCS} (with semi-edges allowed). 
\end{enumerate}

It turns out that so far all the known NP-hard instances of {\sc $H$-Cover} remain NP-hard for simple input graphs. This has led the authors of~\cite{n:BFJKR24-Algorithmica} to formulate the following conjecture.

\begin{conjecture2}[Strong Dichotomy Conjecture for Graph Covers~\cite{n:BFJKR24-Algorithmica}]
For every graph $H$, the {\sc $H$-Cover} problem  is either polynomial-time solvable for arbitrary input graphs, or it is NP-complete for simple  graphs as input.
\end{conjecture2}

Here our primary objective is to revisit the result of~\cite{n:KPT94} and generalize it in line with the current trends and development of topological graph theory to graphs with loops, semi-edges, and multiple edges.
The main result of our paper is a complete characterization of the computational complexity of {\sc $H$-Cover} for graphs $H$, each of whose equivalence classes of the degree partition has at most 2 vertices. This provides a common generalization of the aforementioned results \ref{result:PSimple} and \ref{result:Mix}.

\begin{theorem}\label{thm:newmain}
The {\sc $H$-Cover} problem satisfies Strong Dichotomy for graphs $H$ such that each equivalence class of the degree partition has at most 2 vertices --- it is either polynomial-time solvable for general graphs on input, or it is NP-complete for simple input graphs.
\end{theorem}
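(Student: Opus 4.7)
The plan is to proceed by structural analysis of $H$ using the \dar{} (Definition~\ref{def:reduction}) to reduce any instance of {\sc $H$-Cover} to one on a target drawn from a finite family of kernel graphs. Since every equivalence class of the degree partition of $H$ contains at most two vertices, the \ibg{} decomposes into singleton \ubg{}s and two-vertex induced subgraphs, and the \interbg{} is a coloured mixed multigraph on the set of these blocks. For each two-vertex block there are only finitely many possible intra-block coloured mixed subgraphs on vertices of prescribed degrees. Exhaustively applying the \dar{} then yields an equivalent instance whose target is one of the distinguished configurations of Definition~\ref{def:key-graphs}, so the classification task is reduced to a finite enumeration.

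For every kernel target classified as polynomial, the strategy is to cast {\sc $H$-Cover} as a \textsc{2-SAT} instance, in the spirit of result~\ref{result:PSimple}: the preimage of each two-vertex block in the input graph $G$ offers essentially a binary choice of assignment (swap or no-swap of the two vertices of the block along each fibre), \ubg{}s impose unary constraints, and the requirement that the covering projection be locally bijective on each fibre together with consistency across interblock edges translates into binary clauses. For the NP-hard regime, I would start from the already hard targets of result~\ref{result:Mix} (coloured mixed multigraphs with semi-edges on two vertices, from~\cite{n:BFHJK21-MFCS}) and show that every non-polynomial kernel either is such a two-vertex target or contains one as a sub-target whose hardness pulls back through the reverse direction of the \dar{}.

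To obtain the \emph{strong} form of the dichotomy, I would verify that each hardness reduction produces, or can be modified to produce, a simple input graph. The standard technique is to replace any would-be multi-edge, loop, or semi-edge in the constructed input by a small simple gadget whose $H$-covers realise the same local behaviour; suitable gadgets already exist for the base two-vertex cases and can be transported through the \dar{}. The main obstacle I foresee is twofold: first, the bookkeeping for the equitable-cover convention of~\cite{n:BFJKS24-DAM} when $H$ is disconnected, ensuring that the \dar{} remains faithful in both directions; and second, the sheer case explosion from the number of two-vertex intra-block patterns once colours, directions, loops, and semi-edges are combined, which will require a carefully organised enumeration to delimit the polynomial from the NP-hard regime.
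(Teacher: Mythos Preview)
Your overall architecture---reduce via the \dar{}, classify the reduced target, and handle the polynomial and hard regimes separately---matches the paper's. However, there are three concrete gaps that would block the argument as written.

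First, the \dar{} does not reduce $H$ to a member of a finite family. It produces a graph $H^r$ of minimum degree greater than $2$ (or a path/cycle), but $H^r$ can still be arbitrarily large; Definition~\ref{def:key-graphs} classifies the individual \emph{monochromatic uniblock and interblock subgraphs} of $H^r$ as harmless, dangerous, or harmful, and these are infinite parametrized families ($F(b,c)$, $W(k,m,\ell,p,q)$, $WW(b,c)$, \ldots), not a finite list of kernels. The dichotomy is driven by which of these local pieces occur inside $H^r$, not by $H^r$ itself landing in a finite set.

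Second, \textsc{2-Sat} alone does not suffice for the polynomial side. When a singleton block carries a semi-edge (type $F(1,c)$), or a doublet carries the pattern $W(1,c,0,c,1)$, one must first check that a certain induced subgraph of the input has a perfect matching before any \textsc{2-Sat} encoding is even well-posed; the paper's algorithm is an explicit interleaving of matching tests (Steps~3--4) with a subsequent \textsc{2-Sat} phase (Step~5). Your ``binary choice per fibre plus unary/binary constraints'' description does not capture this, and the paper emphasizes that the compatibility of these two techniques here is non-obvious.

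Third, and most seriously, your plan for the NP-hard side---pull back hardness from the known two-vertex targets of~\cite{n:BFHJK21-MFCS}---misses the \emph{dangerous} case entirely. The interblock graph $FW(2)$ has three vertices, is polynomial-time solvable on its own (it reduces to $F(0,1)$), yet forces NP-completeness of {\sc $H$-Cover} whenever it sits inside an $H$ of minimum degree greater than~$2$. This case cannot be obtained by embedding a hard two-vertex graph; the paper devotes Propositions~\ref{prop:NP-interblock-uneasy} and~\ref{prop:reducedFW(2)NPc} to a substantial case analysis showing that $FW(2)$ together with the minimum-degree constraint forces one of roughly two dozen specific block configurations ($A_1, A_2, B_k, C_k, \ldots, N'$), each requiring its own gadget reduction from {\sc 2-in-4-Sat}. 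The garbage collection that lifts these local hardness results to all of $H$ (Proposition~\ref{prop:balanced}) is also more delicate than ``replace non-simple features by gadgets'': it uses $2m$ copies of the input linked by carefully chosen matchings, and depends on a balancedness property together with the swap Lemma~\ref{lem:swap}.
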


The actual characterization is rather technical and it follows from Theorem~\ref{thm:main} in Section~\ref{sec:Prelim}, presented after the formal definitions of all the notions and special graphs that are needed for it.
The characterization extends well beyond the motivating results presented in~\cite{n:KPT94,n:KPT97a}. The key novel contributions are as follows:

\begin{itemize}
\item For simple graphs $H$, the {\sc $H$-Cover} problem is always polynomial-time solvable (if $H$ has all equivalence classes of size at most 2), while for general graphs, already graphs with 2 vertices may define NP-complete cases. Indeed, when semi-edges are allowed, some graphs on a single vertex may also yield NP-completeness.

\item For simple graphs $H$, the polynomial time algorithm is based on {\sc 2-Sat}, while in case of general graphs, our polynomial time algorithm is a blend of {\sc 2-Sat} and {\sc Perfect Matching} algorithms; this is surprising, as these two approaches are known to be incompatible in some other situations.

\item The NP-complete cases are proven for simple input graphs, which is in line with the Strong Dichotomy Conjecture~\cite{n:BFJKR24-Algorithmica}. This is in contrast to many previous results which allowed multiple edges and loops in the input graphs. 

\end{itemize} 

\section{Preliminaries}\label{sec:Prelim}

\subsection{Definitions}

Throughout the paper we consider the most general notion of a {\em graph} which allows multiple edges, loops, directed edges, and also semi-edges, and whose elements --- both edges and vertices --- are coloured. A semi-edge is a pendant edge, incident with just one vertex (and adding just $1$ to the degree of this vertex, unlike the loop, which adds $2$ to the degree). In figures, semi-edges are depicted as lines with one loose end, the other one being the vertex incident to the semi-edge.  To avoid any possible confusion, we present a formal definition.

\begin{definition}\label{def:d-graph} 
  A {\em graph} is a quadruple $G=(V,\Lambda,\iota,c)$, where $V$ is a (finite) set of {\em vertices},
  $\Lambda=\overline{E}\cup\overrightarrow{E}\cup\overline{L}\cup\overrightarrow{L}\cup S$ is the set of edges of $G$, $\iota:\Lambda\to {V\choose 2}\cup (V\times V)\cup V$ is the incidence mapping of edges, and $c:V\cup E\to C$ is a colouring of the vertices and edges. There are several types of edges:
  \begin{itemize}
  \item the edges of $\overline{E}$ are called {\em normal undirected edges} and they satisfy $\iota(e)\in{V\choose 2}$, 
  \item the edges of $\overrightarrow{E}$ are called {\em normal directed edges} and they satisfy $\iota(e)\in (V\times V)\setminus\{(u,u):u\in V\}$, 
  \item the edges of $\overline{L}$ ($\overrightarrow{L}$) are called {\em undirected (directed}, respectively) {\em loops} and we have $\iota(e)\in V$ in both cases, and 
  \item the edges of $S$ are called {\em semi-edges} and again $\iota(e)\in V$.   
  \end{itemize}
  \end{definition}

The vertex set and edge set of a graph $G$ are denoted by $V(G)$ and $\Lambda(G)$, respectively, and a similar notation is used for $\overline{E}(G), \overrightarrow{E}(G), \overline{L}(G), \overrightarrow{L}(G)$ and $S(G)$.
Since we can distinguish vertices from edges, and directed edges from the undirected ones, we assume without loss of generality that colours of vertices, of directed edges and of undirected ones are different. With a slight abuse of terminology we speak about {\em directed} and {\em undirected} edge-colours (i.e., those used on directed or on undirected edges, respectively). Note that we allow directed loops and directed normal edges to have the same colour, as well as undirected normal edges, undirected loops and semi-edges. 
Edges with the same value of the incidence function are called {\em parallel}. A graph is called {\em simple} if it has no parallel edges, no pair of opposite directed normal edges, no directed and undirected edges incident with the same pair of vertices, no loops and no semi-edges.

Loops are considered as cycles of length 1, and analogously, pairs of edges between two vertices as cycles of length 2; if they are directed then they shall use opposite directions.
A graph is {\em bipartite} if its vertex set can be partitioned into two independent (i.e., edgeless) subsets; in particular, bipartite graphs contain no loops nor semi-edges. A connected graph is a {\em tree} if it contains no loops, no semi-edges, no two normal edges incident with the same pair of vertices and no cycles (directed or undirected).   

For an undirected edge-colour $\alpha$, the $\alpha$-degree of a vertex $u$, denoted by $\mbox{deg}^{\alpha}u$, is defined as the sum of the numbers of normal edges and semi-edges of colour $\alpha$ incident with $u$ plus twice the number of loops, also of colour $\alpha$, incident with $u$. For a directed edge-colour $\alpha$, we talk about the $\alpha$-indegree and $\alpha$-outdegree of $u$, denoted by $\mbox{deg}^{\alpha}_-u$ and $\mbox{deg}^{\alpha}_+u$, respectively, which are defined in an analogous way. A graph is {\em regular} if all vertices have the same $\alpha$-degree, $\alpha$-indegree and $\alpha$-outdegree for all edge-colours $\alpha$.

When talking about a disjoint union of graphs, we assume that the graphs are vertex (and therefore also edge) disjoint. 

The following definition introduces the central notion of this paper.

\begin{definition}\label{def:graph-cover}
Let $G$ and $H$ be connected graphs coloured by the same sets of colours. A {\em covering projection} from $G$ to $H$ is a pair of colour-preserving mappings $f_V:V(G)\to V(H)$, $f_E:\Lambda(G)\to \Lambda(H)$ such that:
\begin{itemize}
  \item the preimage of an undirected normal edge of $H$ incident with vertices $u,v\in V(H)$ is a perfect matching in $G$ spanning $f^{-1}(u)\cup f^{-1}(v)$, each edge of the matching being incident with one vertex of $f^{-1}(u)$ and with one vertex of $f^{-1}(v)$;
  \item the preimage of a directed normal edge of $H$ leading from a vertex $u\in V(H)$ to a vertex $v\in V(H)$ is a perfect matching in $G$ spanning $f^{-1}(u)\cup f^{-1}(v)$, each edge of the matching being oriented from a vertex of $f^{-1}(u)$ to a vertex of $f^{-1}(v)$;

  \item the preimage of an undirected loop of $H$ incident with a vertex $u\in V(H)$ is a disjoint union of cycles in $G$ spanning $f^{-1}(u)$ --- recall that here and in the next item cycles of length 1 or 2 are also allowed;
  \item the preimage of a directed loop of $H$ incident with a vertex $u\in V(H)$ is a disjoint union of directed cycles in $G$ spanning $f^{-1}(u)$; and
  \item the preimage of a semi-edge of $H$ incident with a vertex $u\in V(H)$ is a disjoint union of semi-edges and normal edges spanning $f^{-1}(u)$ --- each vertex of $f^{-1}(u)$ being incident to exactly one semi-edge and no normal edges, or exactly one normal edge and no semi-edges, from the preimage. 
\end{itemize}
\end{definition} 

We say that $G$ {\em covers} $H$, and write $G\to H$, if there exists a covering projection from $G$ to $H$. Informally speaking, if $G$ covers $H$ via a covering projection $(f_V,f_E)$ and if
agents move along the edges of $G$ and in every moment see only the label $f_V(u)$ (or $f_E(e)$) of the vertex (edge) they are currently visiting, plus the labels of the incident edges (vertices, respectively), then the agents cannot distinguish whether they are moving through the covering graph $G$ or the target graph $H$. Mind the significant difference between undirected loops and semi-edges. The presence of an undirected loop incident with a vertex, say $u$, means that there are two ways how to move from $u$ to $u$ along this loop, while for a semi-edge, there is just one way. The same holds true for their preimages in covering projections (undirected cycles, or isolated edges). An example of a graph and a possible cover is depicted in Figure~\ref{fig:example-cover}.

\begin{figure}
\centering
\includegraphics[width=0.5\textwidth]{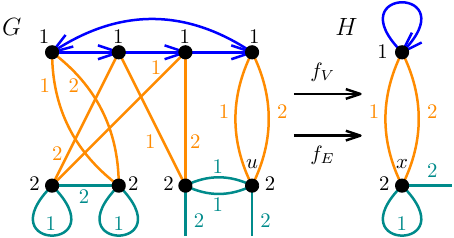}
\caption{An example of a covering projection from a graph $G$ to a graph $H$.}
\label{fig:example-cover}
\end{figure}

In~\cite{n:BFHJK21-MFCS}, a significant role of semi-edges was noted. A colour-preserving vertex-mapping $f_V: V(G)\to V(H)$ is called {\em degree-obedient} if for any edge-colour $\alpha$, any vertex $u\in V(G)$ and any vertex $x\in V(H)$, the number of edges of colour $\alpha$ that lead from $u$ to a vertex from $f_V^{-1}(x)$ in $G$ is the same as the number of edges of colour $\alpha$ leading from $f_V(u)$ to $x$ in $H$, counting those edges that may map onto each other in a covering projection.
In particular, if $x=f_V(u)$ and $H$ has $\ell$ undirected loops and $s$ semi-edges incident with $x$, and $u$ is incident with $k$ loops, $n$ normal undirected edges with both end-vertices in $f_V^{-1}(x)$ and $t$ semi-edges, then $t\le s$ and $2k+n+t=2\ell+s$; analogously for other types of edges. For an example see Figure~\ref{fig:example-cover} where for the two chosen vertices $u$ and $x$ and cyan edges we have $\ell=s=t=1$, $k=0$ and $n=2$. It is proven in~\cite{n:BFHJK21-MFCS} that every degree-obedient vertex-mapping extends to a covering projection if $H$ has no semi-edges, or when $G$ is bipartite. 

It is well known that the preimages of any two vertices in a connected graph have the same size~\cite{k:GT87,k:Reidemeister32}. For disconnected graphs, we add this requirement to the definition (as argued in~\cite{n:BFJKS24-DAM}).   

\begin{definition}\label{def:disconnected-cover}
Let $G$ and $H$ be (not necessarily connected) graphs and let  $f=(f_V,f_E) \colon G\to H$ be a pair of incidence-compatible colour-preserving mappings.
Then $f$ is a {\em  covering projection} of $G$ to $H$ if  for each component $G_i$ of $G$, the restricted mapping $f|_{G_i}:G_i \to H$ is a covering projection of $G_i$ onto some component of $H$, and for every two vertices $u,v\in V(H)$, $|f^{-1}(u)|=|f^{-1}(v)|$.
\end{definition}

Another notion we need to recall is that of the {\em degree partition} of a graph. This is a standard notion for simple undirected graphs, cf. \cite{n:CorneilG70}, and it can be naturally generalized to graphs in general. A partition of the vertex set of a graph $G$ is {\em equitable} if every two vertices of the same class of the partition
\begin{enumerate}
  \item have the same colour, and 
  \item have the same number of neighbours along edges of the same colour in every class (including its own). 
\end{enumerate}
The {\em degree partition} of a graph is then the coarsest equitable partition.  It can be found in polynomial time, and moreover, a canonical linear ordering of the classes of the degree partition comes out from the algorithm. Let $V(G)=\bigcup_{i=1}^kV_i$ be the degree partition of $G$, in the canonical ordering. The {\em degree refinement matrix} of $G$ is a $k\times k$ matrix $M_G$ whose entries are vectors indexed by edge-colours expressing that every vertex $u\in V_i$ has $M_{i,j,c}$ neighbours in $V_j$ along edges of colour $c$    (if $i=j$ and $c$ is a colour of directed edges, then  every vertex $u\in V_i$ has $M_{i,j,c}$ in-neighbours and $M_{i,j,c}$ out-neighbours in $V_i$ along edges of colour $c$). The following is proven in \cite{n:KPT97a} for graphs without semi-edges, and in~\cite{FS25} the extension to graphs with semi-edges.

\begin{proposition}\label{prop:degpart}
Let $G$ and $H$ be graphs and let $V(G)=\bigcup_{i=1}^kV_i$ and $V(H)=\bigcup_{i=1}^{\ell}W_i$ be the degree partitions of their vertex sets, in the canonical orderings. If $G$ covers $H$, then $k=\ell$, the degree refinement matrices of $G$ and $H$ are equal, and for any covering projection $f:G\to H$, $f(V_i)=W_i$ holds true for every $i=1,2,\ldots,k$. 
\end{proposition}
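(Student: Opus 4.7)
The plan is to exploit the characterization of the degree partition as the \emph{coarsest} colour-preserving equitable partition, together with the fact that its canonical linear ordering is produced by an algorithm whose decisions depend only on vertex colours and colour-indexed neighbourhood signatures between already-ordered classes --- data that a covering projection preserves by construction.

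First I would observe that for any covering projection $f=(f_V,f_E):G\to H$, the preimage partition $\pi_f:=\{f_V^{-1}(x) : x\in V(H)\}$ of $V(G)$ is colour-preserving and equitable: by Definition~\ref{def:graph-cover}, for every $u\in f_V^{-1}(x)$ and every colour $\alpha$, the number of $\alpha$-edges from $u$ that touch $f_V^{-1}(y)$ is determined by the corresponding counts of $\alpha$-edges from $x$ to $y$ in $H$. Since $\{V_i\}_{i=1}^{k}$ is the coarsest such partition, every $f_V^{-1}(x)$ sits inside a single class $V_{i(x)}$, so $\{f_V(V_i)\}_{i=1}^{k}$ is a well-defined partition of $V(H)$ into exactly $k$ non-empty classes (surjectivity of $f_V$ is automatic for connected $H$ and is built into Definition~\ref{def:disconnected-cover} otherwise).

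The next step is a symmetric counting argument showing that $\{f_V(V_i)\}$ is itself a colour-preserving equitable partition of $V(H)$: for $x,y\in f_V(V_i)$ and any choice of $u\in V_i\cap f_V^{-1}(x)$, $v\in V_i\cap f_V^{-1}(y)$, the covering property identifies $\alpha$-edges from $x$ into $f_V(V_{i'})$ with $\alpha$-edges from $u$ into $V_{i'}=\bigsqcup_{z\in f_V(V_{i'})}f_V^{-1}(z)$, and likewise for $y$ and $v$, and these two counts coincide because $u$ and $v$ lie in the same class $V_i$. Hence $\{f_V(V_i)\}$ refines $\{W_j\}_{j=1}^{\ell}$, giving $k\geq\ell$. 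The fully symmetric argument applied to the equitable partition $\{f_V^{-1}(W_j)\}$ of $V(G)$ yields $\ell\geq k$, so $k=\ell$, both refinements are actually equalities, and there is a bijection $\sigma$ with $V_i=f_V^{-1}(W_{\sigma(i)})$ and $f_V(V_i)=W_{\sigma(i)}$. Equality of the degree refinement matrices follows at once: the covering projection restricts to a colour-preserving bijection between $\alpha$-edges at $u\in V_i$ leading into $V_{i'}$ and $\alpha$-edges at $f_V(u)\in W_{\sigma(i)}$ leading into $W_{\sigma(i')}$. That $\sigma$ is the identity under the canonical orderings then follows by induction on the refinement rounds of the partition-refinement algorithm: at each round the lexicographic comparison of colour-indexed neighbourhood signatures into the already-ordered classes is transported across $\sigma$ by the matrix equality just established.

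The only delicate point is the semi-edge bookkeeping when checking equitability of $\pi_f$ in the first step: a semi-edge of $H$ at a vertex $x$ may lift either to a semi-edge or to a normal edge of $G$, so the edge counts have to be organised through the degree-obedience condition discussed just after Figure~\ref{fig:example-cover} rather than by naive enumeration. This is exactly the issue addressed in~\cite{FS25}, whose extension of the classical theorem to graphs with semi-edges is what makes the first step go through in full generality; once that lemma is invoked, the rest of the argument is routine combinatorics on partitions.
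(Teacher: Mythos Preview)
The paper does not supply its own proof of this proposition; it simply quotes the result and attributes the argument to~\cite{n:KPT97a} (for graphs without semi-edges) and to~\cite{FS25} (for the extension to semi-edges). Your sketch is therefore not competing against anything in the text, and as an outline of the standard partition-refinement argument it is sound: the two key observations --- that the fibre partition $\{f_V^{-1}(x)\}$ and the pullback partition $\{f_V^{-1}(W_j)\}$ are both equitable in $G$, while the pushforward $\{f_V(V_i)\}$ is equitable in $H$ --- are exactly what drives the classical proof, and your handling of the canonical ordering via induction on refinement rounds is the usual way to pin down $\sigma=\mathrm{id}$. You are also right to flag the semi-edge bookkeeping as the one genuinely nontrivial point and to defer it to~\cite{FS25}, which is precisely what the paper does.
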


The classes of the degree partition are further referred to as {\em blocks}.
Once we have determined the degree partition of a graph, we re-colour the vertices so that vertices in different blocks are distinguished by vertex-colours (representing the membership to blocks), and recolour and de-orient the edges so that edges connecting vertices from different blocks are undirected and so that for any edge-colour, either all edges of this colour belong to the same block, or they are connecting vertices from the same pair of blocks. The degree partition remains unchanged after such a re-colouring.  

A {\em \bg} of $G$ is a subgraph $G'$ of $G$ whose vertex set is the union of some blocks of $G$, and such that for every edge-colour $\alpha$, $G'$ either contains all edges of colour $\alpha$ that $G$ contains, or none. A block graph $G'$ of $G$ is {\em induced} if $G'$ contains all edges of $G$ on the vertices of $V(G')$. A block graph is {\em monochromatic} if it contains edges of at most one colour. A {\em \ubg} is a block graph whose vertex set is a single block of $G$. An {\em \interbg} of $G$ is a block graph whose vertices belong to two blocks of $G$, and each of its edges is incident with vertices from both blocks (i.e., with one vertex from each block).    

Since a graph covering is defined as a local bijection, it maintains vertex degrees. In particular, vertices of degree one are mapped onto vertices of degree one and once we choose the image of such a vertex, the image of its neighbour is uniquely determined. Applied inductively, this proves the well known fact that the only connected cover of a (rooted) tree is an isomorphic copy of the tree itself~\cite{k:Reidemeister32}. (Note here, that by definition a tree is a connected graph that does not contain cycles, parallel edges, oppositely oriented directed edges, loops, and semi-edges.) As a special case, the only connected cover of a path is the path itself. These observations are the basis of the following {\em\dar} which has been introduced in~\cite{n:KPT97a} for graphs without semi-edges, and generalized to graphs with semi-edges in~\cite{n:BFJKS24-DAM}. 

\begin{definition}\label{def:reduction}{\em (\DAR)}
Let $G$ be a connected graph not isomorphic to a tree.

\begin{enumerate}
\item Determine  all vertices that belong to cycles in $G$ or that are incident with semi-edges or that lie on paths connecting the aforementioned vertices.
Determine all maximal induced subtrees pending on these vertices.

Determine the isomorphism types of these subtrees, introduce a new vertex colour for each isomorphism type, delete each subtree and colour its root by the colour corresponding to the isomorphism type of the deleted tree. In this way we obtain a graph with minimum degree at least 2 (or a single-vertex graph). 

\item If the obtained coloured graph is a path or a cycle, quit. 

\item Determine all maximal paths with at least one end-vertex of degree greater than 2 and all inner vertices being of degree exactly 2. 
Determine all colour patterns of the sequences of vertex colours, edge-colours and edge directions along such paths, and introduce a new colour for each such pattern. Replace each such path by a new edge of this colour as follows:
\begin{enumerate}
\item If both end-vertices of the path are of degree greater than 2 and the colour pattern, say $\pi$, is symmetric, the path gets replaced by an undirected edge (or loop) of colour $\pi$.

\item If both end-vertices of the path are of degree greater than 2 and the colour pattern $\pi$ is asymmetric, the path gets replaced by a directed edge (or loop) of colour $\pi$.

\item If the path ends with a semi-edge (the other end of the path must be a vertex of degree greater than 2), replace it by a semi-edge incident with its end-vertex of degree greater than 2, and colour it with colour $\pi\alpha$, where $\pi$ is the colour pattern along the path without the ending semi-edge, and $\alpha$ is the colour of the semi-edge. In this case, consider the colours corresponding to $\pi\alpha$ (on one sided open paths) and $\pi\alpha\pi^{-1}$ on symmetric paths ending with vertices of degree greater than 2 on both sides, as the same colour (this enables a path of colour pattern $\pi\alpha\pi^{-1}$ be mapped on the one sided open path in a covering projection).
\end{enumerate}
\end{enumerate}

Denote the resulting graph by $G^r$. Note that $G^r$ is a path or a cycle (if Step~2 was not performed) or has minimum degree greater than 2.
\end{definition}
 
\begin{figure}
\centering
\includegraphics[width=0.7\textwidth,page=1]{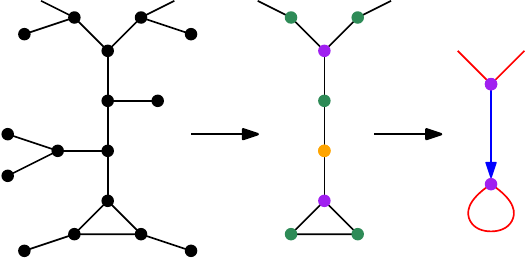}
\caption{An example of the application of the {\dar}.
}
\label{fig:reduction}
\end{figure}

See an example of such a reduction in Figure~\ref{fig:reduction}.
 
The reduced graph can be constructed in polynomial time. The usefulness of this reduction is observed in~\cite{n:KPT97a} and later on in \cite{n:BFJKS24-DAM}:

\begin{observation}\label{obs:reduction}
Given graphs $G$ and $H$, perform the {\dar} on both of them simultaneously. Then $G\to H$ if and only if $G^r\to H^r$.
\end{observation}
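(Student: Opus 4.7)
The plan is to verify the equivalence $G\to H \iff G^r\to H^r$ by arguing that each of the two substantive steps in Definition~\ref{def:reduction} (the pruning of pendant subtrees in Step~1 and the contraction of degree-2 paths in Step~3) preserves the existence of a covering projection in both directions. Since the two steps are independent (Step~3 only touches the vertices that survive Step~1 and its degree-2 paths, and the recoloring introduced by Step~1 respects the equitable partition of the core), it suffices to argue them one at a time and compose.

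For Step~1, suppose $G\to H$ via $(f_V,f_E)$. Because covering projections preserve degrees and the only connected cover of a (rooted) tree is an isomorphic copy of that tree, every maximal pendant subtree $T$ hanging off a core vertex $u\in V(G)$ must be mapped isomorphically onto a pendant subtree of the same isomorphism type rooted at $f_V(u)$ in $H$. Therefore after deleting all pendant subtrees and recoloring their roots according to their isomorphism type, the restricted mapping is still colour-preserving and locally bijective on the remaining graph, giving a cover of the Step-1-reduced graphs. Conversely, given a cover of the reduced graphs, the new vertex colours precisely record, for each surviving vertex, the multiset of isomorphism types of pendant subtrees that were removed; so we can reattach isomorphic copies of those subtrees at every preimage vertex and extend the covering projection in the unique tree-isomorphic way. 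Thus Step~1 preserves coverability in both directions.

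For Step~3, consider a maximal path $P$ in the Step-1-reduced $G$ with internal vertices of degree exactly $2$; the cover forces the colour pattern along $P$ to match the colour pattern of some path $P'$ in $H$ between the images of the endpoints, and the mapping along $P$ is determined up to an orientation choice only when $P'$'s colour pattern is symmetric. This justifies the replacement rules 3(a) (symmetric pattern gives an undirected edge/loop, which can indeed be covered in either orientation) and 3(b) (asymmetric pattern forces a directed edge/loop). The delicate case is 3(c): a maximal path ending with a semi-edge of colour $\alpha$ in $H$ may be covered in $G$ either by an identically shaped path ending with a semi-edge of colour $\alpha$, or by a ``folded'' symmetric path of pattern $\pi\alpha\pi^{-1}$ whose middle edge is a normal edge between two mirror copies; the identification of the two new colours in 3(c) exactly records this alternative, so an edge coloured $\pi\alpha$ in $H^r$ corresponds to either a semi-edge or a symmetric normal edge in $G^r$, matching precisely the possible preimages of a semi-edge in Definition~\ref{def:graph-cover}. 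In both Step~3 directions one then contracts (resp.\ expands) each such path to an edge (resp.\ the corresponding path) of the assigned colour to obtain the required covering projection.

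The main obstacle, as indicated, is the semi-edge case 3(c): one must check that treating $\pi\alpha$ and $\pi\alpha\pi^{-1}$ as the same colour does not create spurious covers. The point to verify is that if in $G^r$ an incident edge of colour $\pi\alpha$ is realized by a semi-edge, the corresponding preimage path in $G$ indeed ends in a semi-edge of colour $\alpha$, and if it is realized by a normal edge the preimage is a symmetric path whose midpoint is the covering image-fiber of the semi-edge; both possibilities are exactly what Definition~\ref{def:graph-cover} allows for the preimage of a semi-edge. Combined with the corresponding lifting in the backward direction, this yields Observation~\ref{obs:reduction}.
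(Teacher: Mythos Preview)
The paper does not actually prove Observation~\ref{obs:reduction}; it is stated as a known fact with citations to~\cite{n:KPT97a} (for graphs without semi-edges) and~\cite{n:BFJKS24-DAM} (for the extension to semi-edges). Your proposal therefore goes beyond what the paper does, supplying an argument where the paper only cites.

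Your sketch captures the correct high-level structure: treat Step~1 and Step~3 separately, use the uniqueness of tree covers for Step~1, and use the rigidity of degree-2 paths for Step~3, with the semi-edge case~3(c) handled by the colour identification $\pi\alpha \leftrightarrow \pi\alpha\pi^{-1}$. The crucial insight you identify for~3(c) --- that the preimage in $G$ of a path in $H$ ending at a semi-edge is either another such one-sided path or a doubled symmetric path with a normal edge in the middle, exactly mirroring the two allowed preimages of a semi-edge in Definition~\ref{def:graph-cover} --- is precisely the point of the construction. One place where your sketch is a bit thin is the claim that the core of $G$ maps onto the core of $H$ under any covering projection (and conversely that pendant-tree vertices map to pendant-tree vertices); this requires the lifting property of cycles and semi-edges under covers and is not quite immediate from ``trees cover only themselves''. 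Similarly, ``simultaneously'' in the statement means that the same dictionary from isomorphism types (resp.\ colour patterns) to new colours is used for both $G$ and $H$, which you use implicitly but do not spell out. These are routine to fill in, and there is no genuine gap in your approach.
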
  

Finally, for a subset $W\subseteq V(G)$, we denote by $G[W]$ the subgraph induced by $W$. If $\alpha$ is an edge-colour, then $G^{\alpha}$ denotes the spanning subgraph of $G$ containing exactly the edges of colour $\alpha$.  

\subsection{Our results}\label{subsec:ourresults}

In order to describe the results, we introduce the formal notation of certain small graphs. We denote by

\begin{itemize}
\item $F(b,c)$ the one-vertex graph with $b$ semi-edges and $c$ loops;
\item $FD(c)$ the one-vertex graph with $c$ directed loops;
\item $W(k,m,\ell,p,q)$ the two-vertex graph with $\ell$ parallel undirected edges joining its two vertices and with $k$ ($q$) semi-edges and $m$ ($p$) undirected loops incident with one (the other one, respectively) of its vertices;
\item $WD(m,\ell,m)$ the directed two-vertex graph with $m$ directed loops incident with each of its vertices, the two vertices being connected by $\ell $ directed edges in each direction;
\item $FF(c)$ the two-vertex graph connected by $c$ parallel undirected edges, with the two vertices being distinguishable to belong to different blocks;
\item $FW(b)$ the three-vertex graph with bundles of $b$ parallel edges connecting one vertex to each of the remaining two; and
\item $WW(b,c)$ the graph on four vertices obtained from a 4-cycle by replacing the edges of a perfect matching by bundles of $b$ parallel edges, and replacing the edges of the complementary matching by bundles of $c$ parallel edges,  the two independent sets of size 2  belonging to different blocks.  
\end{itemize}

Edges of all of these graphs are uncoloured (or, equivalently, monochromatic). We shall only consider $W$ graphs having $k+2m=2p+q$. See the illustration in Figure~\ref{fig:graphdef} for the graphs defined here.

\begin{figure}
\centering
\includegraphics[width=\textwidth]{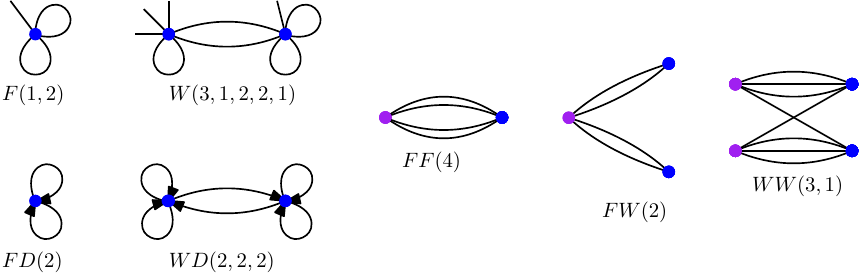}
\caption{Examples of the considered small graphs.}
\label{fig:graphdef}
\end{figure} 

\begin{definition}\label{def:key-graphs}
A regular monochromatic uniblock graph with at most two vertices is called 
\begin{itemize}
\item {\em harmless} if it is isomorphic to  $F(b,0)$, $b\le 2$, $F(1,c)$, $F(0,c)$, $FD(c)$, $W(2,0,0,0,2)$, $W(2,0,0,1,0)$, $W(0,c,0,c,0)$, $W(1,c,0,c,1)$, $W(0,0,c,0,0)$, $W(1,0,1,0,1)$, $WD(c,0,c)$, $WD(0,c,0)$, $WD(1,1,1)$ ($c$ being an arbitrary non-negative integer),
\item {\em harmful} if it is isomorphic to $F(b,c)$ such that $b\ge 2$ and $b+c\ge 3$, or to $W(k,m,\ell,p,q)$ such that $\ell\ge 1$ and $k+2m+\ell=q+2p+\ell\ge 3$, or to the disjoint union of $F(b,c)$ and $F(b',c')$ such that at least one of them is harmful, or to $WD(c,b,c)$ such that $b\ge 1, c\ge 1$ and $b+c\ge 3$. 
\end{itemize} 
A monochromatic \interbg\ is called
\begin{itemize}
\item {\em harmless} if it is isomorphic to  $FF(c)$ or $WW(0,c)$ (with $c$ being an arbitrary non-negative integer), or to $FW(0)$, $FW(1)$, or $WW(1,1)$,
\item {\em dangerous} if it is isomorphic to $FW(2)$, and 
\item {\em harmful} if it is isomorphic to $FW(c)$ for $c\ge 3$, or to $WW(b,c)$ such that $b\ge 1$, $c\ge 1$ and $b+c\ge 3$.
\end{itemize}
\end{definition}

The maximal harmless monochromatic uniblock and interblock graphs are depicted in Figure~\ref{fig:polycases}.

Note that under the assumption that each degree partition equivalence class has size at most two, every monochromatic \ubg\ as well as every monochromatic \interbg\ falls in exactly one of the above described categories. The choice of the terminology is explained by the following theorem.

\begin{theorem}\label{thm:main} 
Suppose all blocks of a connected graph $H$ have sizes at most 2. Then the following statements hold true:
\begin{enumerate}
\item If all monochromatic uniblock and interblock graphs of $H$ are harmless, then the {\sc $H$-Cover} problem is solvable in polynomial time (for arbitrary input graphs).
\item If at least one of the monochromatic uniblock or interblock graphs of $H$ is harmful, then the {\sc $H$-Cover} problem is NP-complete  even for simple input graphs. 
\item If the minimum degree of $H$ is greater than 2 and $H$ contains a dangerous monochromatic \interbg, then the {\sc $H$-Cover} problem is NP-complete even for simple input graphs.  
\end{enumerate}
\end{theorem}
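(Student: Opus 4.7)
The plan is to start by invoking the \dar{} (Definition~\ref{def:reduction}, Observation~\ref{obs:reduction}): after it, $H$ is either a path, a cycle, or has minimum degree greater than $2$. The first two cases are handled directly (only isomorphic copies cover them), so I may assume the minimum degree of $H$ is at least $3$. Since every block of $H$ has size at most $2$, each vertex $v$ in a candidate cover $G$ whose image lies in a size-$2$ block $\{x,y\}$ has only a binary choice $f_V(v)\in\{x,y\}$, while vertices mapping to singleton blocks have their image fixed. I introduce a Boolean variable for each such binary choice and reformulate \textsc{$H$-Cover} as a satisfiability problem whose constraints come from the monochromatic \ubg{}s and \interbg{}s of $H$, which by Proposition~\ref{prop:degpart} must be covered on the preimage sets of their respective blocks.

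For part~(1), the algorithm is constructed case-by-case across the list of harmless graphs. Each harmless \mbg{} induces constraints of one of two tractable types: (i) $2$-SAT clauses relating pairs of Boolean variables, arising from the harmless interblock cases $FW(0), FW(1), WW(0,c), WW(1,1)$ and from harmless uniblock cases containing semi-edges such as $W(2,0,0,0,2)$, $W(2,0,0,1,0)$, $W(1,c,0,c,1)$, and $W(1,0,1,0,1)$; or (ii) perfect-matching constraints on the preimage of a single block, arising from the loop-heavy harmless graphs $F(0,c)$, $F(1,c)$, $FD(c)$, $W(0,c,0,c,0)$, $WD(c,0,c)$, $WD(0,c,0)$, $WD(1,1,1)$, and the interblock $FF(c)$. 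The key lemma to prove is that constraints of the two types act essentially independently, so that a two-phase algorithm (first resolving the perfect matchings inside each block, then solving the derived $2$-SAT instance) succeeds whenever a cover exists. This extends the algorithm of~\cite{n:KPT94} from connected simple undirected $H$ to the present broader class, with the matching layer accounting for the new loop/semi-edge structure absent in~\cite{n:KPT94}.

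For parts~(2) and~(3), I proceed by a case analysis of the harmful and dangerous graphs, reducing in each case from a known NP-hard problem (the natural candidates are $3$-SAT, Monotone NAE-SAT, Hypergraph $2$-Colouring, or $3$-Edge Colourability of cubic graphs). For a harmful uniblock $F(b,c)$ with $b\ge 2$ and $b+c\ge 3$, the simultaneous presence of at least two semi-edges together with additional incidence (either more semi-edges or a loop) gives each preimage vertex a non-trivial choice that encodes a Boolean literal; these literals are then combined via gadgets built from harmless surrounding structure. Analogous variable/clause gadgets handle harmful $W(k,m,\ell,p,q)$ with $\ell\ge 1$, the harmful $WD$ case, and the interblock cases $FW(c)$ for $c\ge 3$ and $WW(b,c)$ for $b,c\ge 1, b+c\ge 3$, where the coexistence of two sufficiently large bundles lets arbitrary constraint-satisfaction instances be encoded. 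For part~(3) the dangerous $FW(2)$ is used as a propagation gadget: alone it admits trivial covers, but the hypothesis of minimum degree greater than $2$ forces extra incident edges at the involved blocks, and these extra edges together with $FW(2)$ realise the backbone of a SAT-style encoding. The main obstacle throughout parts~(2)--(3) is producing \emph{simple} input graphs: the most natural covering-theoretic gadgets live in the world of multigraphs and must be replaced by simple local substitutions, built from harmless structure and analysed carefully to avoid introducing spurious covers, in line with the Strong Dichotomy Conjecture.
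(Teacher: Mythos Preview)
Your outline for Part~(1) is essentially what the paper does: introduce Boolean variables for vertices in doublet blocks, dispose of the semi-edge and loop cases by perfect-matching checks, and encode the remaining constraints as 2-SAT. The paper does not invoke the \dar{} for Part~(1) --- it works directly on $H$ --- but that is a minor organisational difference.

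For Parts~(2) and~(3), however, your proposal has two genuine gaps.

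First, you are missing the \emph{garbage collection} step. Proving that {\sc $H'$-Cover} is NP-complete for a harmful or dangerous monochromatic block subgraph $H'$ of $H$ does not by itself imply that {\sc $H$-Cover} is NP-complete: a simple input to the former must be extended to a simple input to the latter that covers all of $H$, not just $H'$. The paper does this via a dedicated reduction (Proposition~\ref{prop:balanced}) that takes $2m$ copies of a candidate cover of $H'$ and wires them together with perfect matchings so that the remaining edges of $H$ are covered automatically. This construction relies on a swap lemma (Lemma~\ref{lem:swap}) showing that the companion vertex mapping $f'$ (which interchanges the two vertices of every doublet block) extends to a covering projection whenever the block graph is \emph{balanced}; the one non-balanced case $E_{(2,1)}$ is argued separately. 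Your phrase ``gadgets built from harmless surrounding structure'' gestures toward this but does not supply the mechanism.

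Second, your treatment of Part~(3) is too vague to go through. Calling $FW(2)$ a ``propagation gadget'' and saying that the extra edges forced by $\delta(H)>2$ ``realise the backbone of a SAT-style encoding'' skips the actual content. The paper first performs a structural case analysis (Proposition~\ref{prop:NP-interblock-uneasy}) showing that if $H$ has no harmful block graph, contains $FW(2)$, and has minimum degree greater than~$2$, then $H$ must contain a block graph reducible to one of roughly thirty explicit small graphs ($A_1$, $A_2$, $A'_2$, $B_k$, $C_k$, \ldots, $N'$; see Figure~\ref{fig:FW(2)-caseanalysis}). Each of these is then shown NP-complete individually by a reduction from {\sc 2-in-4-SAT} built around a common ``limping tripod'' gadget (Proposition~\ref{prop:reducedFW(2)NPc}), and a depriming lemma (Lemma~\ref{lem:depriming}) reduces each primed variant to its unprimed counterpart. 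Without this enumeration you have no concrete gadgets, and in particular no argument that the resulting input graphs can be kept \emph{simple}, which is the crux of Part~(3). Your suggested source problems (3-SAT, NAE-SAT, 3-edge-colouring) are also not the ones the paper uses; the reductions are from {\sc $c$-in-$2c$-SAT}, $(b,c)$-colouring of bipartite graphs, and {\sc 2-in-4-SAT}.
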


Observe that Theorem~\ref{thm:main} implies that {\sc $H$-Cover} is polynomial-time solvable if and only if every monochromatic \ubg\ defines a polynomial-time solvable instance and the monochromatic \interbg s are such that either each vertex has at most one neighbour, or each vertex has degree at most two. For the \interbg s, this is also very close to saying that each monochromatic \interbg\ itself defines a polynomial-time solvable instance, but not quite. The one and only exception is the graph $FW(2)$. Indeed, {\sc $FW(2)$-Cover} is polynomial-time solvable (since it reduces to $W(0,0,1,0,0)$-Cover), but with the additional condition that all vertices have degrees greater than 2, the presence of $FW(2)$ in $H$ leads to NP-completeness of {\sc $H$-Cover} (this is shown in detail in Section~\ref{sec:NPc}).

\section{Proof of Theorem~\ref{thm:main} --- polynomial cases}\label{sec:poly}

\begin{figure}
\centering
\includegraphics[width=\textwidth]{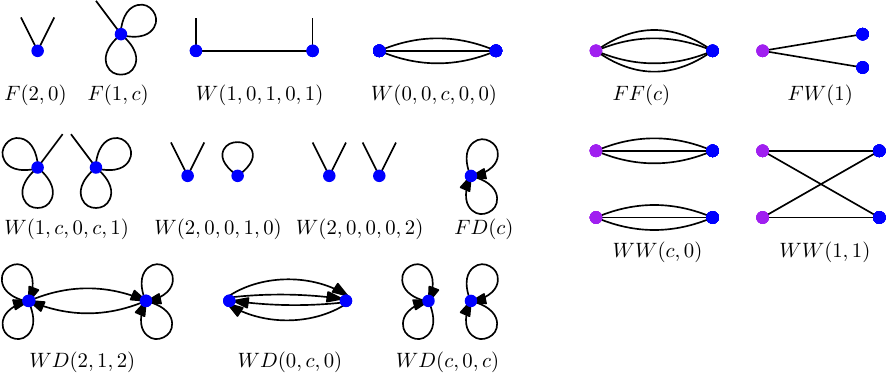}
\caption{The maximal harmless monochromatic uniblock (left) and interblock (right) graphs ($c$ is an arbitrary non-negative integer).}
\label{fig:polycases}
\end{figure} 

In this section we present the polynomial algorithm that proves Part 1 of Theorem~\ref{thm:main}. The algorithm clearly runs in polynomial time. We first describe an informal overview.

\medskip\noindent
{\bf Algorithm}
\begin{enumerate}
  \item Compute the degree partitions of $G$ (the input graph) and $H$ (the target graph). Reorder the equivalence classes $W_i$ of the degree partition of $H$ so that $W_1,\ldots,W_s$ are singletons and $W_{s+1},\ldots, W_k$ contain two vertices each, and reorder the degree partition equivalence classes $V_i$ of the input graph $G$ accordingly. Denote further, for every $i=1,\ldots,s$, by $a_i$ the vertex in $W_i$, and, for every $i=s+1,\ldots,k$, by $b_i,c_i$ the vertices of $W_i$.
  \item Check that the degree refinement matrices of $G$ and $H$ are indeed the same.
  \item Decide if the edges within $G[V_i]$ can be mapped onto the edges of $H[W_i]$ to form a covering projection, for each $i=1,2,\ldots,s$. (This step amounts to checking degrees and the numbers of semi-edges incident with the vertices, as well as checking that monochromatic  subgraphs contain perfect matchings in case of semi-edges in the target graph.)
  \item Preprocess the two-vertex equivalence classes $W_i$, $i=s+1,\ldots,k$ when $H[W_i]$ contains semi-edges (this may impose conditions on some vertices of $V_i$, whether they can map on $b_i$ or $c_i$).
  \item Using {\sc 2-Sat}, find a degree-obedient vertex mapping from $V_i$ onto $W_i$ for each $i=s+1,\ldots,k$, which fulfills the conditions observed in Step~4. (For every vertex $u\in V_i$, introduce a variable $x_u$ with the interpretation that $x_u$ is {\sf true} if $u$ is mapped onto $b_i$ and it is {\sf false} when $x_u$ is mapped onto $c_i$. The harmless block graphs are such that either all neighbours of a vertex $u$ must be mapped onto the same vertex, and thus the value of the corresponding variables are all the same (e.g., for  $WW(0,c)$), or $u$ has exactly two neighbours which should map onto different vertices (e.g., for $WW(1,1)$) meaning that the corresponding variables must get opposite values. All the situations that arise from harmless block graphs can be described by clauses of size 2.)
  \item Complete the covering projection by defining the mapping on edges in case a degree-obedient vertex mapping was found in Step 5, or conclude that $G$ does not cover $H$ otherwise. (The existence and polynomial time constructability of covering projections from degree-obedient vertex mappings for such instances have been proven in~\cite{n:BFHJK21-MFCS}.)
\end{enumerate}

We now describe the steps of the algorithm in detail. Together with the description we provide arguments for its correctness.   

\smallskip\noindent
{\em Steps 1 and 2. Checking the degree refinement matrices.} The degree partitions and the corresponding degree refinement matrices can be constructed in polynomial time. And having the same degree refinement matrices is a necessary condition for $G$ to cover $H$.

\smallskip\noindent
{\em Step 3. Checking the singleton equivalence classes.} For all $i=1,2,\ldots,s$ and for every colour $\alpha$ that appears on some of the edges of $H[W_i]$, do the following check. (Note that for each $u\in V_i$, $1\le i\le s$, the image of $u$ under a covering projection $f$ is uniquely defined, it must be $f(u)=a_i$.)

\smallskip\noindent
{\em Subcase 3A.} If $H[W_i]^{\alpha}$ has at least 2 semi-edges, it must be isomorphic to $F(2,0)$ and it is covered by $G[V_i]^{\alpha}$ if and only if $G[W_i]^{\alpha}$ is a disjoint union of cycles of even length and open paths. Output ``G does not cover H'' if this is not the case, and proceed otherwise.

\smallskip\noindent
{\em Subcase 3B.} Suppose $H[W_i]^{\alpha}$ has exactly 1 semi-edge (i.e., it is isomorphic to $F(1,c)$ for some $c$). If $G[V_i]^{\alpha}$ has a vertex with two or more semi-edges, output ``G does not cover H''. Otherwise, let $V_i^{\alpha}$ be the vertices of $V_i$ incident to a semi-edge of colour $\alpha$. Then $H[W_i]^{\alpha}$ is covered by $G[V_i]^{\alpha}$ if and only if $G[V_i\setminus V_i^{\alpha}]^{\alpha}$ has a perfect matching. This can be checked in polynomial time. Output ``G does not cover H'' if $G[V_i\setminus V_i^{\alpha}]^{\alpha}$ does not have a perfect matching, and proceed otherwise.

\smallskip\noindent
{\em Subcase 3C.} If $H[W_i]^{\alpha}$ has no semi-edges (i.e., it is isomorphic to $F(0,c)$ or to $FD(c)$ for some $c$), output ``G does not cover H'' if $G[V_i]^{\alpha}$ has at least one semi-edge, and  proceed  otherwise (in both cases $G[V_i]^{\alpha}$ 
covers $H[W_i]^{\alpha}$ by Lemma~\ref{lem:loops}).

\smallskip\noindent
{\em Step 4. Preprocessing the doublets.} 
For each $i=s+1,\ldots,k$ and for every colour $\alpha$ that appears on some of the edges of $H[W_i]$, do the following. Note that now for each $u\in V_i$, there are two possibilities for the image of $u$ under a covering projection $f$, namely $f(u)=b_i$ or $f(u)=c_i$. For each such a vertex $u$ we introduce a Boolean variable $x_u$ with the intended truth valuation $\varphi(x_u)={\sf true}$ iff $f(u)=b_i$. 

\smallskip\noindent
{\em Subcase 4A.} Suppose each vertex of  $H[W_i]^{\alpha}$ is incident with two semi-edges, i.e., this monochromatic block graph is isomorphic to $W(2,0,0,0,2)$. Then the components of $G[V_i]^{\alpha}$ are cycles and open paths. The vertices of each component map onto the same vertex from $W_i$, and the edges can map onto the edges of $F(2,0)$ if and only if their images alternate at every vertex of the component. Thus open paths are always fine, but odd cycles do not allow a covering projection onto $F(2,0)$. Hence output ``G does not cover H'' if at least one component of $G[V_i]^{\alpha}$ is an odd cycle, and proceed otherwise.

\smallskip\noindent
{\em Subcase 4B.} Suppose $H[W_i]^{\alpha}$ is isomorphic to $W(2,0,0,1,0)$, i.e., this monochromatic block graph is the disjoint union of $F(2,0)$ and $F(0,1)$. Suppose the two semi-edges are incident with $b_i$ and the loop with $c_i$ (the other case is symmetric). The components of $G[V_i]^{\alpha}$ are cycles and open paths. The vertices of a component all map to the same vertex in the intended covering projection $f$. Thus the vertices of open paths must all map onto $b_i$ and the vertices of odd cycles must map onto $c_i$. Hence we set $\varphi(x_u)={\sf true}$ if $u$ belongs to an open path in $G[V_i]^{\alpha}$, and we set $\varphi(x_u)={\sf false}$ if $u$ belongs to an odd cycle in $G[V_i]^{\alpha}$.

\smallskip\noindent
{\em Subcase 4C.} Suppose the vertices of $H[W_i]$ are incident with one semi-edge each, and this monochromatic block graph is disconnected, i.e., it is isomorphic to $W(1,c,0,c,1)$ for some $c\ge 0$. 
If $G[V_i]^{\alpha}$ has a vertex with two or more semi-edges, output ``G does not cover H''. Otherwise, let $V_i^{\alpha}$ be the vertices of $V_i$ incident to a semi-edge of colour $\alpha$. Then $H[W_i]^{\alpha}$ can only be covered by $G[V_i]^{\alpha}$  if $G[V_i\setminus V_i^{\alpha}]^{\alpha}$ has a perfect matching. This can be checked in polynomial time. Output ``G does not cover H'' if $G[V_i\setminus V_i^{\alpha}]^{\alpha}$ does not have a perfect matching, and proceed otherwise.

\smallskip\noindent
{\em Subcase 4D.} Suppose the vertices of $H[W_i]$ are incident with one semi-edge each, and this monochromatic block graph is connected, i.e., it is isomorphic to $W(1,0,1,0,1)$. If $G[V_i]^{\alpha}$ has a vertex with two or more semi-edges, output ``G does not cover H'', and proceed otherwise.

\smallskip\noindent
{\em Step 5. Setting up the 2-SAT formula.} Since the variables $x_u$ are created for vertices $u$ that must be mapped onto vertices of the doublet equivalence classes of $H$, only the monochromatic block graphs of the doublets, and interblock graphs that are induced by at least one doublet block, will come into play. First we consider the monochromatic block graphs. For every $i=s+1,\ldots,k$ and every colour $\alpha$ that is used on some edges of $H[W_i]$, do the following.

\smallskip\noindent
{\em Subcase 5A.} Suppose the monochromatic block graph $H[W_i]^{\alpha}$ is disconnected, i.e., it is isomorphic to $W(2,0,0,0,2)$, $W(2,0,0,1,0)$, $W(1,c,0,c,1)$, $W(0,c,0,c,0)$, or $WD(c,0,c)$ for some $c\ge 0$. Then any two adjacent vertices, say $u$ and $v$, of $G[V_i]^{\alpha}$ must be mapped onto the same vertex of $W_i$ by the intended covering projection $f$, and thus we introduce a 2-clause $$(\varphi(x_u)\Leftrightarrow\varphi(x_v)).$$

\smallskip\noindent
{\em Subcase 5B.} Suppose the monochromatic block graph $H[W_i]^{\alpha}$ is connected and bipartite, i.e., it is isomorphic to $W(0,0,c,0,0)$ or $WD(0,c,0)$ for some $c\ge 1$. Then any two adjacent vertices, say $u$ and $v$, of $G[V_i]^{\alpha}$ must be mapped onto different vertices of $W_i$ by the intended covering projection $f$, and thus we introduce a 2-clause $$(\varphi(x_u)\Leftrightarrow\neg\varphi(x_v)).$$

\smallskip\noindent
{\em Subcase 5C.} Suppose the monochromatic block graph $H[W_i]^{\alpha}$ is isomorphic to $W(1,0,1,0,1)$. We have already ruled out the case that $G[V_i]^{\alpha}$ contains a vertex incident with two semi-edges, and thus every vertex of $V_i$ has either one or two neighbours in $G[V_i]^{\alpha}$. If $u\in V_i$ has exactly one neighbour, say $v$, in $G[V_i]^{\alpha}$, it is incident to a semi-edge which must map onto a semi-edge of $W(1,0,1,0,1)\simeq H[W_i]^{\alpha}$, and the edge $uv$ must map onto the ordinary edge of $W(1,0,1,0,1)\simeq H[W_i]^{\alpha}$. Thus $u$ and $v$ are mapped onto different vertices of $W_i$, and this justifies introducing a 2-clause $$(\varphi(x_u)\Leftrightarrow\neg\varphi(x_v)).$$ If $u$ is adjacent to two vertices, say $v$ and $w$, in $G[V_i]^{\alpha}$, then one of the edges $uv,uw$ must map onto the ordinary edge of $W(1,0,1,0,1)\simeq H[W_i]^{\alpha}$, while the other one must map onto a semi-edge. Hence one of the vertices $v,w$ is mapped onto the same vertex as $u$, and the other one on the other one.  This justifies introducing a 2-clause $$(\varphi(x_w)\Leftrightarrow\neg\varphi(x_v)).$$

\smallskip\noindent
{\em Subcase 5D.} Suppose $\alpha$ is a directed colour and the monochromatic block graph $H[W_i]^{\alpha}$ is isomorphic to $WD(1,1,1)$. Then every vertex $u$ of $V_i$ has two outgoing neighbours, say $v,w$, and two incoming neighbours, say $y,z$, in $G[V_i]^{\alpha}$. One of the outgoing edges $uv,uw$ is mapped onto a loop and the other one onto an ordinary edge in $H[V_i]^{\alpha}\simeq WD(1,1,1)$, and hence one of the vertices $v,w$ is mapped onto the same vertex as $u$, and the other one on the other one. Similarly for the incoming neighbours. This justifies introducing the 2-clauses
$$(\varphi(x_w)\Leftrightarrow\neg\varphi(x_v)) \ \ \textrm{and} \ \ (\varphi(x_z)\Leftrightarrow\neg\varphi(x_y)).$$

For the interblock graphs, we process all pairs $i\neq j$, $1\le i\le k, s+1\le j\le k$, and all colours $\alpha$ that are used on some edges of the interblock graph $H[W_i\cup W_j]^{\alpha}$:

\smallskip\noindent
{\em Subcase 5E.} Suppose $i\le s$, i.e., $W_i$ is a singleton and $H[W_i\cup W_j]^{\alpha}$ is isomorphic to $FW(1)$. Every vertex $u\in V_i$ has two neighbours, say $v$ and $w$, in $G[V_i\cup V_j]^{\alpha}$, and these neighbours must be mapped onto different vertices of $H[W_j]$.  This justifies introducing a 2-clause $$(\varphi(x_w)\Leftrightarrow\neg\varphi(x_v)).$$

\smallskip\noindent
{\em Subcase 5F.} Suppose $i> s$, i.e., $W_i$ is a doublet, and suppose that $H[W_i\cup W_j]^{\alpha}$ is connected, i.e., it is isomorphic to $WW(1,1)$. Every vertex $u\in V_i$ has two neighbours, say $v$ and $w$, in $G[V_i\cup V_j]^{\alpha}$, and these neighbours must be mapped onto different vertices of $H[W_j]$. But also every vertex $v\in V_j$ has two neighbours, say $u$ and $y$, in $V_i$, and these must be mapped onto different vertices of $W_i$. This justifies introducing two 2-clauses
$$(\varphi(x_w)\Leftrightarrow\neg\varphi(x_v)) \ \ \textrm{and} \ \
(\varphi(x_u)\Leftrightarrow\neg\varphi(x_y)).$$

{\em Subcase 5G.} Suppose $i> s$, i.e., $W_i$ is a doublet, and suppose that $H[W_i\cup W_j]^{\alpha}$ is disconnected, i.e., it is isomorphic to $WW(c,0)$ for some $c>0$. Then the image of a vertex $u\in V_i$ is uniquely determined by the mapping of any of its neighbours in $G[V_i\cup V_j]^{\alpha}$. For any two adjacent vertices $u\in V_i, v\in V_j$, we introduce the 2-clause 
$$(\varphi(x_u)\Leftrightarrow\varphi(x_v))$$ if 
$H[W_i\cup W_j]^{\alpha}$ is such that $b_i$ is adjacent to $b_j$ (and $c_i$ to $c_j$), while we introduce the 2-clause
$$(\varphi(x_u)\Leftrightarrow\neg\varphi(x_v))$$
otherwise (i.e., when $b_i$ is adjacent to $c_j$ and $c_i$ to $b_j$).

\smallskip
{\em Step 6. Creating the covering projection.} We solve the 2-SAT problem of satisfiability of the formula whose construction we described in Step~3, which can be done in polynomial time. We output ``G does not cover H'' if this formula is not satisfiable, and we output ``G does cover H'' otherwise. If we want to construct a covering projection in the latter case, we take a satisfying truth valuation $\varphi$ and define the vertex mapping by setting

$$f(u)=\left\lbrace 
\begin{array}{ll}
a_i & \quad\mbox{if $u\in V_i$ and $i\le s$}\\
b_i & \quad\mbox{if $u\in V_i$ and $i\ge s+1$ and $\varphi(x_u)={\sf true}$}\\
c_i & \quad\mbox{if $u\in V_i$ and $i\ge s+1$ and $\varphi(x_u)={\sf false}$}.
\end{array} \right.$$

It is easy to see that this is a degree-obedient vertex mapping from $G$ onto $H$. It remains to define the edge mapping. This is forced in the case of edges $uv\in E(G)$ such that there is only one edge connecting $f(u)$ and $f(v)$ in $H$, of the same colour, say $\alpha$, as the colour of $uv$ in $G$. If there are multiple edges connecting $f(u)$ and $f(v)$ in $H^{\alpha}$, then $G[f^{-1}(f(u))\cup f^{-1}(f(v))]^{\alpha}$ is a regular graph and its edges can be distributed so that they cover the multiple edge connecting $f(u)$ and $f(v)$. This follows from the following lemmas, which are just reformulations of the famous K\"{o}nig-Hall and Petersen theorems on factors of graphs. Lemma~\ref{lem:loops} applies to the case of $f(u)=f(v)$, while Lemma~\ref{lem:bars} applies to the case when $f(u)\neq f(v)$. (A directed graph is called $k$-in-regular if the in-degree of every vertex is $k$, it is called $k$-out-regular if the out-degree of every vertex is $k$, and it is called $k$-in-$k$-out-regular if it is $k$-in-regular and $k$-out-regular.)

\begin{lemma}\label{lem:bars}
Every $k$-regular undirected bipartite graph covers $FF(k)$, and a covering projection can be found in polynomial time.
\end{lemma}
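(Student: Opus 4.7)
The plan is to reduce the statement to the classical fact that every $k$-regular bipartite graph decomposes into $k$ edge-disjoint perfect matchings (König's edge-colouring theorem).

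Let $G$ be a $k$-regular undirected bipartite graph with bipartition $(A,B)$, and denote the two vertices of $FF(k)$ by $x\in W_1$ and $y\in W_2$, where $W_1,W_2$ are the two (distinct) blocks. Since the two endpoints of $FF(k)$ belong to different blocks and $|A|=|B|$ (as $G$ is $k$-regular bipartite), I would first fix the vertex mapping $f_V$ by sending all of $A$ to $x$ and all of $B$ to $y$ (the two choices are symmetric). By Proposition~\ref{prop:degpart} this is the only possibility up to swapping the roles of $A$ and $B$.

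Next, I would observe that the covering condition for $FF(k)$ reduces to distributing the edges of $G$ among the $k$ parallel edges of $FF(k)$ so that, for each of these $k$ edges of $FF(k)$, its preimage is a perfect matching between $A$ and $B$. In other words, I must exhibit a proper edge $k$-colouring of $G$, or equivalently a decomposition of $E(G)$ into $k$ perfect matchings. For this, I would appeal to the standard corollary of the König--Hall theorem: any $k$-regular bipartite graph $G$ has a perfect matching $M_1$, and $G-M_1$ is a $(k-1)$-regular bipartite graph, so induction (or iteration) produces the required partition $E(G)=M_1\cup M_2\cup\cdots\cup M_k$.

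The edge mapping $f_E$ is then defined by sending all edges of $M_i$ onto the $i$-th parallel edge of $FF(k)$; together with $f_V$ this is manifestly a covering projection. For the polynomial running time, each perfect matching in a bipartite graph can be found in polynomial time (e.g., via Hopcroft--Karp in time $O(|E|\sqrt{|V|})$), and the iteration is repeated at most $k\le |V|$ times, giving an overall polynomial algorithm. I do not foresee a genuine obstacle: the lemma is essentially a restatement of König's edge-colouring theorem, and the only care needed is to check that the vertex mapping is indeed forced by the block structure of $FF(k)$, which it is.
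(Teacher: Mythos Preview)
Your proposal is correct and follows exactly the approach the paper indicates: the paper does not spell out a proof of this lemma but explicitly states that it is ``just a reformulation of the famous K\"onig--Hall theorem,'' which is precisely the decomposition into $k$ perfect matchings that you carry out. One minor remark: the appeal to Proposition~\ref{prop:degpart} to argue that the vertex mapping is forced is not quite needed (and not quite right as stated, since in an uncoloured $k$-regular bipartite graph the degree partition may be trivial); the relevant point is simply that the two vertices of $FF(k)$ lie in different blocks by definition, so the given bipartition $(A,B)$ of $G$ tells you where each side must go, and either assignment works.
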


\begin{lemma}\label{lem:loops}
Every $2k$-regular undirected graph with no semi-edges covers $F(k)$. Every $k$-in-$k$-out-regular directed graph covers $FD(k)$. Respective covering projections can be found in polynomial time.
\end{lemma}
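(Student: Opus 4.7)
The plan is to derive both parts from the classical K\"onig edge-colouring theorem for regular bipartite multigraphs, and to reduce the undirected case to the directed one via an Eulerian orientation --- which is essentially Petersen's proof of his 2-factorization theorem.

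For the directed statement, I would form the \emph{split bipartite multigraph} $B(G)$ with vertex set $V_+\cup V_-$ (two disjoint copies of $V(G)$) and one edge $\{u_+,v_-\}$ for every arc $uv$ of $G$. If $G$ is $k$-in-$k$-out-regular, then $B(G)$ is $k$-regular bipartite, so by K\"onig's theorem its edge set decomposes into $k$ perfect matchings $M_1,\ldots,M_k$, each computable in polynomial time (e.g., via the Hopcroft--Karp algorithm iterated $k$ times). Pulling $M_i$ back to $G$ yields a spanning set of arcs $A_i$ in which every vertex has exactly one outgoing and one incoming arc; thus $A_i$ is a disjoint union of directed cycles, where a directed loop counts as a cycle of length $1$ and a pair of oppositely directed parallel arcs counts as a cycle of length $2$. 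Sending all vertices of $G$ to the unique vertex of $FD(k)$ and every arc of $A_i$ to the $i$-th directed loop of $FD(k)$ meets the requirements of Definition~\ref{def:graph-cover}.

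For the undirected statement, it suffices to work component by component, so assume $G$ is connected. Since $G$ is $2k$-regular and has no semi-edges, it admits an Eulerian circuit; orient every edge in the direction of its traversal along such a circuit to obtain a digraph $G'$ that is $k$-in-$k$-out-regular. Applying the directed case to $G'$ produces spanning subgraphs $A_1,\ldots,A_k$, each a disjoint union of directed cycles. Forgetting orientations, each $A_i$ becomes an undirected spanning $2$-regular subgraph of $G$, i.e., a disjoint union of cycles in which a loop contributes a $1$-cycle and a pair of parallel edges contributes a $2$-cycle --- exactly the form required for the preimage of an undirected loop. Mapping every edge of $A_i$ to the $i$-th loop of $F(k)$ and all vertices of $G$ to its unique vertex yields the desired covering projection.

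The only points requiring care are the bookkeeping of loops and parallel edges in both constructions: a loop at $v$ in $G$ becomes a directed loop at $v$ in $G'$ and hence a $1$-cycle in whichever $A_i$ contains it, and parallel undirected edges are oriented independently by the Eulerian traversal and thus lie in the split bipartite graph as parallel edges on the same pair of colour classes. I do not expect a real obstacle here; the lemma is essentially a reformulation of K\"onig--Hall and Petersen, and since Eulerian circuits, bipartite edge colouring, and bipartite perfect matching are all polynomial-time computable, so is the resulting covering projection.
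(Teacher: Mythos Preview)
Your proposal is correct and matches the paper's approach: the paper does not actually prove Lemma~\ref{lem:loops} but merely states that both parts are ``reformulations of the famous K\"onig--Hall and Petersen theorems on factors of graphs.'' You have supplied exactly the standard arguments behind those citations --- the bipartite split plus K\"onig edge-colouring for the directed part, and Eulerian orientation reducing the undirected $2k$-regular case to the directed one (Petersen's route to $2$-factorization) --- so there is nothing to compare beyond noting that you wrote out what the paper left implicit.
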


As optimizing the running time of the algorithm is not our goal, we omit an explicit running time analysis. We just mention that all steps can be performed in polynomial time.

\section{Proof of Theorem~\ref{thm:main} --- NP-hard cases}\label{sec:NPc}

\subsection{Monochromatic uniblock graphs}

\begin{proposition}\label{prop:NP-block}
The following problems are NP-complete even for simple input graphs
\begin{itemize}
\item {\sc $F(b,c)$-Cover} for every $b\ge 2$ and $c\ge 0$ such that  $b+c\ge 3$,
\item {\sc $W(k,m,\ell,p,q)$-Cover} for $k,m,\ell,p,q\ge 0$ such that $k+2m=2p+q$, and either $\ell\ge 1$ and $k+2m+\ell\ge 3$, or $\ell=0$, $k+m\ge 3$ and $\min\{k,q\}\ge 2$, and
\item {\sc $WD(b,c,b)$-Cover} for every $b,c\ge 1$ such that $b+c\ge 3$.
\end{itemize}
\end{proposition}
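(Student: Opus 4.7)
The plan is to prove NP-completeness of each family by reduction from classical NP-hard problems, primarily $k$-edge-colouring of simple $k$-regular graphs. Throughout, NP membership is immediate, since a covering projection is a polynomial-size certificate and the covering conditions of Definition~\ref{def:graph-cover} are checkable in polynomial time.

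First I would handle $F(b,0)$-Cover for $b\ge 3$. On a simple input, every semi-edge of $F(b,0)$ pulls back to a perfect matching (no semi-edges are available in the input), so a cover is precisely a proper $b$-edge-colouring of a $b$-regular simple graph. NP-hardness then follows from the known hardness of $b$-edge-colouring for simple $b$-regular graphs (Holyer for $b=3$, Leven--Galil for $b\ge 4$). This base case is then lifted to $F(b,c)$-Cover with $b\ge 2$, $c\ge 1$, $b+c\ge 3$, where the input must be $(b+2c)$-regular and decompose into $b$ perfect matchings plus $c$ 2-factors. I would blow up each vertex of a source instance $G_0$ (of $F(b,0)$-Cover, or of $F(3,0)$-Cover when $b=2$) into a rigid simple gadget $X_v$ of bounded size that, in any covering projection onto $F(b,c)$, is forced to consume exactly $c$ of the incident edge-slots on the 2-factor preimages of the loops, so that the residual incidences at $G_0$'s edges realize a proper $b$-edge-colouring of $G_0$.

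For $W(k,m,\ell,p,q)$-Cover the argument splits. When $\ell\ge 1$ and $d:=k+2m+\ell=q+2p+\ell\ge 3$, I would reduce from $F(d,0)$-Cover by a bipartite-double construction: starting from a simple $d$-regular graph $G_0$, build $G$ on two disjoint copies of $V(G_0)$ whose edges are arranged so that $\ell$ of the $d$ incidences at every vertex are forced across the bipartition (covering the $\ell$ connecting edges of $W$), while the remaining incidences stay on one side and, via local gadgets, encode the semi-edge and loop preimages together with a hidden $d$-edge-colouring of $G_0$. When $\ell=0$, $k+m\ge 3$, and $\min\{k,q\}\ge 2$, the target is the disjoint union $F(k,m)\cup F(q,p)$; NP-hardness is inherited from $F(k,m)$-Cover, but the reduction must use a forcing attachment to prevent components of the constructed input from opting to cover $F(q,p)$ instead, and it must pad with canonical covers so that the equitability requirement of Definition~\ref{def:disconnected-cover} is satisfied.

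Finally, $WD(b,c,b)$-Cover with $b,c\ge 1$ and $b+c\ge 3$ is handled by an oriented analogue of the $W$-case reduction: the $c$ directed edges in each direction play the role of $\ell$ undirected parallel edges, and orientations are imposed consistently on the bipartite double, so one reduces either from $F(b+2c,0)$-Cover with orientations or directly from the already-proven $W(0,b,c,b,0)$-Cover. The main obstacle throughout is the design of the vertex gadgets: they must be simple (no parallel edges, loops, or semi-edges) and at the same time rigid enough that the only local covers of the target correspond to the intended colouring or matching, so that accidental covering structures cannot produce spurious yes-instances. A secondary technical subtlety, arising in the disconnected $\ell=0$ case, is honouring Definition~\ref{def:disconnected-cover}'s equitability requirement by padding with canonical covers of compatible size.
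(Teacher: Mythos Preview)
Your plan diverges substantially from the paper's proof, and in ways that leave real work undone.

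The paper does not prove the $F(b,c)$ and $W(k,m,\ell,p,q)$ cases from first principles at all: for $F(b,c)$ with $b\ge 2$, $b+c\ge 3$, and for $W$ with $\ell\ge 1$, $k+2m+\ell\ge 3$, it simply cites~\cite{n:BFHJK21-MFCS}, where NP-completeness for simple (even simple bipartite) inputs is already established; for the disconnected case $\ell=0$ it cites the general machinery of~\cite{n:BFJKS24-DAM} on covers of disconnected targets, which yields that {\sc $W(k,m,0,p,q)$-Cover} is NP-complete iff one of {\sc $F(k,m)$-Cover}, {\sc $F(q,p)$-Cover} is. So your edge-colouring-plus-gadget programme for these cases is re-deriving results that the paper imports wholesale. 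That is fine in principle, but you yourself flag the vertex gadgets as ``the main obstacle'' and do not supply them; without concrete constructions for the ``rigid simple gadget $X_v$'' that forces exactly $c$ loop-preimages while leaving $b$ matching slots, this part remains a sketch, not a proof.

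For {\sc $WD(b,c,b)$-Cover} the paper does give a fresh argument, and it is quite different from yours. Rather than an oriented bipartite double encoding an edge-colouring, the paper reduces from the vertex problem {\sc $(b,c)$-Colouring} on simple bipartite $(b+c)$-regular graphs (NP-complete by~\cite{n:BFHJK21-MFCS}): from $G$ with bipartition $A\cup B$ it builds $\overrightarrow{G}$ on $V(G)\cup\{\overline{u}:u\in V(G)\}$ with, for each edge $uv$ ($u\in A$, $v\in B$), the directed $4$-cycle $u\!\to\! v\!\to\!\overline{u}\!\to\!\overline{v}\!\to\! u$. Then $\overrightarrow{G}\to WD(b,c,b)$ iff $G$ admits a $(b,c)$-colouring, and the backward direction goes through because $\overrightarrow{G}$ is bipartite so a degree-obedient vertex map extends to a covering. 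This construction is gadget-free and short; your route via edge-colouring or via {\sc $W(0,b,c,b,0)$-Cover} would again require the unspecified vertex gadgets to control which arcs go to loops versus bars.
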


\begin{proof}
The NP-completeness of {\sc $F(b,c)$-Cover} for $b\ge 2$ and $b+c\ge 3$ for simple input graphs is proven in \cite{n:BFHJK21-MFCS}, and so is the NP-completeness of {\sc $W(k,m,\ell,p,q)$-Cover} for $\ell \ge 1$ and $k+2m+\ell\ge 3$, even for simple bipartite input graphs. 
The NP-completeness of {\sc $W(k,m,\ell,p,q)$-Cover} for $\ell=0$, that is for a disconnected target graph, follows from the results of \cite{n:BFJKS24-DAM} which analyze in detail the concept and results on covering disconnected graphs. 
Note that for $\ell=0$, $W(k,m,0,p,q)$ is the disjoint union of $F(k,m)$ and $F(q,p)$, and {\sc $W(k,m,0,p,q)$-Cover} is NP-complete if and only if at least one of the problems  {\sc $F(k,m)$-Cover} and {\sc $F(q,p)$-Cover} is NP-complete.

\begin{figure}
\centering
\includegraphics[width=\textwidth]{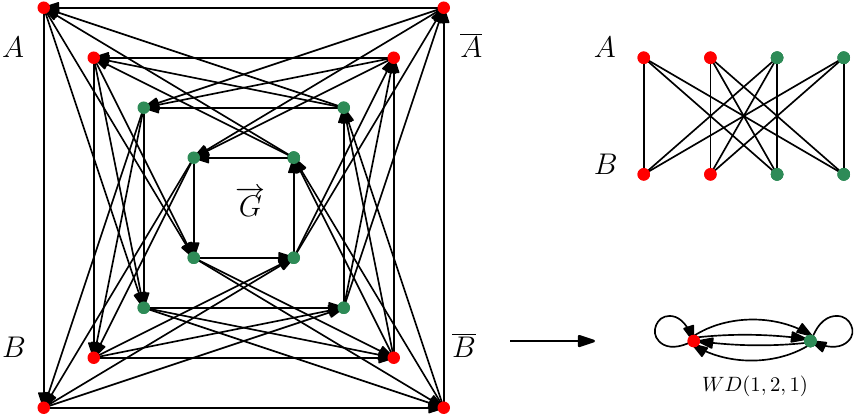}
\caption{Illustration to covering of $WD(b,c,b)$.}
\label{fig:WD-NPc}
\end{figure}

For directed graphs, the NP-completeness of {\sc $WD(b,c,b)$-Cover} for $b,c\ge 1$, $b+c\ge 3$ has already been proven in \cite{n:KPT97a}, but not for simple input graphs. However, we can argue as follows. Take a simple $(b+c)$-regular graph $G$ as input to the problem {\sc $(b,c)$-Colouring} which asks for a 2-colouring of the vertices of $G$ such that every vertex has $b$ neighbours of its own colour and $c$ neighbours of the other one. This problem is NP-complete even for bipartite graphs, as shown in \cite{n:BFHJK21-MFCS}, so assume that $V(G)=A\cup B$ are such that both $A$ and $B$ are independent sets. For every vertex $u\in V(G)$, introduce a new extra vertex $\overline{u}$, and set $\overline{A}=\{\overline{u}:u\in A\}$ and $\overline{B}=\{\overline{u}:u\in B\}$. Define a simple directed graph $\overrightarrow{G}$ by setting
$$V(\overrightarrow{G})=A\cup B\cup \overline{A}\cup \overline{B}$$ and
$$E(\overrightarrow{G})=\{uv,v\overline{u},\overline{u}\overline{v},\overline{v}u:uv\in E(G),u\in A, v\in B\}.$$
Then  $\overrightarrow{G}$ covers  $WD(b,c,b)$ if and only if $G$ allows a $(b,c)$-colouring. 

Suppose first that $\overrightarrow{G}$ covers  $WD(b,c,b)$ and let $f:\overrightarrow{G}\to WD(b,c,b)$ be a covering projection. Denote the vertices of $WD(b,c,b)$ by $r$ and $g$ and colour a vertex $x$ of $G$ {\sf green} if $f(x)=g$, and colour it {\sf red} if $f(x)=r$. Consider a vertex $u\in A$ coloured {\sf red}. Since $b$ of the $b+c$ directed edges that start in $u$ are mapped to the loops incident with $r$, $u$ has $b$ {\sf red} neighbours in $B$. The remaining $c$ outgoing edges are mapped onto the bar edges connecting $r$ and $g$, and hence $u$ has $c$ {\sf green} neighbours in $B$. Similarly for a {\sf red} vertex $v\in B$, we consider the $b+c$ incoming edges, $b$ of them being mapped onto the $b$ loops incident with $r$ and $c$ of them being mapped onto the bars connecting $r$ and $g$. Thus $v$ has $b$ {\sf red} and $c$ {\sf green} neighbours in $A$. An analogous argument applies to {\sf green} vertices, and we conclude that the the {\sf red-green} colouring of $A\cup B$ is a valid $(b,c)$-colouring of $G$.

On the other hand, suppose that $\varphi:A\cup B\to \{{\sf red, green}\}$ is a $(b,c)$-colouring of $G$. Define 
$$f(u)=f(\overline{u})=
\left\lbrace
\begin{array}{ll}
r & \mbox{ if }\varphi(u)={\sf red}\\
g  & \mbox{ if }\varphi(u)={\sf green}.
\end{array} 
\right.$$
We see that $f$ is a degree-obedient vertex mapping from $\overrightarrow{G}$ onto $WD(b,c,b)$, and since $\overrightarrow{G}$ is bipartite, the existence of a covering projection follows by arguments similar to those in the proof of Proposition~8 in \cite{n:BFHJK21-MFCS}. Cf. the illustrative example in Figure~\ref{fig:WD-NPc} for the reduction.
\end{proof}

\subsection{Monochromatic interblock graphs}

The NP-hardness for the harmful interblock graphs is claimed in the following proposition.

\begin{proposition}\label{prop:NP-interblock-easy}
The following problems are NP-complete even for simple input graphs
\begin{enumerate}
\item {\sc $FW(c)$-Cover} for every $c\ge 3$, and 
\item {\sc $WW(b,c)$-Cover} for every $b,c\ge 1$ such that  $b+c\ge 3$.
\end{enumerate}
\end{proposition}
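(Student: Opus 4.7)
The plan is to reduce from NP-complete variants of constrained $2$-colouring on simple regular bipartite graphs, exploiting Proposition~8 of \cite{n:BFHJK21-MFCS} which both supplies NP-completeness of $(b,c)$-Colouring on simple $(b+c)$-regular bipartite graphs for $b,c\geq 1$ and $b+c\geq 3$, and guarantees that every colour-preserving degree-obedient vertex mapping from a bipartite graph extends to a covering projection. Since $WW(b,c)$ and $FW(c)$ are both bipartite, every input admitting a cover must itself be bipartite, so this extension lemma is exactly the right tool to pass from a combinatorial colouring to a full covering projection.

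For $WW(b,c)$-Cover the reduction is essentially the identity. Given a simple $(b+c)$-regular bipartite instance $G=(A\cup B,E)$ of $(b,c)$-Colouring, I would earmark $A$ as the preimage of the block $\{b_1,c_1\}$ of $WW(b,c)$ and $B$ as the preimage of $\{b_2,c_2\}$. A valid $(b,c)$-colouring translates directly (via identification of colours with vertices in each block) into a degree-obedient vertex map: a ``red'' $A$-vertex has $b$ red $B$-neighbours (mapped to $b_2$) and $c$ blue $B$-neighbours (mapped to $c_2$), exactly matching the $b$ parallel $b_1b_2$-edges and the $c$ parallel $b_1c_2$-edges of $WW(b,c)$. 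The bipartite extension lemma then supplies the edge mapping, and the converse direction is immediate by reading the colouring off the cover.

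For $FW(c)$-Cover with $c\geq 3$ the situation is more delicate. Any cover $G'\to FW(c)$ forces $G'$ to be bipartite with a $2c$-regular side $V_1$ (the preimage of the singleton block) and a $c$-regular side $V_2$ (the preimage of the doublet), and the cover is equivalent to a 2-colouring of $V_2$ for which every $V_1$-vertex has precisely $c$ neighbours of each colour; that is, a \emph{balanced bipartite 2-colouring}. I would reduce from a suitable NP-hard $(b',c')$-Colouring instance on simple regular bipartite graphs and encode it into a simple $(2c,c)$-biregular bipartite gadget in which the centre vertices play the $V_1$-role and duplicated copies of the original vertices play the $V_2$-role, so that a balanced 2-colouring of $V_2$ exists if and only if the original colouring does. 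The bipartite extension lemma then completes the covering projection.

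The main obstacle is the gadget design for $FW(c)$: it must simultaneously preserve simplicity of the constructed input, realise the precise $(2c,c)$-biregular degree sequence, and maintain a faithful bijection between the source colouring and the balanced $V_2$-colouring. This is also precisely the reason the construction cannot go through for $c=2$, where $FW(2)$-Cover is polynomial-time solvable; the calibration of the gadget must therefore exploit the slack available only for $c\geq 3$. Once the gadget is in hand, both implications of the reduction are straightforward verifications of the counting condition followed by an invocation of the bipartite extension lemma.
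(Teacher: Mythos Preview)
Your treatment of $WW(b,c)$ is correct and coincides with the paper's: a simple $(b+c)$-regular bipartite graph admits a $(b,c)$-colouring if and only if it covers $WW(b,c)$, and this is exactly what the paper invokes (citing Theorem~18 of \cite{n:BFHJK21-MFCS}).

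For $FW(c)$, however, your proposal has a genuine gap. You correctly identify the structure of a cover of $FW(c)$ as a balanced $2$-colouring of the $c$-regular side of a $(2c,c)$-biregular bipartite graph, and you correctly note that the bipartite extension lemma handles the passage from a degree-obedient vertex map to a full covering projection. But the entire content of the reduction lies in the gadget you decline to specify. Saying that ``duplicated copies of the original vertices play the $V_2$-role'' and that the gadget ``must simultaneously preserve simplicity \ldots\ and maintain a faithful bijection'' is a wish list, not a construction; you even flag this yourself as ``the main obstacle''. Without an explicit gadget that forces all copies of a given vertex to receive the same colour while keeping the graph simple and the degrees exactly $(2c,c)$, there is no proof.

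The paper resolves this differently: it does not reduce from $(b',c')$-Colouring at all, but from $c$-in-$2c$-\textsc{Sat} with all-positive clauses and exactly $c$ occurrences per variable. Each clause becomes a $V_1$-vertex of degree $2c$; each occurrence $x\in s$ gives a connector vertex $w(x,s)\in V_2$. The variable gadget is completely explicit: for each occurrence, $c-1$ extra $V_1$-vertices $u_i^{x,s}$ share $2c-1$ common $V_2$-neighbours $v_i^{x,j}$, so any two connector vertices $w(x,s)$ and $w(x,t)$ are forced to the same colour by a pigeonhole argument on the shared neighbourhood. This pins down both directions of the reduction with no hand-waving. If you wish to pursue your $(b',c')$-Colouring route, you would need a comparably concrete ``copies-must-agree'' gadget; the paper's shared-neighbour trick is one way to build it, and it is precisely the missing ingredient in your outline.
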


\begin{proof}
\textit{1.} The NP-hardness of {\sc $FW(c)$-Cover} is proven in \cite{n:KTT16}, but only for multigraphs on input. To prove the stronger result, we reduce from {\sc $c$-in-$2c$-Satisfiability} of all-positive formulas with each variable occurring in exactly $c$ clauses. This problem is NP-complete for every fixed $c\ge 3$~\cite{kratochvil2003complexity}. Given a formula $\Phi$ with the set $C$ of clauses over a set $X$ of variables such that every clause contains exactly $2c$ variables, all positive, and every variable occurs in exactly $c$ clauses, the question is if the variables can be assigned values {\sf true} and {\sf false} so that every clause contains exactly $c$ variables evaluated to {\sf true}. We construct a graph $G_{\Phi}$ such that $G_{\Phi}$ covers $FW(c)$ if and only if $\Phi$ is $c$-in-$2c$-satisfiable. The graph contains one vertex $z(s)$ for every clause $s\in C$, and this vertex is adjacent to a connecting vertex $w(x,s)$ for every variable $x\in s$. Every variable $x\in X$ is represented by a variable gadget which contains $c$ connecting vertices $w(x,s), x\in s\in C$ and ensures that all of these map to the same vertex of $FW(c)$. More formally, we set
$$
\begin{array}{ll}
V(G_{\Phi})= & \{z(s):s\in C\}\cup\{w(x,s):x\in s\in C\}\\ 
\ & \cup \{u_i^{x,s}:x\in s\in C, i=1,2,\ldots,c-1\}\\
\ & \cup\{v_i^{x,j}:x\in X,i=1,2,\ldots,c-1,j=1,2,\ldots,2c-1\}, \\
E(G_{\Phi})= & \{z(s)w(x,s):x\in s\in C\}\\
\ & \cup \{w(x,s)u_i^{x,s}:x\in s\in C,i=1,2,\ldots,c-1\}\\
\ & \cup\{u_i^{x,s}v_i^{x,j}: x\in s\in C,i=1,2,\ldots,c-1,j=1,2,\ldots,2c-1 \}.
\end{array}
$$  

Let the two vertices of degree $c$ in $FW(c)$ be $r$ and $g$ again, and let the vertex of degree $2c$ be $p$. Suppose $f:G_{\Phi}\to FW(c)$ is a covering projection. The graph $G_{\Phi}$ contains $|C|+|X|\cdot c\cdot (c-1)$ vertices of degree $2c$, namely the vertices $z(s)$ and $u_i^{x,s}$. All of these must be mapped to the vertex $p$. Consider two clauses $s,t\in C$ that both contain a variable $x$. The vertices $u_1^{x,s}$ and $u_1^{x,t}$ have $2c-1$ common neighbours $v_1^{x,j}, j=1,2,\ldots,2c-1$, and since both of them must have $c$ neighbours mapped onto $r$ and $c$ neighbours mapped onto $g$, we see that their private neighbours $w(x,s)$ and $w(x,t)$ are mapped onto the same vertex. Hence a truth valuation $\varphi:X\to \{{\sf true,false}\}$ set by
$$
\varphi(x)=\left\lbrace
\begin{array}{ll}
{\sf true} & \mbox{ if } f(w(x,s))=r \mbox{ for every $s$ such that $x\in s$}, \\
{\sf false} & \mbox{ if } f(w(x,s))=g \mbox{ for every $s$ such that $x\in s$}
\end{array}
\right.
$$   
is correctly defined. Since every vertex $z(s)$ must see $c$ neighbours mapped onto $r$ and $c$ neighbours mapped onto $g$, every clause $s$ gets exactly $c$ variables evaluated to {\sf true}.

Let, on the other hand, $\varphi:X\to \{\sf true, false\}$ be a truth valuation of $\Phi$ such that every clause contains $c$ variables evaluated to {\sf true} and $c$ variables evaluated to {\sf false}. Define a vertex mapping $f:V(G_{\Phi}\to\{p,r,g\}$ by setting
$$
\begin{array}{ll}
f(z(s)) = & f(u_i^{x,s})=p \textrm{ for all } s\in C, x\in s, i=1,2,\ldots,c-1,\\ 
f(v_i^{x,1}) = & \ldots=f(v_i^{x,c})=r \mbox{ and } f(v_i^{x,c+1})=\ldots=f(v_i^{x,2c-1})=f(w(x,s))=g\\
\ & \textrm{ for all } i=1,2,\ldots,c-1, x\in X, x\in s\in C \textrm{ such that } \varphi(x)= \textrm{{\sf true}}, \\
f(v_i^{x,1}) = & \ldots=f(v_i^{x,c})=g \textrm{ and } f(v_i^{x,c+1})=\ldots=f(v_i^{x,2c-1})=f(w(x,s))=r\\
\ & \textrm{ for all } i=1,2,\ldots,c-1, x\in X, x\in s\in C \textrm{ such that } \varphi(x)= \textrm{{\sf true}}.
\end{array}
$$ 

\medskip\noindent
Every vertex of degree $2c$ is mapped onto $p$ (these are the $z(s)$ and $u_{i}^{x,s}$ vertices) and all other vertices are of degree $c$ and are mapped onto $r$ or $g$. Every vertex of degree $c$ has all neighbours mapped onto $p$. And every vertex mapped onto $p$ has exactly $c$ neighbours mapped onto $r$ and $c$ neighbours mapped onto $g$ (the $u_i^{x,s}$ vertices by definition of $f$, the $z(s)$ vertices because $\varphi$ is $c$-in-$2c$-satisfying). Thus $f$ is a degree-obedient vertex mapping, and since $G_{\Phi}$ is bipartite, it extends to a covering projection from $G_{\Phi}$ onto $FW(c)$.

\medskip\noindent
2. For the second case, note that a bipartite graph allows a $(b,c)$-colouring if and only it covers $WW(b,c)$, and thus the claim follows from Theorem~18 of \cite{n:BFHJK21-MFCS}. 
\end{proof}

Since they cover all harmful monochromatic block graphs, Propositions~\ref{prop:NP-block} and~\ref{prop:NP-interblock-easy} form the base cases of Part 2 of Theorem~\ref{thm:main}. 

\subsection{The dangerous interblock graph}

Part 3 of Theorem~\ref{thm:main} is, however, more complicated and requires a detailed analysis of the neighbourhood of a dangerous monochromatic interblock graph. This is the only case when a monochromatic block graph induces a polynomial-time solvable cover problem (indeed, {\sc $FW(2)$-Cover} is reducible to {\sc $F(0,2)$-Cover}, which is polynomial) but just the fact that all vertices have degrees greater than 2 makes the problem NP-complete.

\begin{proposition}\label{prop:NP-interblock-uneasy}
Let $H$ be a graph such that each block has at most 2 vertices, the minimum degree of $H$ is greater than 2, $H$ contains a monochromatic block graph isomorphic to $FW(2)$, but does not contain any harmful monochromatic uniblock or interblock graph.
Then $H$ contains a block graph which is reducible to one of the graphs from Figure~\ref{fig:FW(2)-caseanalysis}.
\end{proposition}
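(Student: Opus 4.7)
The plan is to leverage the minimum-degree hypothesis to force extra edges at the doublet of the $FW(2)$, and then perform a finite case analysis over the harmless types these extra edges can belong to. Let $\alpha$ be the colour of the given $FW(2)$, with $\{a\}$ its singleton block and $\{b,c\}$ its doublet; the $\alpha$-edges contribute exactly $2$ to $\deg_H(b)$ and $\deg_H(c)$. Because $\delta(H) > 2$, there must exist a colour $\beta \neq \alpha$ whose monochromatic subgraph is incident with $\{b,c\}$, yielding either a monochromatic uniblock graph on $\{b,c\}$ or a monochromatic interblock graph from $\{b,c\}$ to some further block of $H$. Since $H$ contains no harmful monochromatic uniblock or interblock graph, by Definition~\ref{def:key-graphs} this additional block graph $B$ falls into a finite list of harmless types.

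I would then take as the candidate block subgraph $H' := FW(2) \cup B$ and show, by running through the possibilities for $B$ one by one, that the degree-adjusting reduction $H'^r$ coincides with one of the graphs in Figure~\ref{fig:FW(2)-caseanalysis}. The uniblock options for $B$ are the harmless $W$-types ($W(2,0,0,0,2)$, $W(2,0,0,1,0)$, $W(0,c,0,c,0)$, $W(1,c,0,c,1)$, $W(0,0,c,0,0)$, $W(1,0,1,0,1)$) together with the directed types $WD(c,0,c)$, $WD(0,c,0)$, $WD(1,1,1)$; the interblock options are $FW(1)$, $WW(0,c)$, and $WW(1,1)$ connecting $\{b,c\}$ to a new singleton or doublet block (the trivial $FW(0)$ case contributes no degree and can be discarded). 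In most cases vertex $a$ has $\alpha$-degree $4$ and both $b,c$ already have degree at least $3$ in $H'$, so Step~3 of Definition~\ref{def:reduction} does not contract any of them and $H'^r$ is essentially $H'$ itself, which one reads off against the figure; in the remaining cases, such as those involving semi-edge loops or the families parametrised by $c$, the reduction collapses internal degree-$2$ structure along a monochromatic path into a single edge whose colour records the pattern, and one verifies case by case that the result again matches the list.

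The main obstacle is precisely this bookkeeping: each harmless type must be processed through the reduction with care, in particular honouring rule~(3c) of Definition~\ref{def:reduction} (which identifies asymmetric colour patterns with their symmetric doublings), and for the directed types ensuring that the directional information is preserved through the pattern labels. One must check that no two genuinely different augmentations accidentally reduce to the same graph in a way that obscures the argument, and conversely that every case genuinely lands inside the prescribed figure rather than outside. Once the table of reductions is complete, the proposition follows: $H' = FW(2) \cup B$ is a block graph of $H$, and its reduction lies in Figure~\ref{fig:FW(2)-caseanalysis}, setting the stage for the separate NP-hardness proofs for each member of that list which, via Observation~\ref{obs:reduction}, will establish Part~3 of Theorem~\ref{thm:main}.
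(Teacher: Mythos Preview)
Your outline follows the paper's strategy for the first layer of the case analysis, but it has a genuine gap at one branch: the case where the extra colour $\beta$ connects the doublet $\{b,c\}$ to a \emph{new} doublet block via $WW(0,1)$, i.e.\ a single perfect matching. In your candidate $H' = FW(2)\cup B$ the two new vertices then have degree~$1$, so Step~1 of the degree-adjusting reduction strips them as pendant leaves and you are left (after recolouring) with $FW(2)$ itself, which is not among the graphs of the figure. Your sentence about ``collapsing internal degree-$2$ structure along a monochromatic path'' does not cover this, since those vertices have degree~$1$, not~$2$, in $H'$. The paper resolves this by a chain argument (its Case~4): one follows a maximal sequence of doublet blocks $W_3,\dots,W_t$ linked by single matchings $WW(1,0)$, then analyses the colour that finally gives $W_t$ additional degree (which must exist since $\delta(H)>2$). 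Depending on whether this terminating colour leads back to $W_1$, into one of the $W_j$, into $W_t$ itself, or to a fresh block, one obtains the primed graphs $A'_2,B'_k,C'_k,\dots,N'$ as well as some reductions to unprimed graphs. This entire mechanism is absent from your proposal and is exactly the origin of the primed half of Figure~\ref{fig:FW(2)-caseanalysis}.

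Two smaller omissions: your list of interblock options should include $FW(2)$, which is dangerous but not harmful and hence permitted by the hypothesis (it yields $A_2$ when going back to $W_1$, and $J$ when going to a new singleton); and you should treat explicitly the possibility that the $\beta$-edges return to the original singleton block $W_1$ rather than to ``some further block'', since this is what produces $A_1$ and $A_2$.
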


\begin{figure}
\centering
\includegraphics[width=\textwidth]{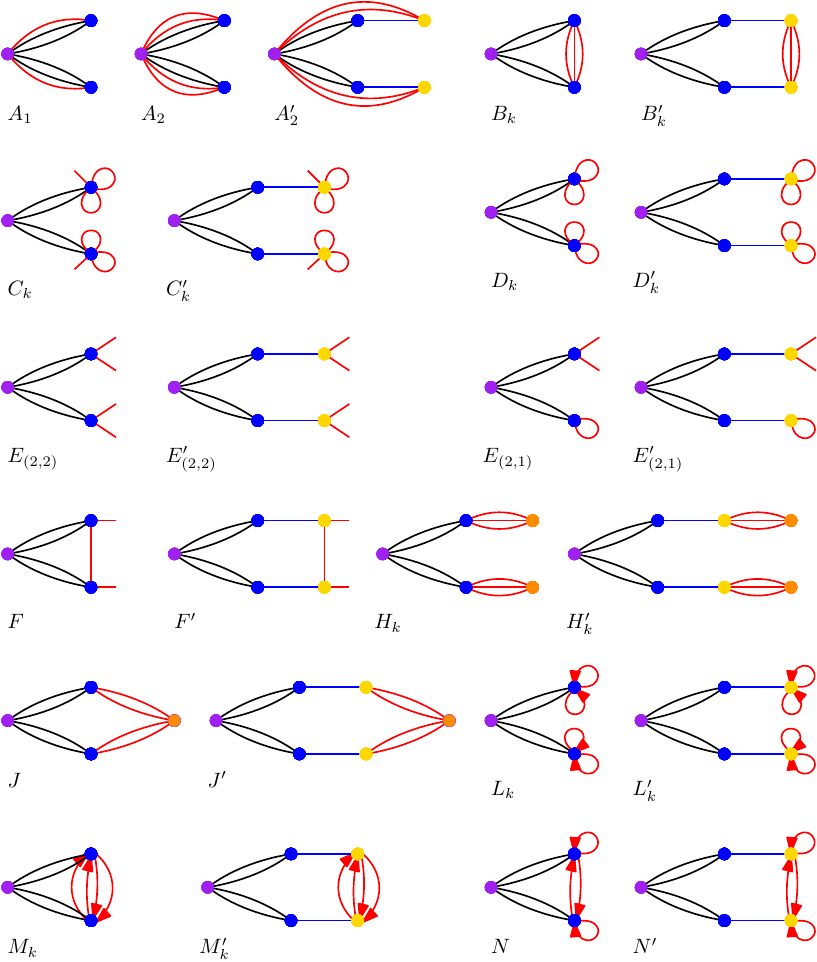}
\caption{Block graphs forced by $FW(2)$.}
\label{fig:FW(2)-caseanalysis}
\end{figure}

\begin{proof}
Let $W_1,W_2$ be the blocks such that $|W_1|=1,|W_2|=2$ and $H[W_1\cup W_2]^{\alpha}\simeq FW(2)$ for some colour $\alpha$ (in the figures, the vertex of $W_1$ is purple, the vertices of $W_2$ are blue and the colour $\alpha$ is black). Since the vertices of $W_2$ have degree at least 3 in $H$, there is another edge-colour, say $\beta$, whose edges are incident with vertices of $W_2$. We distinguish three possibilities:

\smallskip\noindent
{\em Case 1. The edges of colour $\beta$ connect the vertices of $W_2$ to the vertex of $W_1$.} Then $H[W_1\cup W_2]^{\beta}$ is isomorphic either to $FW(1)$ or to $FW(2)$  (if there were three or more parallel edges of colour $\beta$ between a vertex of $W_2$ and the vertex of $W_1$, $H[W_1\cup W_2]^{\beta}$ would be a harmful monochromatic \interbg ), and thus $H$ contains a block graph isomorphic to $A_1$ or to $A_2$.

\smallskip\noindent
{\em Case 2. The edges of colour $\beta$ connect the vertices of $W_2$ to themselves.} In this case $H[W_1\cup W_2]^{\beta}$ can be any nonempty harmless graph on the doublet block $W_2$, and thus $H$ contains a block graph isomorphic to one of the graphs $B_k, k\ge 1$, $C_k, k\ge 0$, $D_k, k\ge 1$, $E_{(2,2)}$, $E_{(2,1)}$, $F$, $L_k, k\ge 1$, $M_k, k\ge 1$, or $N$. (For the graphs parameterized by $k$, $k$ expresses the multiplicity of undirected ordinary edges (in $B_k$, and similarly in $B'_k, H_k, H'_k$), or the multiplicity of loops (in $C_k$ and $D_k$, and similarly in    $C'_k$ and $D'_k$), or the multiplicity of directed loops (in $L_k$ and $L'_k$), or the multiplicity of directed ordinary edges (in $M_k$ and $M'_k$)).
 
\smallskip\noindent
{\em Case 3. The edges of colour $\beta$ connect the vertices of $W_2$ to vertices of a new block $W_3$.} 

{\em Subcase 3A: $|W_3|=1$. } In this case, $H[W_2\cup W_3]^{\beta}\simeq FW(k)$ for $k=1$ or $k=2$. If $k=2$, $H[W_1\cup W_2\cup W_3]^{\alpha,\beta}\simeq J$. If $k=1$, the edges of colour $\beta$ form a path of length 2 connecting the vertices of $W_2$, with the vertex of $W_3$ being of degree 2, and hence $H[W_1\cup W_2\cup W_3]^{\alpha,\beta}$ is reducible to $B_1$.

{\em Subcase 3B: $|W_3|=2$. } Since the edges of colour $\beta$ do not induce a harmful \interbg, $H[W_2\cup W_3]^{\beta}$ is isomorphic either to $WW(k,0)$ or to $WW(1,1)$. If it is $WW(k,0)$ for $k\ge 3$, $H[W_1\cup W_2\cup W_3]^{\alpha,\beta}$ is isomorphic to $H_k$, while for $k=2$, $H[W_1\cup W_2\cup W_3]^{\alpha,\beta}$ is reducible to $D_1$. If $H[W_2\cup W_3]^{\beta}\simeq WW(1,1)$, $H[W_1\cup W_2\cup W_3]^{\alpha,\beta}$ is reducible to $B_2$.
 
 \smallskip\noindent
 {\em Case 4.}
The last subsubcase of 3B, when $H[W_2\cup W_3]^{\beta}\simeq WW(1,0)$, is responsible for all the primed graphs of Figure~\ref{fig:FW(2)-caseanalysis}. Vertices of $W_3$ are of degree greater than 1, and thus there must be another colour  such that edges of this colour are incident with vertices of $W_3$. We would repeat the case analysis for this edge-colour, and repeat and repeat, if necessary. This leads to the following definition. Let $t$ be the largest integer such that there exist doublet blocks $W_3,W_4,\ldots,W_t$ and colours $\beta_3,\ldots,\beta_t$ such that $H[W_{i-1}\cup W_i]^{\beta_i}\simeq WW(1,0)$ for all $i=3,4,\ldots,t$. Setting $\beta_3=\beta$, we see that $t$ is well defined. Vertices of $W_t$ have degree 1 in $H[W_{t-1}\cup W_t]^{\beta_t}$ and thus they must be incident with edges of another colour, say~$\gamma$. 

{\em Subcase 4A. The edges of colour $\gamma$ lead to a new extra block $W_{t+1}$.} If this block is a singleton, $H[W_t\cup W_{t+1}]^{\gamma}$ is isomorphic either to $FW(1)$ or $FW(2)$. In the former case, $H[\bigcup_{i=2}^{t+1}W_i]^{\beta_3,\ldots,\beta_t,\gamma}$ is a symmetric path with all inner vertices of degree 2, and thus $H[\bigcup_{i=1}^{t+1}W_i]^{\alpha,\beta_3,\ldots,\beta_t,\gamma}$ is reducible to  $B_1$ (cf. Figure~\ref{fig:FW(2)-caseanalysisIII} top left). Analogously,  $H[\bigcup_{i=1}^{t+1}W_i]^{\alpha,\beta_3,\ldots,\beta_t,\gamma}$ is reducible to  $J'$ in case of $H[W_t\cup W_{t+1}]^{\gamma}$ being isomorphic to $FW(2)$.

If $W_{t+1}$ is a doublet, $H[W_t\cup W_{t+1}]^{\gamma}$ is isomorphic to $WW(1,1)$ or to $WW(k,0)$ for some $k>1$ (it cannot be isomorphic to $WW(1,0)$ since then $t$ would not be maximal as defined). In the former case, $H[\bigcup_{i=1}^{t+1}W_i]^{\alpha,\beta_3,\ldots,\beta_t,\gamma}$ is reducible to $B'_2$ (illustrated in the second row on the right of Figure~\ref{fig:FW(2)-caseanalysisIII}). If  $H[W_t\cup W_{t+1}]^{\gamma}$ is isomorphic to $WW(2,0)$, $H[\bigcup_{i=1}^{t+1}W_i]^{\alpha,\beta_3,\ldots,\beta_t,\gamma}$ is reducible to $D'_1$ (illustrated in Figure~\ref{fig:FW(2)-caseanalysisIII} top right).  If  $H[W_t\cup W_{t+1}]^{\gamma}\simeq WW(k,0)$ for $k>2$, $H[\bigcup_{i=1}^{t+1}W_i]^{\alpha,\beta_3,\ldots,\beta_t,\gamma}$ is reducible to $H'_k$.

{\em Subcase 4B. The edges of colour $\gamma$ connect vertices inside $W_{t}$.} In this case $H[W_t]^{\gamma}$ is a harmless monochromatic \ubg\ and $H[\bigcup_{i=1}^{t}W_i]^{\alpha,\beta_3,\ldots,\beta_t,\gamma}$ is reducible to one of the graphs $B'_k, k\ge 1$, $C'_k, k\ge 0$, $D'_k, k\ge 1$, $E'_{(2,2)}$, $E'_{(2,1)}$, $F'$, $L'_k, k\ge 1$, $M'_k,k\ge 1$, or $N'$.

{\em Subcase 4C. The edges of colour $\gamma$ lead to the vertex of $W_1$.} If $H[W_1\cup W_t]^{\alpha}$ is isomorphic to $FW(1)$, $H[\bigcup_{i=1}^{t}W_i]^{\alpha,\beta_3,\ldots,\beta_t,\gamma}$ is reducible to $A_1$ (illustrated in the second row on the left of Figure~\ref{fig:FW(2)-caseanalysisIII}). Analogously, if  $H[W_1\cup W_t]^{\alpha}$ is isomorphic to $FW(2)$, $H[\bigcup_{i=1}^{t}W_i]^{\alpha,\beta_3,\ldots,\beta_t,\gamma}$ is reducible to $A'_2$.

{\em Subcase 4D. The edges of colour $\gamma$ lead to the vertices of a block $W_j$ for some $j, 2\le j\le t-1$.} Then $H[W_j\cup W_t]^{\gamma}$ is isomorphic to $WW(1,1)$ or to $WW(k,0)$ for some $k\ge 1$.

If it is isomorphic to $WW(1,0)$, there are two possibilities. Either each edge of colour $\gamma$ connects vertices within the same path, or across the paths. The block graph $H[\bigcup_{i=1}^{t}W_i]^{\alpha,\beta_3,\ldots,\beta_t,\gamma}$ is reducible to $L_1$ or $L'_1$ in the former case, and to $M_1$ or $M'_1$ in the latter one. (It is $L_1$ or $M_1$ if $j=2$ and $L'_1$ or $M'_1$ if $j>2$.) These cases are illustrated in the third row of Figure~\ref{fig:FW(2)-caseanalysisIII}.

If $H[W_j\cup W_t]^{\gamma}\simeq WW(1,1)$, then the block graph  $H[\bigcup_{i=1}^{j}W_i\cup W_t]^{\alpha,\beta_3,\ldots,\beta_j,\gamma}$ is reducible to $B'_2$ (or $B_2$, when $j=2$) as illustrated in the bottom row of Figure~\ref{fig:FW(2)-caseanalysisIII}.\\
When $H[W_j\cup W_t]^{\gamma}\simeq WW(k,0)$ for $k\ge 3$, then the block graph  $H[\bigcup_{i=1}^{j}W_i\cup W_t]^{\alpha,\beta_3,\ldots,\beta_j,\gamma}$ is reducible to $H'_k$ (or $H_k$, if $j=2$). When $H[W_j\cup W_t]^{\gamma}\simeq WW(2,0)$, the block graph  $H[\bigcup_{i=1}^{j}W_i\cup W_t]^{\alpha,\beta_3,\ldots,\beta_j,\gamma}$ is reducible to $D'_1$ (or $D_1$, if $j=2$).
\end{proof}

\begin{figure}
\centering
\includegraphics[width=\textwidth]{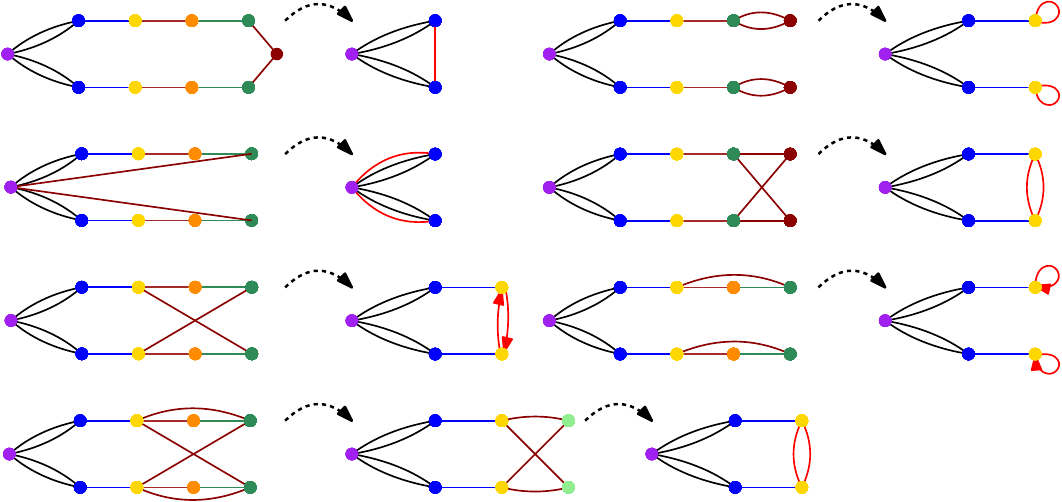}
\caption{Illustration to the reductions of block graphs containing $FW(2)$.}
\label{fig:FW(2)-caseanalysisIII}
\end{figure}

The next step is to show that all block graphs from Figure~\ref{fig:FW(2)-caseanalysis} define NP-complete covering problems. There is a large number of them, but fortunately they can be sorted into a few groups each of which can be handled en bloc. The following lemma is very useful in this direction, since it shows that we do not need to deal with the primed graphs separately.

\begin{lemma}\label{lem:depriming}
Let $H'$ be a graph with two equivalence classes $A,B$ of the degree partition such that $|A|=|B|$ and the induced \interbg\ $H'[A\cup B]$ is a $k$-fold matching for some $k>0$. Let $H$ be the graph obtained from $H'$ by contracting the matching edges. Then {\sc $H$-Cover} polynomially reduces to {\sc $H'$-Cover}, also when both problems are considered only for simple input graphs. 
\end{lemma}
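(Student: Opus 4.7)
The plan is to convert a simple input $G$ for $H$-Cover into a simple input $G'$ for $H'$-Cover in polynomial time such that $G\to H\iff G'\to H'$. The guiding idea is to split every vertex that the degree refinement forces onto the merged block of $H$ back into an $A$-copy and a $B$-copy, and to reinstate the $k$-fold matching of $H'$ by joining these copies via a $k$-regular simple bipartite graph. A single-copy split would place $k$ parallel matching edges between $v^A$ and $v^B$, violating simplicity for $k\ge 2$; I sidestep this by first taking $k$ vertex-disjoint copies of $G$ and then gluing the $A$-copies to the $B$-copies of each original vertex using a copy of the simple graph $K_{k,k}$.

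Concretely, the construction runs as follows. Compute the degree partition of $G$ and output a trivial no-instance if it disagrees with the degree refinement matrix of $H$; otherwise let $V^H$ be the class of $G$ corresponding to the merged block $\{w_1,\dots,w_{|A|}\}$. Take $k$ vertex-disjoint copies of $G$, call the result $\tilde G$, and for every $v\in V^H$ and every copy index $j\in\{1,\dots,k\}$ replace the copy $(v,j)$ by two new vertices $(v,j)^A,(v,j)^B$. Reroute every edge of $\tilde G$ incident to $(v,j)$ to one of these new copies according to its colour: by the colouring convention from the preliminaries, every non-matching colour occurring at a vertex of $V^H$ is used in $H'$ exclusively on edges incident to $A$ or exclusively on edges incident to $B$, making the choice unambiguous. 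Finally, for each $v\in V^H$ attach a copy of $K_{k,k}$ between $\{(v,j)^A:j\}$ and $\{(v,j)^B:j\}$ and, by K\"onig's edge-colouring theorem, decompose it into $k$ perfect matchings that will play the role of preimages of the $k$ matching edges of $H'$, inheriting their colours accordingly. Call the resulting graph $G'$; its size is linear in $k|G|$, hence polynomial. A short case analysis shows that $G'$ is simple whenever $G$ is: the rerouted edges have pairwise distinct incidences (since $G$ is simple), the added $K_{k,k}$ is simple, and a rerouted edge cannot be parallel to a $K_{k,k}$-edge because the former joins split copies of two \emph{distinct} vertices of $V^H$ while the latter joins both split copies of a \emph{single} vertex.

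For the cover equivalence, the forward direction is routine: given $f:G\to H$, define $f'((v,j)^A):=u_i$ and $f'((v,j)^B):=v_i$ whenever $f(v)=w_i$, keep $f'$ equal to $f$ on non-split copies, and map the $l$-th $K_{k,k}$-matching of each $v\in V^H$ to the $l$-th matching edge of $H'$; then non-matching edges of $H'$ pull back to $k$ disjoint copies of the corresponding perfect matching of $G$, and the matching edges pull back to perfect matchings by construction. For the backward direction, suppose $f':G'\to H'$ is a covering. For each $v\in V^H$, every edge of the $K_{k,k}$ glued at $v$ is forced by $f'$ to map to a matching edge of $H'$, which pairs $f'((v,j)^A)$ with $f'((v,j')^B)$; letting $j,j'$ vary shows that this matched pair $(u_{\pi(v)},v_{\pi(v)})$ depends only on $v$. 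I set $f(v):=w_{\pi(v)}$ for $v\in V^H$ and $f(v):=f'((v,1))$ otherwise. Since the only cross-copy edges of $G'$ are matching edges, the preimage of every non-matching edge of $H'$ decomposes into $k$ perfect matchings, one per copy of $\tilde G$; restricting to copy~$1$ and identifying $(v,1)\leftrightarrow v$ yields a perfect matching between the corresponding preimages in $G$, and loops pull back analogously to disjoint unions of cycles. Equitability of $f$ follows from equitability of $f'$, since both $|f^{-1}(w_i)|$ and $|f^{-1}(y)|$ work out to $|f'^{-1}(\cdot)|/k$. The main obstacle throughout is keeping $G'$ simple; it is precisely this requirement that forces the $k$-copy $K_{k,k}$ construction together with K\"onig's theorem, whereas without it the one-copy split would essentially give a one-line proof.
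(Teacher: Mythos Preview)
Your proposal is correct and follows essentially the same approach as the paper: both constructions take $k$ disjoint copies of $G$, split each vertex of the merged block into an $A$-copy and a $B$-copy with edges rerouted by colour, and glue in a $K_{k,k}$ (decomposed into $k$ perfect matchings) between the split copies of each such vertex. The correctness arguments also match, with the backward direction in both proofs restricting to a single copy and using a matching edge of the attached $K_{k,k}$ to see that $(v,j)^A$ and $(v,j)^B$ are mapped to a matched pair $a_i,b_i$ in $H'$.
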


A $k$-tuple matching is obtained from a matching by replacing every edge by $k$ parallel edges. A contraction of a $k$-tuple matching results in identifying each pair of matched vertices and deleting the (formerly) matching edges. The contracted vertices are recoloured by a new extra colour to ensure that they still form a block of the degree partition. The colours of all other vertices, and of all the edges, are kept unchanged. This means that for every edge of $H$ incident with one of the contracted vertices, it is uniquely determined (by its colour) whether it is inherited from an edge incident with a vertex from $A$, or from $B$. Note that Lemma~\ref{lem:depriming} is stated (and proven) in a general setting when $|A|=|B|$ may be arbitrarily large. However, we will only use it for $|A|=|B|=2$ in the sequel.

\begin{proof}
Denote the vertices of $A$ and $B$ as $A=\{a_1,a_2,\ldots,a_t\}$ and $B=\{b_1,b_2,$ $\ldots, b_t\}$ so that $a_i$ is matched to $b_i$, $i=1,2,\ldots,t$. Let $c_i$ be the vertex obtained by contracting the multiple edge joining $a_i$ to $b_i$, for every $i=1,2,\ldots,t$, and let $C=\{c_1,\ldots,c_t\}$ be the block of $H$ containing the contracted vertices. 

Let a simple graph $G$ be the input of {\sc $H$-Cover} and let $V_C$ be the block of $G$ corresponding to the block $C$ of $H$. Replace every vertex $w\in V_C$ by two vertices $w_a,w_b$, and write $\widetilde{V}_A=\{w_a:w\in V_C\}$ and $\widetilde{V}_B=\{w_b:w\in V_C\}$. For every edge $uw$, retain the edge $uw_a$ if the colour of $uw$ is a colour used for edges incident with vertices of $A$ in $H$, and retain the edge $uw_b$ if that colour is a colour used for edges incident with vertices of $B$. (If both $u$ and $w$ are in $V_C$, we retain the edge $u_aw_a$ or $u_bw_b$, depending on whether the colour of the edge $uw$ is a colour of edges within $A$ or within $B$ in $H'$.)   Denote by $\widetilde{G}$ the graph constructed by this process. Then take $k$ disjoint copies of $\widetilde{G}$, with the copy of $u\in V(\widetilde{G})$ in the $i$-th copy being denoted by $u^i$, for $i=1,2,\ldots,k$. Finally, add the edges of a complete bipartite graph with bipartition classes $\{w^i_a:i=1,2,\ldots,k\}$ and $\{w^i_b:i=1,2,\ldots,k\}$ for every $w\in V_C$, and colour these edges with the same colour as the $k$-fold matching in $H'$ connecting the vertices of $A$ to the vertices of $B$. Denote the resulting graph by $G'$. (Note that $\bigcup_{i=1}^k\widetilde{V}^i_A$ is a block of $G'$ corresponding to $A$ and    $\bigcup_{i=1}^k\widetilde{V}^i_B$ is a block of $G'$ corresponding to $B$.)
We claim that $G'$ covers $H'$ if and only $G$ covers $H$.

Suppose first that $G'$ covers $H'$, and let  $f':G'\to H'$ be a covering projection. Define a mapping $f:G\to H$ as follows. Pay attention only to the first copy of $\widetilde{G}$ and forget about the others. Define the vertex mapping part of $f$ by
$$
f(w)=\left\lbrace
\begin{array}{ll}
f'(w) & \mbox{ if } w\not\in V_C,\\
c_i \mbox{ such that } f'(w^1_a)=a_i \mbox{ and }f'(w^1_b)=b_i & \mbox{ if } w\in V_C
\end{array}
\right.
$$     
and define the edge mapping part of $f$ accordingly. Observe the following. If $w\in V_C$, we have $f'(w^1_a) =a_i$ and $f'(w^1_b)=b_j$ for some $i$ and $j$. Since the edge $w^1_aw^1_b$ is mapped onto a matching edge of $H'$, we conclude that $i=j$, and hence $f$ is correctly defined.

If, on the other hand, $f:G\to H$ is a covering projection, we set (for every $j=1,2,\ldots,k$)
$$
f'(u^j)=\left\lbrace
\begin{array}{ll}
f(u) & \mbox{ if } u\not\in V_C,\\
a_i \mbox{ such that } f(w)=c_i & \mbox{ if } u=w_a\in \widetilde{V}_A,\\
b_i \mbox{ such that } f(w)=c_i & \mbox{ if } u=w_b\in \widetilde{V}_B
\end{array}
\right.
$$
and define the edge mapping part of $f'$ on the copies of $\widetilde{G}$ accordingly. It remains to define the mapping $f'$ on the edges of the added complete bipartite graphs. For every $w\in V_C$, partition the edges of the complete bipartite graph on $w^1_a,w^2_a,\ldots,w^k_a$, $w^1_b,w^2_b,\ldots,w^k_b$ into $k$ perfect matchings and map the edges of $j$-th perfect matching onto the $j$-th edge connecting $a_i$ to $b_i$ in $H'$, for the $i$ such that $f(w)=c_i$.  It is easy to see that $f'$ is a covering projection from $G'$ onto $H'$.
\end{proof}

Note that if $k=1$ in the assumption of Lemma~\ref{lem:depriming}, then actually {\sc $H$-Cover} and {\sc $H'$-Cover} are polynomially equivalent. However, for $k>1$ this is not necessarily the case. E.g., the graph $H_2$ from Figure~\ref{fig:FW(2)-caseanalysis} defines an NP-complete {\sc $H_2$-Cover} problem (as we will see soon), but contracting the red double matching we get $FW(2)$ and {\sc $FW(2)$-Cover} is polynomial-time solvable. 

The following lemma is tailored to graphs whose all blocks of degree partition have sizes at most 2.

\begin{lemma}\label{lem:swap}
Let the blocks of the degree partition of a graph $H$ be $W_1,W_2,\ldots,W_t$ and let $|W_i|\le 2$ for all $i=1,2,\ldots,t$. Denote the vertices of $H$ by $r_i,g_i\in W_i$ if $|W_i|=2$, and by $p_i\in W_i$ if $|W_i|=1$.  Let $f:G\to H$ be a covering projection from a graph $G$ whose blocks of degree partition are $V_1,V_2,\ldots,V_t$ (in the canonical ordering corresponding to $W_1,\ldots,W_t$). Suppose further that either
\begin{enumerate}
\item $G$ is bipartite, or
\item for every doublet block  of $H$ and every edge-colour, both vertices of this block  are incident with the same number of semi-edges of this colour.
\end{enumerate} Then the vertex mapping $f':V(G)\to V(H)$  defined by
$$
f'(u)=\left\lbrace
\begin{array}{ll}
p_i & \mbox{ if $u\in V_i$ and $|W_i|=1$},\\
g_i & \mbox{ if $u\in V_i$, $|W_i|=2$ and $f(u)=r_i$},\\
r_i & \mbox{ if $u\in V_i$, $|W_i|=2$ and $f(u)=g_i$}
\end{array}
\right.
$$
extends to a covering projection from $G$ to $H$.
\end{lemma}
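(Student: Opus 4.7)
The plan is to express $f'$ as $\sigma\circ f$ on vertices, where $\sigma\colon V(H)\to V(H)$ fixes every singleton block and swaps $r_i\leftrightarrow g_i$ inside each doublet block~$W_i$. Under condition~(2) I would show that $\sigma$ is a colour-preserving automorphism of $H$, making $(\sigma\circ f_V,\,\sigma_E\circ f_E)$ a covering projection outright; under condition~(1) I would instead verify only that $f'$ is degree-obedient and invoke the extension result of~\cite{n:BFHJK21-MFCS}, which applies because $G$ is bipartite. In either case the equitable condition on preimage sizes is preserved, since $\sigma$ merely permutes the preimage classes.

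The common ingredient is that $\sigma$ preserves every inter-block $\alpha$-edge count. For a singleton--doublet pair $(p_j,W_i)$ this is immediate from the equitable partition, since $r_i$ and $g_i$ have the same $\alpha$-count into $\{p_j\}$. For a doublet--doublet pair $(W_i,W_j)$, both rows (outgoing $\alpha$-counts from $r_i$ and from $g_i$ into $W_j$) and both columns (incoming $\alpha$-counts into $r_j$ and into $g_j$ from $W_i$) of the $2\times 2$ incidence matrix share the same sum $d$; the four resulting linear constraints force $|E_\alpha(r_i,r_j)|=|E_\alpha(g_i,g_j)|$ and $|E_\alpha(r_i,g_j)|=|E_\alpha(g_i,r_j)|$, with the directed case handled identically by treating in- and out-counts separately.

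The main obstacle is the intra-block analysis for a doublet $W_i$. Writing $\ell_{r_i},s_{r_i}$ for the number of $\alpha$-loops and $\alpha$-semi-edges at $r_i$ (and similarly for $g_i$), the inter-block identities together with the equality of the $\alpha$-degrees of $r_i$ and $g_i$ yield only the aggregate identity $2\ell_{r_i}+s_{r_i}=2\ell_{g_i}+s_{g_i}$, not the individual equalities of loops and of semi-edges. Condition~(2) supplies exactly the missing input: $s_{r_i}=s_{g_i}$ upgrades the aggregate to $\ell_{r_i}=\ell_{g_i}$, so one can explicitly assemble an edge action $\sigma_E$ that pairs loops at $r_i$ with loops at $g_i$, pairs semi-edges analogously, fixes the edges between $r_i$ and $g_i$ as a set, and permutes the inter-block edges according to the bijections of the previous paragraph. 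Under condition~(1), by contrast, $G$ is loop- and semi-edge-free, so for $u\in f^{-1}(r_i)$ the count of $\alpha$-edges from $u$ into $f^{-1}(r_i)=f'^{-1}(g_i)$ equals $2\ell_{r_i}+s_{r_i}$, which by the aggregate identity is the $2\ell_{g_i}+s_{g_i}$ demanded by $f'(u)=g_i$; this is the only coordinate of degree-obedience not already handled by the inter-block argument, after which the cited extension theorem finishes the proof.
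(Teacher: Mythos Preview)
Your proposal is correct and follows essentially the same route as the paper: verify that the swap $\sigma$ preserves all the relevant $\alpha$-edge counts between (and within) blocks, then in Case~2 conclude that $\sigma$ is an automorphism of $H$ so that $\sigma\circ f$ is a covering projection, and in Case~1 conclude only degree-obedience of $f'$ and invoke the bipartite extension result from~\cite{n:BFHJK21-MFCS}. The paper organises the argument slightly differently (it establishes degree-obedience first by an explicit case analysis over the shapes $FW(k)$, $WW(b,c)$, $W(k,m,\ell,p,q)$, $WD(c,b,c)$, and only afterwards splits into the two cases), whereas you replace that case analysis by the uniform $2\times 2$ row/column-sum argument; this is a mild stylistic improvement but not a different idea.
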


\begin{proof}
First observe that the vertex mapping $f'$ is degree-obedient. In order to show this, suppose that $u\in V_i$ is a vertex of $G$ incident to some edges of colour $\alpha$. This means that in $H$, edges of colour $\alpha$ are incident with vertices of $W_i$. Let $W_j$ be the other block containing vertices incident with edges of colour $\alpha$ (or $j=i$ if the edges of colour $\alpha$ lie within the block $W_i$).

If $|W_i|=|W_j|=1$, we have $f(u)=f'(u)$ and also $f'(v)=f(v)$ for every vertex $v\in V(G)$ adjacent to $u$ via an edge of colour $\alpha$. Hence $\mbox{deg}^{\alpha}_Gu = \mbox{deg}^{\alpha}_H f(u)= \mbox{deg}^{\alpha}_H f'(u)$ if $\alpha$ is an undirected colour (and likewise for a directed one). 

If $|W_i|=1$ and $|W_j|=2$, the blocks $W_i$ and $W_j$ are different. Then the interblock graph $H[W_i\cup W_j]^{\alpha}$ is isomorphic to $FW(k)$ for some $k$, and $u$ is incident with $k$ edges of colour $\alpha$ that lead to vertices mapped onto $r_j$ and to $k$ edges that lead to vertices mapped $g_j$ by $f$. The mapping of these neighbours is swapped in $f'$, but the counts remain the same. Similarly, if $|W_i|=2$ and $|W_j|=1$, $H[W_i\cup W_j]^{\alpha}\simeq FW(k)$ for some $k$, and every vertex of $G$ that is mapped to $r_i$ ($g_i$, respectively) by $f$ is incident with $k$ edges of colour $\alpha$ that lead to vertices mapped onto $p_j$ by $f$ (and also by $f'$). Thus every vertex mapped to $g_i$ ($r_i$, respectively) by $f'$    is incident with $k$ edges of colour $\alpha$ that lead to vertices mapped onto $p_j$ by $f'$.

If $|W_i|=|W_j|=2$ and $i\neq j$, then $H[W_i\cup W_j]^{\alpha}\simeq WW(b,c)$ for some $b,c$. Every vertex mapped by $f$ to $r_i$ ($g_i$, respectively) is incident with $b$ edges leading to vertices mapped onto $r_j$ ($g_j$, respectively) and with $c$ edges leading to vertices mapped onto $g_j$ ($r_j$, respectively) by $f$. With respect to $f'$, this means that every vertex mapped  to $g_i$ ($r_i$, respectively) is incident with $b$ edges leading to vertices mapped onto $g_j$ ($r_j$, respectively) and with $c$ edges leading to vertices mapped onto $r_j$ ($g_j$, respectively).

If $|W_i|=|W_j|=2$ and, moreover, $W_i=W_j$, we have to consider two cases --- $H[W_i]^{\alpha}\simeq W(k,m,\ell,p,q)$ for some $k,m,\ell,p,q$ such that $k+2m=2p+q$ and $H[W_i]^{\alpha}\simeq WD(c,b,c)$ for some $b,c$. 

In the former case, every vertex which is mapped onto $r_i$ ($g_i$, respectively) by $f$ is incident with $k+2m$ edges that lead to vertices mapped onto $r_i$ ($g_i$, respectively) and it is incident with $\ell$ edges that lead to vertices mapped onto $g_i$ ($r_i$, respectively). With respect to $f'$, this means that every vertex mapped  to $g_i$ ($r_i$, respectively) is incident with $2p+q=k+2m$ edges leading to vertices mapped onto $g_j$ ($r_j$, respectively) and with $\ell$ edges leading to vertices mapped onto $r_j$ ($g_j$, respectively), by $f'$.

In the latter case, every vertex which is mapped onto $r_i$ ($g_i$, respectively) by $f$ is incident with $c$ outgoing edges and $c$ incoming ones that lead to vertices mapped onto $r_i$ ($g_i$, respectively) and it is incident with $b$ outgoing edges and $b$ incoming ones that lead to vertices mapped onto $g_i$ ($r_i$, respectively). With respect to $f'$, this means that every vertex mapped  to $g_i$ ($r_i$, respectively) is incident with $c$ outgoing and $c$ incoming edges leading to vertices mapped onto $g_j$ ($r_j$, respectively) and with $b$ outgoing and $b$ incoming edges leading to vertices mapped onto $r_j$ ($g_j$, respectively), by $f'$. 

\smallskip\noindent
{\em Case 1.} If $G$ is bipartite, every degree-obedient vertex mapping can be extended to a covering projection~\cite{n:BFHJK21-MFCS}.

\smallskip\noindent
{\em Case 2.} The condition that the  vertices of each doublet block are incident with the same number of semi-edges of any colour implies that the mapping $id'$ from $H$ to $H$ is an automorphism of $H$ (here $id$ stands for the identity mapping). Furthermore, together with the edge mapping from $\overline{E}(G)\cup\overline{L}(G)\cup S(G)$ to $\overline{E}(H)\cup\overline{L}(H)\cup S(H)$ defined by the following rule, it is a covering projection from $G$ to $H$ that extends $f'$:  
\begin{itemize}
\item If an edge of colour $\alpha$ connecting $u$ to $v$ is mapped onto the $\kappa$'s normal edge  of colour $\alpha$ connecting $f(u)$ and $f(v)$ (or $\kappa$'s loop or $\kappa$'s  semi-edge of colour $\alpha$ incident with $f(u)=f(v)$), map this edge onto the $\kappa$'s edge of colour $\alpha$ connecting $f'(u)$ and $f'(v)$ (or $\kappa$'s loop or  $\kappa$'s semi-edge of colour $\alpha$ incident with $f'(u)=f'(v)$, respectively). \qedhere
\end{itemize}
\end{proof}

\begin{proposition}\label{prop:reducedFW(2)NPc}
Let $H$ be any of the following graphs (see Figure~\ref{fig:FW(2)-caseanalysis})
\begin{itemize}
  \item $A_1, A_2, A'_2, B_k, k\ge 1$,
  \item $B'_k, k\ge 2$,
  \item $C_k, k\ge 0$,
  \item $C'_k, D_k,D'_k, k\ge 1$,
  \item $E_{(2,2)}, E'_{(2,2)}$, $E_{(2,1)}, E'_{(2,1)}$, $F,F'$, $H_k,H'_k,k\ge 3$,
  \item $J,J'$, $L_k, L'_k, M_k,M'_k,k\ge 1$,
  \item $N,N'$.
\end{itemize}
Then the {\sc $H$-Cover} problem is NP-complete for simple input graphs. 
\end{proposition}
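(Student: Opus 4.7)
The plan is to split the proof according to whether the target graph $H$ is primed or unprimed. Membership in NP is immediate since a covering projection can be verified in polynomial time; the work lies in proving NP-hardness.

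For the primed targets, observe that every primed graph $H'$ in the list arises from its unprimed namesake $H$ by inserting one or more doublet blocks, consecutive blocks being joined by a monochromatic interblock graph isomorphic to $WW(1,0)$ (a single perfect matching). Contracting these matchings one at a time recovers $H$, and each contraction is an instance of Lemma~\ref{lem:depriming} with $k=1$. As observed immediately after the proof of that lemma, for $k=1$ the problems {\sc $H$-Cover} and {\sc $H'$-Cover} are polynomially equivalent, so it suffices to prove NP-hardness of {\sc $H$-Cover} for each unprimed target graph in the list.

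For the unprimed targets I would reduce from a suitable NP-hard constraint satisfaction problem, with the exact choice tuned to the second edge-colour of $H$: representative candidates are the variant of {\sc $c$-in-$2c$-Satisfiability} of all-positive formulas with each variable occurring in exactly $c$ clauses (already used in Proposition~\ref{prop:NP-interblock-easy}), {\sc $(b,c)$-Colouring} of simple bipartite graphs (NP-complete by~\cite{n:BFHJK21-MFCS}), and plain {\sc NAE-3-SAT}. In each case the reduction takes an instance $I$ and produces a simple bipartite graph $G_I$ whose block structure matches that of $H$. The dangerous $FW(2)$ interblock pattern of $H$ forces every vertex in the preimage of the singleton block to have exactly one of its two $\alpha$-neighbours mapped to $r_2$ and one to $g_2$, which is precisely what is needed to encode a clause-level balance constraint of $I$. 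The remaining blocks and edge-colours of $H$ are handled by local gadgets along the same lines as Step~5 of the polynomial-time algorithm in Section~\ref{sec:poly}: a harmless uniblock or interblock graph of the second edge-colour either rigidly propagates the binary label between adjacent vertices (the disconnected cases) or flips it (the connected bipartite cases), and so can be used as a boolean-value wire. Since $G_I$ is simple and bipartite, a satisfying assignment of $I$ yields a degree-obedient vertex mapping that extends to a covering projection by the extension theorem of~\cite{n:BFHJK21-MFCS}.

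The hardest part will be the global swap symmetry recorded in Lemma~\ref{lem:swap}: any cover of $H$ can be post-composed with the $r_i \leftrightarrow g_i$ swap on every doublet, so no gadget can pin an absolute image on a single variable, only a relative image of pairs of variables. The chosen source problems are themselves invariant under global complementation of labels, so the reduction remains well defined; nonetheless, ensuring that this invariance propagates correctly through the second-colour gadgets requires a careful case split over the families ($A$, $B$, $C$, $D$, $E$, $F$, $H$, $J$, $L$, $M$, $N$). I expect the most delicate single case to be $A_1$, where the singleton block is incident with only the $\alpha$-colour $FW(2)$ and an $FW(1)$ of a second colour: here the reduction must use the $FW(1)$ part to transport variable values and the $FW(2)$ part to test the clause-level constraint, so the clause and variable gadgets must be interlocked to prevent unintended covers.
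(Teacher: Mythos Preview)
Your treatment of the primed targets via Lemma~\ref{lem:depriming} is correct and is exactly what the paper does.

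For the unprimed targets, however, what you have written is a plan rather than a proof, and the plan has a real gap. You propose to use the second edge-colour as a ``boolean-value wire'' that either propagates or flips the label, and to use the $FW(2)$ part as the clause test. But in several of the target graphs the second colour does \emph{not} give a simple wire: in $W(1,0,1,0,1)$ (case $F$) and $WD(1,1,1)$ (case $N$) a vertex has two $\beta$-neighbours that must map to \emph{different} targets, so the $\beta$-edges alone impose only a not-all-equal constraint on a pair, not an equality between chosen connector vertices; in $A_1$ the $\beta$-colour is $FW(1)$, which again only says ``the two $\beta$-neighbours of a singleton-block vertex are different''. None of these directly yields the crucial primitive you need, namely a gadget forcing \emph{two specified $V_2$-vertices to have the same image}, so that all connectors of one variable agree. (You also slip in describing $FW(2)$: a preimage of the singleton vertex has \emph{four} $\alpha$-neighbours, two mapped to $r$ and two to $g$, not ``one of its two''; this is why the natural source problem is $2$-in-$4$-SAT, not NAE-$3$-SAT.)

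The paper's missing ingredient is a single universal equality gadget built entirely in colour $\alpha$, the \emph{limping tripod}: a $K_{2,3}$ on $\{p_1,p_2\}\times\{m_1,m_2,m_3\}$ with pendant vertices $u,v$ attached to $p_1,p_2$. Since $p_1,p_2$ share three neighbours and each must see two reds and two greens, necessarily $f(u)=f(v)$. With this primitive in hand, the paper reduces uniformly from $2$-in-$4$-SAT: clause vertices sit in $V_1$, connector vertices $w(x,s)$ sit in $V_2$, and for each variable several limping tripods are chained together (the $\beta$-colour now only has to link \emph{copies of the tripod}, not carry the value itself) into a graph $G(a_1,\dots,a_k)$ or $G(a_1,\dots,a_k;b_1,\dots,b_k)$ satisfying a uniform ``Claim~A''. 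The case split you anticipate is then over how to glue the tripod copies with $\beta$-edges, not over which source problem to use. Your outline does not contain this idea, and without it (or an equivalent $\alpha$-only equality gadget) the variable gadgets cannot be built for the cases $A_1$, $F$, $N$, $C_0$, $E_{(2,1)}$, $E_{(2,2)}$, where the $\beta$-structure by itself is too weak.
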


\begin{proof}
In all the cases, we reduce from {\sc 2-in-4-Sat} for all positive formulas. Let the vertices of $W_2$ be denoted by $r=r_2$ and $g=g_2$, and the vertex of $W_1$ by $p=p_1$. Vertex $p$ is connected by two parallel edges of colour $\alpha$ to $r$ and by two parallel edges of colour $\alpha$ to $g$. Given a formula $\Phi$ with a set $C$ of clauses over a set of variables $X$, we introduce a vertex $z(s)$ for every clause $s\in C$ and colour it so that $z(s)\in V_1$ (that means that in any covering projection $f:G_{\Phi}\to H$, $f(z(s))=p$). For every pair $x\in X, s\in C$ such that  $x\in s$, we introduce a connector vertex $w(x,s)$ from $V_2$, and connect it by an edge of colour $\alpha$ to $z(s)$. The connector vertices of each variable $x$ are interconnected by a variable gadget that ensures that $f(w(x,s))=f(w(x,s'))$ for any two clauses $s$ and $s'$ containing $x$. This allows to define a truth valuation $\varphi:X\to \{{\sf true, false}\}$ so that $\varphi(x)={\sf true}$ if $f(w(x,s))=r$ and $\varphi(x)={\sf false}$  if $f(w(x,s))=g$ (for some, i.e., for all clauses $s$ such that $x\in s$). If $f$ is a covering projection, every vertex $z(s)$ has 2 neighbours (along edges of colour $\alpha$) that map onto $r$ and 2 neighbours that map onto $g$, which means that every clause contains 2 variables evaluated to {\sf true} and 2 variables evaluated to {\sf false} by $\varphi$. The opposite direction, constructing a covering projection from a 2-in-4-satisfying truth valuation is more or less straightforward. 

The variable gadgets differ for particular graphs. Also in many cases we need to take several copies and link them together as sort of a garbage collection to be able to construct the desired covering projection from a truth valuation. The following claim captures the common features of many of the cases. Recall that a colour-preserving mapping $f:G\to H$ is a {\em partial covering projection} if the edge mapping is compatible with the vertex mapping, and for every vertex $u$ of $G$, the edges incident with it are mapped injectively into the set of edges of $H$ incident with $f(u)$ (which implies that if $\mbox{deg}_Gu=\mbox{deg}_Hf(u)$, in which case we say that $u$ has {\em full degree},  the edges incident with $u$ in $G$ are mapped bijectively onto the edges incident with $f(u)$ in $H$).

  \medskip\noindent
  {\sffamily\normalsize\bfseries Claim A.} 
  Suppose all blocks of $H$ have at most two vertices and let the blocks and the vertices of $H$ be denoted as in Lemma~\ref{lem:swap}. Moreover, let the vertex $p$ be incident only to the 4 edges of colour $\alpha$ leading to vertices $r$ and $g$ and nothing else. Suppose that either 
there exists a simple graph $G(a_1,a_2,\ldots,a_k)$ with blocks $V_1,V_2,\ldots,V_t$ and specified vertices $a_1,a_2,\ldots,a_k\in V_2$ so that either:
\begin{enumerate}
\item $k\ge 3$,
\item all vertices of $V(G)\setminus\{a_1,\ldots,a_k\}$ have full degree,
\item each of the vertices $a_1,a_2,\ldots,a_k$ is missing exactly one edge of colour $\alpha$, the degrees in all other colours are full,
\item in every partial covering projection $f:G(a_1,a_2,\ldots,a_k)\to H$, $f(a_1)=f(a_2)=\ldots = f(a_k)$, 
\item there exists a partial covering projection $f:G(a_1,a_2,\ldots,a_k)\to H$, $f(a_1)=f(a_2)=\ldots = f(a_k)=r$, and 
\item there exists a partial covering projection $f:G(a_1,a_2,\ldots,a_k)\to H$, $f(a_1)=f(a_2)=\ldots = f(a_k)=g$,
\end{enumerate} 

or there exists a simple graph $G(a_1,a_2,\ldots,a_k;b_1,b_2,\ldots, b_k)$ with blocks $V_1,V_2,\ldots,V_t$ and specified vertices $a_1,a_2,\ldots,a_k,b_1,b_2,\ldots,b_k\in V_2$ so that 
\begin{enumerate}
\item $k\ge 3$,
\item all vertices of $V(G)\setminus\{a_1,\ldots,a_k,b_1,\ldots,b_k\}$ have full degree,
\item each of the vertices $a_1,a_2,\ldots,a_k,b_1,b_2,\ldots,b_k$ is missing exactly one edge of colour $\alpha$, the degrees in all other colours are full,
\item in every partial covering projection $f:G(a_1,a_2,\ldots,a_k)\to H$, $f(a_1)=f(a_2)=\ldots = f(a_k)$, 
\item there exists a partial covering projection $f:G(a_1,a_2,\ldots,a_k;b_1,b_2,\ldots,b_k)\to H$, $f(a_1)=f(a_2)=\ldots = f(a_k)=r$ and $f(b_1)=f(b_2)=\ldots = f(b_k)=g$, and 
\item there exists a partial covering projection $f:G(a_1,a_2,\ldots,a_k;b_1,b_2,\ldots,b_k)\to H$, $f(a_1)=f(a_2)=\ldots = f(a_k)=g$ and $f(b_1)=f(b_2)=\ldots = f(b_k)=r$.
\end{enumerate}
Then the {\sc $H$-Cover} problem is NP-complete for simple input graphs.

\medskip\noindent
{\sffamily\normalsize\bfseries Proof of the Claim.} Consider a formula $\Phi$ as an input of {\sc 2-in-4-SAT} such that every variable occurs in exactly $k$ clauses (and all these occurrences are positive). Deciding if there is a truth valuation such that every clause contains exactly 2 variables evaluated to {\sf true} is NP-complete~\cite{kratochvil2003complexity}. If there exists a graph $G(a_1,\ldots,a_k)$ as described above, take a copy of it for every variable $x$, and denote this copy by $G^x$. Order the clauses that contain $x$ arbitrarily as $s_1,s_2,\ldots,s_k$ and identify the vertices $a_i$ and $w(x,s_i)$ for $i=1,2,\ldots,k$. The union of $G^x, x\in X$ together with the clause vertices $z(s), s\in C$, results in the graph $G_{\Phi}$. All vertices of $G_{\Phi}$ have full degree (each vertex $w(x,s)$ gets its degree completed to fullness by the edge of colour $\alpha$ leading to $z(s)$, and each vertex $z(s)$ is incident to 4 edges of colour $\alpha$ because the clause $s$ contains 4 variables). Suppose $G_{\Phi}$ covers $H$, and consider a covering projection $f$. By the assumption on $G(a_1,\ldots,a_k)$, for every variable $x$, all $w(x,s)$ such that $x\in s$ are mapped onto the same vertex of $W_2$. Set $\varphi(x)={\sf true}$ if $f(w(x,s))=r$ for some (and hence for all) $s$ that contain $x$, and set $\varphi(x)={\sf false}$ otherwise. Since $H$ has 2 parallel edges of colour $\alpha$ leading to vertex $r$ and 2 leading to $g$, for every clause $s\in C$, vertex $z(s)$ is incident to 2 edges of colour $\alpha$ leading to vertices $w(x,s)$ such that $f(w(x,s))=r$ and 2 edges leading to vertices $w(x,s)$ mapped onto $g$. Alas, $s$ contains 2 variables evaluated to {\sf true} and 2 variables evaluated to {\sf false}. On the other hand, let $\varphi:X\to\{{\sf true, false}\}$ be 2-in-4-satisfying truth valuation of $\Phi$. Define a covering projection $f$ as follows. For every variable $x$ such that $\varphi(x)={\sf true}$, take the partial covering projection from $G^x$ onto $H$ which maps all the vertices $w(x,s)$ onto $r$, and for  every variable $x$ such that $\varphi(x)={\sf false}$, take the partial covering projection from $G^x$ onto $H$ which maps all the vertices $w(x,s)$ onto $g$. Of course, $f(z(s))=p$ for all $s\in C$. It remains to define $f$ for the edges $z(s)w(x,s), x\in s, s\in C$. This may require redefining the covering projection on edges of colour $\alpha$ inside the variable gadgets, but that can be done. The vertices that map onto $p$ or $r$ together with the edges of colour $\alpha$ form a 2-regular bipartite graph whose edges can thus be partitioned into 2 disjoint perfect matching. Map the edges of one of the matchings onto one edge of colour $\alpha$ joining $p$ and $r$, and map the edges of the other matching onto the other edge. 

If there exists a graph $G(a_1,\ldots,a_k;b_1,\ldots,b_k)$ as described above, we proceed analogously. We take a copy of this auxiliary graph for every variable $x$ and denote it by $G^x$. Now for every clause $s\in C$, we take 2 vertices $z^1(s)$ and $z^2(s)$, and for every $x\in s$, we introduce $w^1(x,s)$ and $w^2(x,s)$ and make $w^i(x,s)$ adjacent to $z^i(s)$ via an edge of colour $\alpha$, for $i=1,2$. If $s_1,s_2,\ldots,s_k$ are the clauses containing $x$, we identify $w^1(x,s_i)$ with $a_i$, and  $w^2(x,s_i)$ with $b_i$ of $G^x$. Arguments similar to those above show that the graph $G_{\Phi}$ that we have constructed this way, covers $H$ if and only if $\Phi$ is 2-in-4-satisfiable.

\begin{figure}
\centering
\includegraphics[width=0.7\textwidth]{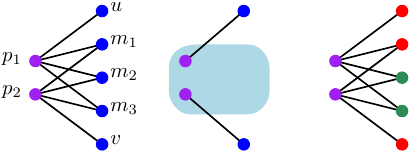}
\caption{The Limping Tripod.}
\label{fig:limpingtripod}
\end{figure}

Finally, we introduce the {\em limping tripod} $LT$ as the graph obtained from $K_{2,3}$ by adding 2 pendant vertices of degree 1 adjacent to the vertices of degree 3, see Figure~\ref{fig:limpingtripod}. The vertices $p_1,p_2$ are from $V_1$, the vertices $u,v,m_1,m_2,m_3$ from $V_2$ and all edges are of colour $\alpha$. A simple but crucial observation is that in every partial covering projection $f$ from $LT$ onto $H$, the vertices $p_1$ and $p_2$ have 3 common neighbours, and since both of them have 2 neighbours mapped onto $r$ and 2 neighbours mapped onto $g$, necessarily $f(u)=f(v)$. This is depicted in Figure~\ref{fig:limpingtripod} right, where we use the colours red and green for indicating which of the vertices of $W_2$ a vertex of the source graph under consideration is mapped onto.
In the middle of the figure we introduce a pictogram for $LT$ that we involve in the more complicated constructions. When we build the variable gadget from several copies of the limping tripod, they are denoted by $LT^1, LT^2, \ldots$ and their vertices are denoted by $u^1,v^1,u^2,v^2,\ldots$ accordingly.
   
   \medskip\noindent
{\em Case $C_0$.} Take 2 copies $LT^1$ and $LT^2$ of the limping tripod and connect $u^1$ to $u^2$ by an edge of colour $\beta$, as well as $v^1$ to $v^2$. In every partial covering projection $f$ to $C_0$, the vertices $u^1$ and $v^1$ are mapped onto the same vertex of $W_2$ because of $LT^1$, while $v^2$ is mapped onto the same vertex as $v^1$ because of the edge $v^1v^2$ of colour $\beta$ which must map onto the semi-edge incident with $f(v^1)$. Hence all vertices $u^1,v^1,u^2,v^2$ must be mapped onto the same vertex, and thus they play the role of $a_1,a_2,a_3,a_4$ in $G(a_1,a_2,a_3,a_4)$. To complete the other vertices to full degrees, add a matching between the $m$-vertices of the two limping tripods. A covering projection that sends all of the $u^1,v^1,u^2,v^2$ vertices onto the red vertex is shown in Figure~\ref{fig:C0-var}, the existence of a    covering projection that maps all of them onto the green vertex follows from Lemma~\ref{lem:swap}. Thus the fact that {\sc $C_0$-Cover} is NP-complete for simple input graphs follows from Claim~A. The NP-completeness of {\sc $C'_0$-Cover} follows from Lemma~\ref{lem:depriming}.

\begin{figure}
\centering
\includegraphics[width=\textwidth]{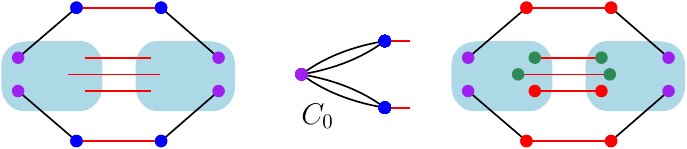}
\caption{The reduction for $C_0$.}
\label{fig:C0-var}
\end{figure}
     
\medskip\noindent
{\em Case $C_k$, $k>0$.} Take $2k+2$ copies of the limping tripod and place a complete graph with edges of colour $\beta$ on the vertices $x^1,x^2,\ldots,x^{2k+2}$, for every $x\in V_2(LT)$. This is $G(u^1,v^1,u^2,v^2,\ldots,u^{2k+2},v^{2k+2})$, and we reduce from {\sc 2-in-4-SAT} for formulas with $4k+4>2$ occurrences per variable. The arguments are analogous to the previous case, it is crucial that a complete graph with even number of vertices is 1-factorizable, i.e., $K_{2k+2}$ covers $F(2k+1,0)$ and hence also $F(1,k)$. Thus {\sc $C_k$-Cover} is NP-complete for simple input graphs for every fixed $k>0$, and the NP-completeness for $C'_k$ follows from Lemma~\ref{lem:depriming}.

\medskip\noindent
{\em Case $D_k, k>0$.} Uses the same idea as the previous case. 
Take $2k+2$ copies of the limping tripod and add edges of colour $\beta$ forming $2k$-regular graphs on $x^1,x^2,\ldots,x^{2k+2}$ for each $x\in V_2(LT)$ to create $G(u^1,v^1,u^2,v^2,\ldots,u^{2k+2},v^{2k+2})$. Reduce from {\sc 2-in-4-SAT} for formulas with every variable appearing in $4k+4\ge 8>2$ clauses, using the fact that the edges of every $2k$-regular graph can be partitioned into $k$ 2-factors, each of which is mapped onto one loop incident with $r$ or $g$. The NP-hardness of {\sc $D'_k$-Cover} then follows from Lemma~\ref{lem:depriming}.

\medskip\noindent
{\em Cases $E_{(2,1)}$ and $E_{(2,2)}$.} Use the construction described in the case of $D_1$, just make sure the 2-regular graph  of colour $\beta$ on  $x^1,x^2,\ldots,x^{2k+2}$ is a Hamiltonian (and hence even) cycle for each $x\in V_2(LT)$. Then $G(u^1,v^1,\ldots,u^4,v^4)$ partially covers $D_1$ if and only if it partially covers $E_{(2,1)}$, which happens if and only if it partially covers $E_{(2,2)}$. Thus both {\sc $E_{(2,1)}$-Cover} and {\sc $E_{(2,2)}$-Cover} are NP-complete, and the NP-completeness of  {\sc $E'_{(2,1)}$-Cover} and {\sc $E'_{(2,2)}$-Cover} follows from Lemma~\ref{lem:depriming}.

\medskip\noindent
{\em Case $H_k$, $k>2$.} Suppose $W_3=\{r',g'\}$ so that $r$ is adjacent to $r'$ and $g$ to $g'$ in $H_k$. Take $k$ copies of the limping tripod, add 
$k$ vertices $x'^1,\ldots,x'^k\in V_3$ and edges of colour $\beta$ forming a connected $k$-regular bipartite graph with classes of bipartition $\{x^1,x^2,\ldots,x^k\}$ and $\{x'^1,\ldots,x'^k\}$ for each $x\in V_2(LT)$. This $G(u^1,v^1,\ldots,u^k,v^k)$ satisfies the assumptions of Claim~A, because the connectedness of the bipartite graph with edges of colour $\beta$ on $u^1,\ldots,u^k,u'^1,\ldots,u'^k$ ensures that $f(u^1)=\ldots=f(u^k)=f(v^1)=\ldots=f(v^k)$ for every partial covering projection $f:G(u^1,v^1,\ldots,u^k,v^k)\to H_k$. For the existence of a partial covering projection, after $f$ is defined on the vertices and edges of the limping tripod part of $G(u^1,v^1,\ldots,u^k,v^k)$, set $f(x'^j)=r'$ if  $f(x^j)=r$ (and set $f(x'^j)=g'$ otherwise) for all $x\in V_2(LT)$ and $j=1,2,\ldots,k$ and for the edge mapping on the edges of colour $\beta$ use the fact that the edges of a $k$-regular bipartite graph can be partitioned into $k$ perfect matchings. Hence the NP-completeness of {\sc $H_k$-Cover} follows from Claim~A, and the NP-completeness of {\sc $H'_k$-Cover} from Lemma~\ref{lem:depriming}.

\medskip\noindent
{Case $L_k$, $k>0$.} Take $2k+2$ copies of the limping tripod and for each $x\in V_2(LT)$, add directed edges of colour $\beta$ on $x^1,x^2,\ldots,x^{2k+2}$ to form a $k$-in-$k$-out-regular connected graph which can be partitioned into $k$ collections of disjoint oriented cycles (this can be achieved, e.g., by taking a $2k$-regular undirected graph, partitioning its edge set into $2k$ perfect matchings, taking the matchings in pairs and orienting the edges so that the cycles formed by the edges of such a pair are directed cycles). Then the graph $G(u^1,\ldots,u^{2k+2}, v^1,\ldots, v^{2k+2})$ constructed in this way satisfies the assumptions of  Claim~A and the NP-completeness of {\sc $L_k$-Cover} follows, as well as the NP-completeness of {\sc $L'_k$-Cover} (using Lemma~\ref{lem:depriming}).

\begin{figure}
\centering
\includegraphics[width=\textwidth]{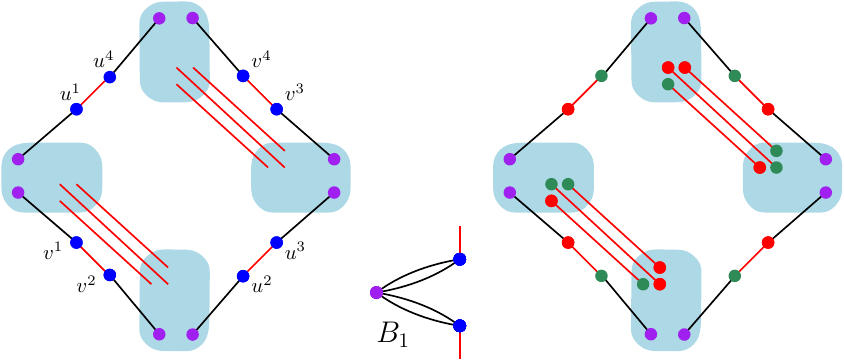}
\caption{The reduction for $B_1$.}
\label{fig:B1-var}
\end{figure}

\medskip\noindent
{Case $B_1$.} Take 4 copies of the limping tripod and add a matching in colour $\beta$ containing $x^1x^2$ and $x^3x^4$ for $x=m_1,m_2,m_3,v$ and $u^1u^4$ and $u^2u^3$ (as depicted in Figure~\ref{fig:B1-var}). In every partial covering projection $f$ onto $B_1$, it must be $f(u^1)\neq f(u^4)=f(v^4)\neq f(v^3)=f(u^3)\neq f(u^2)=f(v^2)\neq f(v^1)=f(u^1)$ since $f(u^1u^4)=f(v^4v^3)=f(u^3u^2)=f(v^2v^1)$ is the edge of colour $\beta$ connecting $r$ and $g$ in $B_1$. Hence the constructed graph fulfills the requirements of $G(u^1,u^3,v^1,v^3;u^2,u^4,v^2,v^4)$, the construction of a partial covering projection such that $f(u^1)=f(v^1)=f(u^3)=f(v^3)=r$ and  $f(u^2)=f(v^2)=f(u^4)=f(v^4)=g$ is illustrated in Figure~\ref{fig:B1-var} right, the existence of the swapped covering projection follows from Lemma~\ref{lem:swap}. The NP-completeness of {\sc $B_1$-Cover} then follows from  Claim~A, and for {\sc $B'_1$-Cover} from Lemma~\ref{lem:depriming}.

\medskip\noindent
{Case $B_k$, $k>1$.} 
Take $2k$ copies of the limping tripod and add, for every $x\in V_2(LT)$, edges of colour $\beta$ forming a complete bipartite  graph with classes of bipartition $\{x^1,x^3,\ldots, x^{2k-1}\}$ and $\{x^2,\ldots,x^{2k}\}$. This graph fulfills the properties of $$G(u^1,u^3,\ldots, u^{2k-1},v^1,v^3,\ldots, v^{2k-1}; u^2,u^4,\ldots,u^{2k},v^2,v^4,\ldots,v^{2k})$$ in every partial covering projection $f$ onto $B_k$, the complete bipartite graphs in colour $\beta$ imply that $f(u^{2i-1})\neq f(u^{2j})$ for every $i,j=1,2,\ldots, k$, and hence $f(u^1)=f(u^3)=\ldots f(u^{2k-1})\neq f(u^2)=f(u^4)=\ldots f(u^{2k})$, and $f(v^i)=f(u^i)$, $i=1,\ldots, 2k$ follows from the properties of the limping tripod. A feasible colouring can be designed e.g. as follows -- set $f(x^{2i-1})=r$ and  $f(x^{2i})=g$ for $i=1,2,\ldots,k$ and $x=u,v,m_1$, and set  $f(x^{2i-1})=g$ and  $f(x^{2i})=r$ for $i=1,2,\ldots,k$ and $x=m_2,m_3$. Since $B_k$ has no semi-edges, this vertex mapping can always be extended to a partial covering projection. Thus {\sc $B_k$-Cover} is NP-complete by  Claim~A, and  {\sc $B'_k$-Cover} by Lemma~\ref{lem:depriming}.
    
\medskip\noindent
{Case $M_k$, $k>0$.} Take $4k$ copies of the limping tripod. For every $x\in V_2(LT)$, add edges of colour $\beta$ forming a complete bipartite graph with classes of bipartition $\{x^1,x^3,\ldots,x^{4k-1}\}$ and $\{x^2,x^4,\ldots,x^{4k}\}$ and orient its edges so that it is $k$-in and $k$-out-regular (e.g., by orienting the edge $x^{2i-1}x^{2i+j}$ from $x^{2i-1}$ to $x^{2i+j}$ for $j=1,2,\ldots,k$ and in the opposite direction for $j=k+1,k+2,\ldots,2k$ with addition in superscripts being modulo $4k$). Similarly as in the previous case, this graph fulfills the properties of
$$G(u^1,u^3,\ldots,u^{4k-1},v^1,v^3,\ldots,v^{4k-1};u^2,u^4,\ldots,u^{4k},v^2,v^4,\ldots,v^{4k})$$
and NP-completeness of {\sc $M_k$-Cover} follows from  Claim~A, and then {\sc $M'_k$-Cover} is NP-complete by Lemma~\ref{lem:depriming}. 

\begin{figure}
\centering
\includegraphics[width=\textwidth]{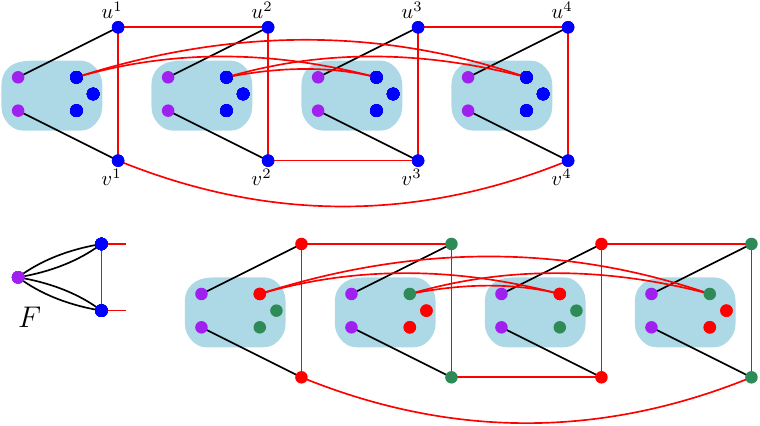}
\caption{The reduction for $F$.}
\label{fig:F-var}
\end{figure}

\medskip\noindent
{Case $F$.} Take 4 copies of the limping tripod and add edges of colour $\beta$ as follows: $u^iv^i, i=1,2,3,4$, $u^1u^2,v^2v^3, u^3u^4, v^1v^4$, and for every $j=1,2,3$, add $m^1_jm^3_j, m^2_jm^3_j, m^2_jm^4_j, m^1_jm^4_j$. The gadget is illustrated in Figure~\ref{fig:F-var} top left, for the sake of legibility only the edges of one of the $m$ levels are shown. If $f$ is a partial covering projection onto $F$, the properties of the limping tripod ensure that $f(u^i)=f(v^i)$ for $i=1,2,3,4$. Thus $u^1$ has one neighbour, namely $v^1$, mapped by $f$ onto the same vertex of $W_2(F)$, and so $f(u^2)\neq f(u^1)$. Analogously, $f(u^3)=f(v^3)\neq f(v^2)= f(u^2)$ and $f(u^4)\neq f(u^3)$. A covering projection that respects this pattern is depicted in Figure~\ref{fig:F-var} bottom right. Hence this graph $G(u^1,v^1,u^3,v^3;u^2,v^2,u^4,v^4)$ fulfills the properties of  Claim~A and the NP-completeness of {\sc $F$-Cover} follows. For the graph $F'$, the NP-completeness of {\sc $F'$-Cover} follows from Lemma~\ref{lem:depriming}. 

\begin{figure}
\centering
\includegraphics[width=\textwidth]{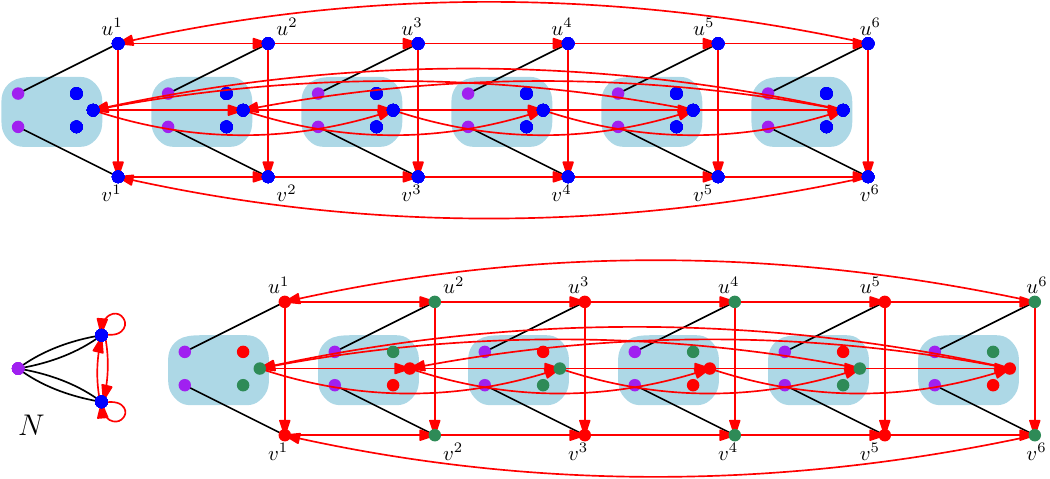}
\caption{The reduction for $N$.}
\label{fig:N-var}
\end{figure}

\medskip\noindent
{Case $N$.} Take 6 copies of the limping tripod and add directed edges of colour $\beta$ as follows: $u^iv^i, i=1,2,\ldots,6$, $u^iu^{i+1}, i=1,2,\ldots,5$, $u^6u^1$, $v^iv^{i+1}, i=1,2,\ldots,5$, $v^6v^1$, and for every $j=1,2,3$, $m^i_jm^{i+1}_j, i=1,2,\ldots,5$, $m^6_j,m^1_j$,  
$m^i_jm^{i+2}_j, i=1,2,3,4,$ $m^5_jm^1_j, m^6_jm^2_j$. The construction is illustrated in Figure~\ref{fig:N-var} top left, where  the edges of colour $\beta$ are illustrated only on one of the $m$-levels. The properties of the limping tripod imply that for every partial covering projection $f$ onto $N$, $f(u^i)=f(v_i)$ for all $i=1,2,\ldots,6$. Since every vertex of $V_2$ has two outgoing edges of colour $\beta$ and their end-vertices  must be mapped onto different vertices of $W_2(N)$, we have $f(u^i)\neq f(u^{i+1})$ for all $i=1,2,\ldots,5$. A partial covering projection respecting this pattern is shown in Figure~\ref{fig:N-var} bottom right. Hence the constructed graph $G(u^1,v^1,u^3,v^3,u^5,v^5;
u^2,v^2,u^4,v^4,u^6,v^6)$ fulfills the properties required by  Claim~A and the NP-completeness of {\sc $N$-Cover} follows. The NP-completeness of {\sc $N'$-Cover} follows by Lemma~\ref{lem:depriming}.

\begin{figure}
\centering
\includegraphics[width=\textwidth]{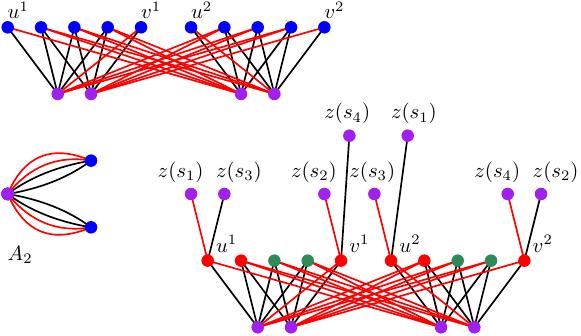}
\caption{The reduction for $A_2$.}
\label{fig:A2-var}
\end{figure}

\medskip\noindent
{Case $A_2$.} We reduce from {\sc 2-in-4-SAT} for formulas with 4 occurrences per variable. Again each clause $s\in C$ is represented by a vertex $z(s)\in V_1(G_{\Phi})$, but now from this vertex there are 2 edges leading to the variable gadget for every variable occurring in $s$, one of colour $\alpha$ and  one of colour $\beta$. The variable gadget is built from 2 copies of the limping tripod and it is depicted in Figure~\ref{fig:A2-var} top left. In colour $\beta$, there are complete bipartite graphs connecting $p^1_1,p^1_2$ to $m^2_1,m^2_2,m^2_3$ and  $p^2_1,p^2_2$ to $m^1_1,m^1_2,m^1_3$, and furthermore the edges $p^1_1v^1, p^1_2v^2,p^2_1u^1,p^2_2u^2$. Thus the vertices $u^1, u^2, v^1, v^2$ are each missing one edge of colour $\alpha$ and one edge of colour $\beta$, the degrees of all other vertices are full (in both colours). In order to keep the graph $G_{\Phi}$ simple, the connector edges of colours $\alpha$ and $\beta$ from the variable gadgets to the clause gadgets are shifted as follows: if a variable $x$ occurs in clauses $s_1.s_2,s_3$ and $s_4$, then vertex $u^1$ of the variable gadget of $x$ is adjacent to $z(s_1)$ via an edge of colour $\alpha$ and to $z(s_3)$ via an edge of colour $\beta$. And similarly, $v^1$ ($u^2$, $v^2$, respectively) is adjacent via an edge of colour $\alpha$ to $z(s_2)$ ($z(s_3), z(s_4)$, respectively) and via an edge of colour $\beta$ to $z(s_4)$ ($z(s_1), z(s_2)$, respectively). If $G_{\Phi}$ covers $A_2$ and $f$ is a covering projection, then the properties of limping tripods imply $f(u^1)=f(v^1)=f(u^2)=f(v^2)$ in every variable gadget, and setting $\varphi(x)={\sf true}$ iff $f(u^1)=r$ in the variable gadget of variable $x$ guarantees that the same information about the truth valuation of $x$ reaches every clause containing it. And in every clause $s\in C$ there are exactly 2 variables evaluated to {\sf true} and exactly 2 clauses evaluated to {\sf false}, $z(s)$ is adjacent to vertices of 4 variable gadgets, out of which 2 must be mapped onto $r$ and 2 onto $g$ by $f$. On the other hand, the variable gadget allows a partial covering projection onto $A_2$ in which all the vertices $u^1,v^1,u^2,v^2$ are mapped onto $r$ (and a companion partial covering projection in which these vertices are mapped onto $g$) as depicted in Figure~\ref{fig:A2-var} bottom right. Thus if $\varphi$ is truth valuation which 2-in-4-satisfies $\Phi$, we can compose a degree-obedient vertex mapping $f:V(G_{\Phi})\to \{p,r,g\}$, and since $G_{\Phi}$ has no semi-edges, this vertex mapping extends to a covering projection onto $A_2$. This concludes the proof of NP-completeness of {\sc $A_2$-Cover}, and the NP-completeness of {\sc $A'_2$-Cover} follows by Lemma~\ref{lem:depriming}.           

\begin{figure}
\centering
\includegraphics[width=\textwidth]{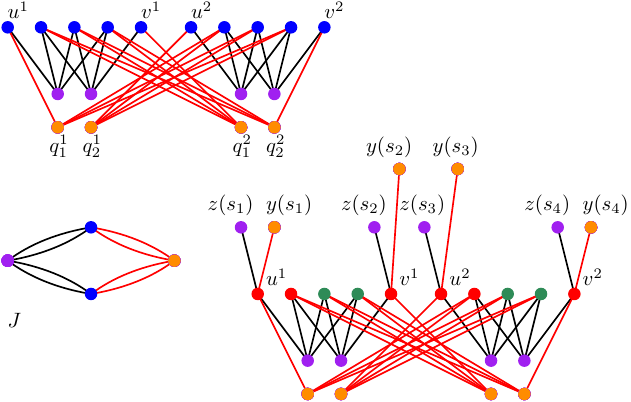}
\caption{The reduction for $J$.}
\label{fig:J-var}
\end{figure}

\medskip\noindent
{Case $J$.} This case is similar to the case of $A_2$. Denote by $q$ the vertex of block $W_3$ of $J$. For every clause $s$ of $\Phi$, $G_{\Phi}$ has 2 vertices, $z(s)\in V_1(G_{\Phi})$ and $y(s)\in V_3(G_{\Phi})$. The variable gadget is based on 2 copies of the limping tripod $LT^1, LT^2$ with edges of colour $\alpha$ which are further connected by   another two copies of limping tripod, these two having edges of colour $\beta$, on vertices $q^1_1,q^1_2,q^2_1,q^2_2\in V_3(G_{\Phi})$ and the $u,v$ and $m$ vertices of $LT^1$ and $LT^2$ as follows: edges of colour $\beta$ form a complete bipartite graph with classes of bipartition $\{q^1_1,q^1_2\}$ and $\{u^2,m^2_1,m^2_2,m^2_3,v^2\}$, another one with classes of bipartition  $\{q^2_1,q^2_2\}$ and $\{u^1,m^1_1,m^1_2,m^1_3,v^1\}$, and edges $q^1_1u^1,q^1_2u^2,q^2_1v^1,q^2_2v^2$. Again, each of the vertices $u^1,u^2,v^1,v^2$ misses one edge of colour $\alpha$ and one edge of colour $\beta$, all other vertices of the variable gadget are of full degree. If a variable $x$ occurs in clauses $s_1,s_2,s_3,s_4$, the variable gadget of $x$ is connected to the clause vertices by edges $u^1z(s_1),v^1z(s_2),u^2z(s_3),v^2z(s_4)$ of colour $\alpha$ and by edges $u^1y(s_1),v^1y(s_2),u^2y(s_3),v^2y(s_4)$ of colour $\beta$. As in the case of $A_2$, if $f:G_{\Phi}\to J$ is a covering projection, then $\varphi:X\to \{{\sf true,false}\}$ defined by $\varphi(x)={\sf true}$ iff $f(u^1)=r$ for the copy of $u^1$ in the variable gadget for $x$, is a 2-in-4-satisfying truth valuation of $\Phi$. On the other hand, if $\Phi$ is 2-in-4-satisfiable, a covering projection of $G_{\phi}$ onto $J$ can be built from partial covering projections depicted in Figure~\ref{fig:J-var} bottom right (the variable gadget is depicted in top left of the same figure).

\begin{figure}
\centering
\includegraphics[width=\textwidth]{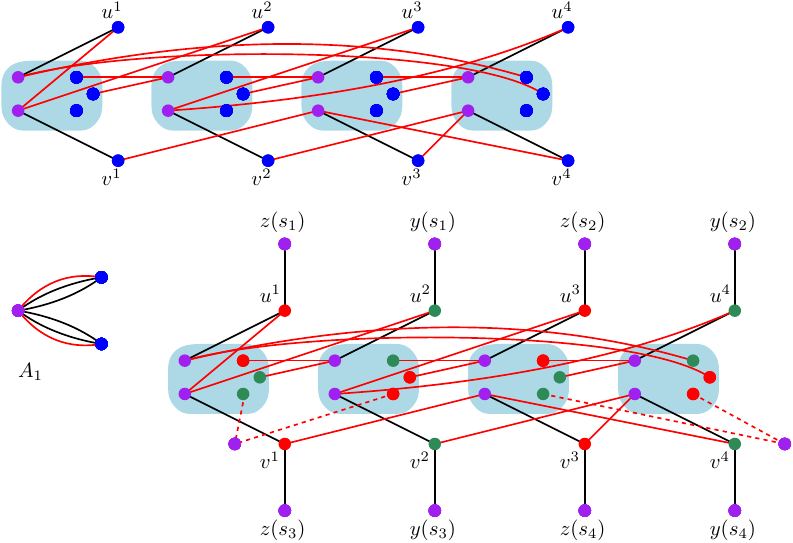}
\caption{The reduction for $A_1$.}
\label{fig:A1-var}
\end{figure}

\medskip\noindent
{Case $A_1$.} The last reduction builds upon the ideas of the previous ones, but is slightly more involved. We reduce from {\sc 2-in-4-SAT} for formulas with 4 occurrences per variable. For every clause $s\in C$, we introduce two vertices $z(s),y(s)\in V_1(G_{\Phi})$. 

The variable gadget, depicted in Figure~\ref{fig:A1-var} top left, consists of 4 copies of the limping tripod with edges of colour $\alpha$. Edges of colour $\beta$ are added as follows:
$$p^1_1m^4_1,p^1_1m^4_2, p^2_1m^1_1,p^2_1m^1_2,
  p^3_1m^2_1,p^3_1m^2_2,
  p^4_1m^3_1,p^4_1m^3_2,
  p^1_2u^1, p^1_2u^2, p^2_2u^3, p^2_2u^4, 
  p^3_2v^1,
  p^3_2v^4, p^4_2v^2, p^4_2v^3.$$
In this graph, vertices $u^i, v^i, i=1,2,3,4$ are missing one edge of colour $\alpha$ each, and vertices $m^i_3, i=1,2,3,4$ are missing one edge of  colour $\beta$ each, all other vertices are of full degree. In every partial covering projection $f$ onto $A_1$, the properties of limping tripods imply that $f(u^i)=f(v^i)$ for $i=1,2,3,4$. Since $f(p^1_2)=p$, the neighbours of $p^1_2$ must be mapped onto different neighbours of $p$ in $A_1$, i.e., one of them is mapped on $r$ and the other one on $g$. Hence $f(u^1)\neq f(u^2)$. Analogously $f(u^2)=f(v_2)\neq f(v_3)=f(u_3)$ (because of $v_2,v_3$ being neighbours of $p^3_2$ along edges of colour $\beta$) and $f(u^3)\neq f(u^4)$ (because they are neighbours of $p^2_2$ along edges of colour $\beta$). Hence $f(u^1)=f(v^1)=f(u^3)=f(v^3)=r$ and $f(u^2)=f(v^2)=f(u^4)=f(v^4)=g$, or vice versa. 

The connections of the variable gadgets to the clause gadgets are described in two steps. Let $x\in X$ be a variable and let $s_1,s_2,s_3,s_4$ be the clauses containing $x$. We connect the clause vertices $z(s_1),z(s_2),z(s_3)$ and $z(s_4)$ to  $u^1,u^3,v^1$ and $v^3$, respectively, and the clause vertices $y(s_1),y(s_2),y(s_3)$ and $y(s_4)$ to  $u^2,u^4,v^2$ and $v^4$, again in this order, by edges of colour $\alpha$. At this point, overall in the graph $G_{\Phi}$ as constructed so far, we have $2|C|$ vertices ($z(s),y(s), s\in C$) that are missing 2 edges of colour $\beta$ each, and $4|X|$ vertices ($m^i_3,i=1,2,3,4$ in the variable gadgets) that are missing one edge of colour $\beta$ each, all other vertices are of full degree. View the later vertices grouped in pairs $\{m^1_3,m^2_3\}, \{m^3_3, m^4_3\}$ within the variable gadgets. Since $|C|=|X|$, the number of such pairs is equal to the number of clause vertices. We make each clause vertex adjacent to both vertices of one pair via edges of colour $\beta$, in an arbitrary one-to-one correspondence of the clause vertices to the pairs. This concludes the construction of $G_{\Phi}$.

If $f:G_{\Phi}\to A_1$ is a covering projection, then $\varphi:X\to \{{\sf true,false}\}$ defined by $\varphi(x)={\sf true}$ iff $f(u^1)=f(u^3)=f(v^1)=f(v^3)=r$ 2-in-4-satisfies $\Phi$, because every vertex $z(s)$ has exactly two $\alpha$-neighbours mapped onto $r$ and exactly two of them mapped onto $g$, and so every clause $s$ has exactly two variables evaluated to {\sf true}.

On the other hand, take the vertex mapping indicated in Figure~\ref{fig:A1-var} bottom right for the variables evaluated to {\sf true}, and the primed mapping (i.e., red and green vertices being swapped) for the variables evaluated to {\sf false}. Then every clause vertex $z(s)$ has two neighbours mapped onto $r$ and two neighbours mapped onto $g$, but so does $y(s)$ as well. The two $\beta$-neighbours  of every $V_1$ vertex are mapped onto different vertices from $W_2(A_1)$ (for the $p^i_j$ vertices within variable gadgets this is clear from the partial covering from the gadget onto $A_1$ depicted in Figure~\ref{fig:A1-var}, for the clause vertices $z(s),y(s)$, this follows from the construction where each of these vertices has been made adjacent to a pair of red-green vertices in this partial covering). Thus, we see that $G_{\Phi}$ covers $A_1$ if and only if $\Phi$ is 2-in-4-satisfiable and the proof of NP-completeness of {\sc $A_1$-Cover} is concluded.   
\end{proof}

\subsection{Garbage collection}\label{subsec:garbage}

In this subsection we show how the puzzle of particular cases clips together and provide the proof of Parts 2 and 3 of Theorem~\ref{thm:main}. The last step is the garbage collection (i.e., completing the construction so that a covering projection from entire input graph onto the whole $H$ exists, whenever the relevant part of the input graph covers the harmful or dangerous block graph of $H$). In the constructions we rely on the following simple observation.

\begin{observation}\label{obs:perfectmatchings}
\begin{enumerate}
\item For every $k,m$ such that $0<k<m, m$ even, there exists a $k$-edge-colourable $k$-regular graph on $m$ vertices.
\item For every $k,m$ such that $0<k\le m$, there exists a $k$-edge-colourable $k$-regular bipartite graph with $2m$ vertices.
\end{enumerate}
\end{observation}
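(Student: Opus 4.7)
The plan is to exhibit explicit constructions in both cases, using classical $1$-factorization results.

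For Part~1, I would take the complete graph $K_m$ on $m$ vertices with $m$ even. It is well known (Lucas' construction / the round-robin tournament scheduling) that $K_m$ is $1$-factorizable, i.e., its edge set decomposes into $m-1$ perfect matchings $M_1, M_2, \ldots, M_{m-1}$. Since $k < m$, we have $k \le m-1$, so I would take the subgraph $G = M_1 \cup M_2 \cup \ldots \cup M_k$. By construction, $G$ is $k$-regular on $m$ vertices and its edges are partitioned into $k$ perfect matchings, which corresponds directly to a proper edge-colouring with $k$ colours.

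For Part~2, I would take the complete bipartite graph $K_{m,m}$ with bipartition classes of size $m$. This graph is $m$-regular and bipartite, so by König's edge-colouring theorem its chromatic index equals its maximum degree $m$; equivalently, $K_{m,m}$ decomposes into $m$ perfect matchings $N_1, \ldots, N_m$ (explicitly, if the classes are $\{u_0, \ldots, u_{m-1}\}$ and $\{v_0, \ldots, v_{m-1}\}$, one can set $N_j = \{u_i v_{(i+j) \bmod m} : 0 \le i \le m-1\}$). Since $k \le m$, I would take the subgraph $H = N_1 \cup \ldots \cup N_k$; this is a $k$-regular bipartite graph on $2m$ vertices whose edges decompose into $k$ perfect matchings, hence a proper $k$-edge-colouring exists.

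There is essentially no obstacle here: both parts reduce to taking an appropriate number of perfect matchings from a standard $1$-factorization. The only minor points to verify are that $m$ even is indeed sufficient for $K_m$ to be $1$-factorizable (and necessary in general, since $m$ odd forces an odd number of vertices, precluding a perfect matching), and that $k \le m$ is exactly what is needed so that the required number of matchings exists in $K_{m,m}$. Because the observation merely asks for existence, the explicit constructions above suffice and the proof is just a few lines once these classical facts are cited.
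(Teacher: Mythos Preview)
Your proposal is correct and follows essentially the same approach as the paper: take a $1$-factorization of $K_m$ (respectively $K_{m,m}$), which exists since $K_m$ with $m$ even and every bipartite graph are of Vizing class~1, and keep $k$ of the colour classes. The only additional remark in the paper is that for $k>1$ the colour classes can be chosen so that the resulting graph is connected, but this is not part of the statement itself.
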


\begin{proof}
\begin{enumerate}
\item It is well known that for an even $m$, the complete graph on $m$ vertices is of Vizing class 1, i.e., $(m-1)$-regular and $(m-1)$-edge-colourable. Take any $k$ of the $m-1$ colours, the union of the edges of these $k$ colours is the desired graph. Note that if $k>1$, the colours may be chosen so that the graph is connected.
\item Analogously, it is well known that every bipartite graph is of Vizing class 1. Hence every $k$-regular bipartite graph is $k$-edge-colourable. There are many ways how to construct $k$-regular bipartite graphs, for the mere existence we may argue as in the previous case -- take an $m$-edge-colouring of the complete bipartite graph $K_{m,m}$, choose $k$ of the colours and the union of the edges of these $k$ colours. Again, if $k>1$, we may further require that the graph is connected.
\end{enumerate}
\end{proof}

The following observation is easy but useful for simplifying the proofs below.

\begin{observation}\label{obs:span}
Let $H'$ be a block graph of a graph $H$ and let $H''$ be the spanning subgraph of $H$ containing all vertices of $H$ and exactly the edges of $H'$. Then {\sc $H'$-Cover} for simple input graphs polynomially reduces to {\sc $H''$-Cover} for simple input graphs.
\end{observation}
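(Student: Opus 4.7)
The plan is to reduce {\sc $H'$-Cover} to {\sc $H''$-Cover} by simply adjoining enough isolated vertices to the input to account for the vertices of $V(H)\setminus V(H')$. Given a simple input $G'$ for {\sc $H'$-Cover}, set $n=|V(G')|/|V(H')|$; if this ratio is not an integer then no equitable cover of $H'$ by $G'$ exists, and we output any fixed simple no-instance of {\sc $H''$-Cover}. Otherwise, form $G''$ by taking $G'$ and, for each vertex $w\in V(H)\setminus V(H')$, adding $n$ new isolated vertices coloured $c(w)$. No edges are introduced, so $G''$ is simple whenever $G'$ is, and the construction is clearly polynomial. Crucially, the edges of $H''$ coincide by definition with those of $H'$ and are therefore all incident with $V(H')$, so $V(H)\setminus V(H')$ is a set of isolated vertices in $H''$; furthermore, the canonical re-colouring described after Proposition~\ref{prop:degpart} forces the vertex-colours occurring on $V(H')$ to be disjoint from those on $V(H)\setminus V(H')$.

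For the forward direction, suppose $f'\colon G'\to H'$ is a covering projection, so $|f'^{-1}(v)|=n$ for every $v\in V(H')$. Extend $f'$ to $f\colon G''\to H''$ by sending the $n$ added vertices of colour $c(w)$ onto $w$, for each $w\in V(H)\setminus V(H')$; if two such vertices $w_1,w_2$ share a colour (necessarily because they lie in the same block of $H$), partition the $2n$ added vertices of that colour arbitrarily into two groups of size $n$ and assign them to $w_1$ and $w_2$. Each component of $G''$ is then either a component of $G'$ covering a component of $H'$ via $f'$, or a newly added isolated vertex covering a singleton component of $H''$, and the equitable condition $|f^{-1}(u)|=n$ holds on all of $V(H'')$ by construction.

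Conversely, suppose $f\colon G''\to H''$ is a covering projection. Colour-disjointness forces every vertex of $V(G')\subseteq V(G'')$ to map into $V(H')$ and every added vertex to map into $V(H)\setminus V(H')$. Since the edges of $G''$ coincide with those of $G'$ and the edges of $H''$ coincide with those of $H'$, the restriction $f|_{V(G')}\colon G'\to H'$ preserves the local-bijection condition at every vertex and is thus a covering projection; equitability with parameter $n$ is inherited because the preimages of vertices of $V(H')$ lie entirely within $V(G')$. I expect no real obstacle here: the only subtle point is the vertex-colour argument separating $V(H')$ from $V(H)\setminus V(H')$, which is granted directly by the re-colouring convention, and the routine checks that adding isolated vertices respects simplicity and the definition of equitable cover.
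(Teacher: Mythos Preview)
Your proposal is correct and follows essentially the same approach as the paper: both arguments add the appropriate number of suitably coloured isolated vertices to $G'$ to obtain $G''$, and rely on colour-preservation to show the restriction of any cover $G''\to H''$ to $V(G')$ yields a cover $G'\to H'$. The only cosmetic difference is that the paper performs a block-by-block divisibility check (rejecting if any $|V_i|/|W_i|$ fails to equal the common ratio $k$) rather than your global check on $|V(G')|/|V(H')|$, but this does not affect correctness since a mismatched block ratio in $G'$ would equally prevent $G''$ from covering $H''$.
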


\begin{proof}
Let $G'$ be the simple graph for which we seek a covering onto $H'$. Suppose $V(H)=W_1\cup\ldots\cup W_t$ and let $V(H')=W_1\cup\ldots\cup W_h$. Identify the blocks $V_1,\ldots,V_h$ of $G'$ and check if each $|V_i|$ is divisible by $|W_i|$ for $i=1,2,\ldots,h$, and if the ratio $\frac{|V_i|}{|W_i|}$ is the same for all $i$. Reject the graph if it is not; otherwise, denote this ratio by $k$. In the latter case, construct $G''$ by adding $k\cdot(\sum_{i=h+1}^t |W_i|)$ isolated vertices, vertex coloured so that for every $i=h+1,\ldots,t$, $k\cdot |W_i|$ vertices form a block $V_i$. It is clear that $G''$ covers $H''$ if and only if $G'$ covers $H'$.  
\end{proof}

\begin{definition}\label{def:balanced}
Let $H$ be a graph with at most 2 vertices in each equivalence class of its degree partition. Then $H$ is called {\em balanced} if in every doublet block and every edge-colour $\alpha$, the two vertices of this block are incident with the same number of semi-edges of colour $\alpha$. 
\end{definition}

\begin{proposition}\label{prop:balanced}
Let $H$ be a graph with at most 2 vertices in each equivalence class of its degree partition and let $H'$ be a balanced block graph of $H$. Then {\sc $H'$-Cover} for simple input graphs polynomially reduces to {\sc $H$-Cover} for simple input graphs.
\end{proposition}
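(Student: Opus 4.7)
The plan is to first invoke Observation~\ref{obs:span} to reduce $H'$-Cover to $H''$-Cover, where $H''$ is the spanning subgraph of $H$ containing all vertices of $H$ but only the edges belonging to $H'$. It then remains to show that $H''$-Cover reduces to $H$-Cover: given a simple input $G''$ whose block sizes already match those of $H$ but which carries only $H'$-coloured edges, one must build in polynomial time a simple graph $G$ that covers $H$ precisely when $G''$ covers $H''$.

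I would construct $G$ colour by colour. Since $H'$ is a block graph, every edge-colour $\beta$ present in $H$ but absent from $H'$ is confined to a single monochromatic uniblock or interblock graph $H^\beta$ of $H$. For each such $\beta$ I would attach to the appropriate blocks of $G''$ a simple $\beta$-coloured subgraph which itself covers $H^\beta$. When $\beta$ lives on a singleton block, the images of its vertices are already forced, and any simple graph on $|V_i|$ vertices covering $H^\beta$ suffices; existence and polynomial-time construction follow from the regular and regular-bipartite factorisation statements recorded in Observation~\ref{obs:perfectmatchings} and Lemmas~\ref{lem:bars}--\ref{lem:loops}. When $\beta$ lives on a doublet block, the balanced assumption forces $H^\beta$ into the symmetric form $W(k,m,\ell,m,k)$ (or $WD(c,b,c)$); this is the point where balancedness is truly used, since it allows me to pick the added $\beta$-structure so as to cover $H^\beta$ irrespective of which vertex of the doublet of $V_i$ ultimately maps to $r_i$ or $g_i$. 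Interblock cases involving a doublet are treated analogously. Before placing new edges I would first replace $G''$ by a constant number of vertex-disjoint copies so that each block is large enough to host the required regular (bipartite) subgraphs while leaving sufficiently many vertex pairs free of $H'$-edges.

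Correctness splits into two directions. The easy direction restricts any covering $G \to H$ to the $H'$-coloured edges and reads off a covering $G'' \to H''$, since $G''$ is by construction the spanning subgraph of $G$ on those colours. The converse uses Lemma~\ref{lem:swap}, whose hypothesis is precisely the balanced property of $H$: given a covering $f'': G'' \to H''$, I may freely swap $r_i \leftrightarrow g_i$ independently at every doublet block without destroying the cover, and since each added $\beta$-structure was designed to be symmetric under this swap, $f''$ extends to a covering $f: G \to H$ by mapping the $\beta$-edges according to the edge-colourings and $2$-factorisations provided by Observation~\ref{obs:perfectmatchings}.

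The hard part will be arranging three requirements simultaneously when choosing the added $\beta$-structures: (i) \emph{simplicity} of the resulting $G$, meaning no new edge duplicates another added edge or any pre-existing $H'$-edge; (ii) each added $\beta$-subgraph must be a genuine simple-graph cover of $H^\beta$, which for loops, semi-edges, and directed loops forces a careful decomposition into perfect matchings, even $2$-factors, and orientable $2$-factors; and (iii) these covers must be swap-symmetric at every doublet block. I would address (i) by choosing the number of disjoint copies of $G''$ large enough (but bounded in terms of $|V(H)|$) that enough free vertex pairs remain in every block, (ii) by invoking Observation~\ref{obs:perfectmatchings} together with Lemmas~\ref{lem:bars} and~\ref{lem:loops}, and (iii) by exploiting the balanced hypothesis through Lemma~\ref{lem:swap}. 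Since the size of $G$ grows only by a factor depending on $H$, the overall reduction runs in polynomial time.
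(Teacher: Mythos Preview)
Your overall skeleton---invoke Observation~\ref{obs:span}, take several disjoint copies of the input, graft on edges of the colours missing from $H'$, and appeal to Lemma~\ref{lem:swap} for the converse direction---is exactly the paper's strategy. But there is a genuine gap in how you propose to make the grafted $\beta$-structures compatible with an \emph{unknown} covering $f''$.

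First, your reading of Lemma~\ref{lem:swap} is off: it does not let you swap $r_i\leftrightarrow g_i$ independently block by block; it provides only the single global companion mapping $f'$ that swaps \emph{all} doublets simultaneously. An interblock graph such as $WW(b,c)$ with $b\neq c$ already shows that swapping one doublet but not an adjacent one destroys the cover. Second, and more seriously, you claim you can design the added $\beta$-structure on a doublet block so that it covers $H^\beta\simeq W(k,m,\ell,m,k)$ ``irrespective of which vertex of the doublet of $V_i$ ultimately maps to $r_i$ or $g_i$''. That is not possible in general: if $\ell\ge 1$, every vertex must have exactly $\ell$ $\beta$-neighbours on the \emph{other} side of the bipartition induced by $f''$, and you cannot arrange this before knowing the bipartition.

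The paper resolves this by a specific device you do not describe. It takes $2m$ copies of $G'$ (with $m$ exceeding the maximum degree of $H$), partitions them into two groups, and---crucially---places the new edges of each missing colour only between copies of the \emph{same} original vertex $x$: matchings inside group~1, matchings inside group~2, and matchings across the two groups, drawn from pre-prepared pools $\mathcal{A}(x),\mathcal{B}(x),\mathcal{C}(x)$. In the converse direction one then applies $f$ to all of group~1 and the global companion $f'$ to all of group~2. For $x$ in a doublet this guarantees that the group-1 copies and the group-2 copies of $x$ land on opposite vertices of $W_\sigma$, so one knows in advance which pool feeds loops/semi-edges and which feeds bars. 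Balancedness is used only to ensure that $f'$ is again a covering (Lemma~\ref{lem:swap}, hypothesis~2), not to make the $\beta$-gadget insensitive to the vertex mapping. Without this two-group mechanism, your sketch does not go through.
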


\begin{proof}
In view of Observation~\ref{obs:span} we may assume that $H'$ is a spanning block graph of $H$. Let $G'$ be a simple graph whose covering of $H'$ is questioned. Denote by $D$ the maximum total degree of a vertex of $H$, and fix an even number $m>D$. Construct a simple graph $G$ by taking $2m$ copies $G'_{ij},i=1,2, j=1,2,\ldots,m$ of $G'$ and adding edges as follows:

\begin{enumerate}
\item The copy of a vertex $x\in V(G')$ in $G'_{ij}$ is denoted by $x_{ij}$.
\item For every $x\in V(G')$, prepare a pool ${\cal A}(x)$ of $D$ disjoint perfect matchings on vertices $x_{1j}, j=1,2,\ldots,m$,  a pool ${\cal B}(x)$ of $D$ disjoint perfect matchings on vertices $x_{2j}, j=1,2,\ldots,m$, and a pool ${\cal C}(x)$ of $D$ disjoint perfect matchings in the complete bipartite graph with classes of bipartition $\{x_{1j}:j=1,2,\ldots,m\}$ and $\{x_{2j}:j=1,2,\ldots,m\}$;
\item for every two distinct $x,y\in V(G')$,  prepare a pool ${\cal D}(x,y)$ of $D$ disjoint perfect matchings in the complete bipartite graph with classes of bipartition $\{x_{1j}:j=1,2,\ldots,m\}$ and $\{y_{1j}:j=1,2,\ldots,m\}$, a pool ${\cal E}(x,y)$ of $D$ disjoint perfect matchings in the complete bipartite graph with classes of bipartition $\{x_{2j}:j=1,2,\ldots,m\}$ and $\{y_{2j}:j=1,2,\ldots,m\}$, and a pool ${\cal F}(x,y)$ of $D$ disjoint perfect matchings in the complete bipartite graph with classes of bipartition $\{x_{1j}:j=1,2,\ldots,m\}$ and $\{y_{2j}:j=1,2,\ldots,m\}$.
\item If $x\in V_{\sigma}$ such that $W_{\sigma}$ is a singleton block of $H$, and $\alpha$ is an edge-colour that appears within $W_{\sigma}$, do as follows

\begin{enumerate}
\item if $p_{\sigma}\in W_{\sigma}$ is incident with $b$ semi-edges and $c$ undirected loops of colour $\alpha$ in $H$, extract $b+2c$ perfect matchings from ${\cal A}(x)$ and put them into $G$ as edges of colour $\alpha$, and do the same with the pool ${\cal B}(x)$,
\item if $p_{\sigma}\in W_{\sigma}$ is incident with $d$ directed loops of colour $\alpha$ in $H$, extract $2d$ perfect matchings from ${\cal C}(x)$, orient their edges so that every vertex $x_{ij}$ has in-degree and out-degree $d$ and put them into $G$ as oriented edges of colour $\alpha$.
\end{enumerate}

\item  If $x\in V_{\sigma}$ such that $W_{\sigma}$ is a doublet block of $H$, and $\alpha$ is an edge-colour that appears within $W_{\sigma}$, do as follows 

\begin{enumerate}
\item if $H[W_{\sigma}]^{\alpha}\simeq W(b,c,\ell,c',b')$, extract $b+2c=b'+2c'$ perfect matchings from ${\cal A}(x)$ and put them into $G$ as edges of colour $\alpha$, do the same with the pool ${\cal B}(x)$, and extract $\ell$ perfect matchings from ${\cal C}(x)$ and put them into $G$ as edges of colour $\alpha$;
\item if $H[W_{\sigma}]^{\alpha}\simeq WD(b,\ell,b)$, extract $2b$ perfect matchings from ${\cal A}(x)$, orient their edges so that every vertex has in-degree and out-degree $b$ and put them into $G$ as directed edges of colour $\alpha$, do the same with the pool ${\cal B}(x)$, and extract $2\ell$ perfect matchings from ${\cal C}(x)$, orient the edges of $\ell$ of them from $x_{1j}$ to $x_{2j'}$ and of the other $\ell$ matchings from $x_{2j}$ to $x_{1j'}$  and put them into $G$ as directed edges of colour $\alpha$.
\end{enumerate}

\item If $W_{\sigma}$ and $W_{\rho}$ are different singleton blocks, pair the vertices in $V_{\sigma}$ and $V_{\rho}$ (they are $k$ of them in each of these blocks of $G'$) and for each such pair $x\in V_{\sigma}$, $y\in V_{\rho}$, do as follows. If $\alpha$ is an edge-colour such that $H[W_{\sigma}\cup W_{\rho}]^{\alpha}\simeq FF(b)$, extract $b$ disjoint perfect matchings from ${\cal D}(x,y)$ and add them as edges of colour $\alpha$ to $G$, and do the same with ${\cal E}(x,y)$.
\item If $W_{\sigma}$ is a singleton block of $H$ and $W_{\rho}$ is one of its doublet blocks, group their vertices into disjoint triples $x\in V_{\sigma}$, $y_1,y_2\in V_{\rho}$ (note that $|V_{\rho}|=2\cdot |V_{\sigma}|$) and do as follows. If
$\alpha$ is an edge-colour such that $H[W_{\sigma}\cup W_{\rho}]^{\alpha}\simeq FW(b)$, extract $b$ disjoint perfect matchings from ${\cal D}(x,y_1)$ and add them as edges of colour $\alpha$ to $G$, and do the same with ${\cal E}(x,y_2)$, ${\cal F}(x,y_1)$ and ${\cal F}(y_2,x)$.
\item  If $W_{\sigma}$ and $W_{\rho}$ are different doublet blocks of $H$, pair the vertices in $V_{\sigma}$ and $V_{\rho}$ (there are $2k$ of them in each of these blocks of $G'$) and for each such pair $x\in V_{\sigma}$, $y\in V_{\rho}$, do as follows. If $\alpha$ is an edge-colour such that $H[W_{\sigma}\cup W_{\rho}]^{\alpha}\simeq WW(b,c)$, extract $b$ disjoint perfect matchings from ${\cal D}(x,y)$ and add them as edges of colour $\alpha$ to $G$, do the same with ${\cal E}(x,y)$, and extract $c$ disjoint perfect matchings from ${\cal F}(x,y)$ and add them as edges of colour $\alpha$ to $G$, and do the same with ${\cal F}(y,x)$.
\end{enumerate}

We claim that $G$ covers $H$ if and only if $G'$ covers $H'$. Moreover, note that it follows from the construction of $G$ that it is a simple graph.

Clearly, if $G$ covers $H$, then each copy $G'_{ij}$ of $G'$ must cover $H'$. The core of the proof is in the opposite implication. 

Suppose $G'$ covers $H'$ and let $f:G'\to H'$ be a covering projection. Since $H'$ is balanced, the companion mapping $f':G'\to H'$ (cf. Lemma~\ref{lem:swap}) is a covering projection as well. Use $f$ on $G'_{1j}, j=1,2,\ldots,m$ and $f'$ on $G'_{2j}, j=1,2,\ldots,m$. A somewhat tedious but straightforward case analysis shows that the vertex part of this compound mapping is degree-obedient with respect to entire $H$, and the way the additional edges of $G$ were constructed from perfect matchings implies that this mapping extends to a graph covering projection of $G$ onto $H$. 
\end{proof}

\medskip\noindent
{\em Proof of Part 2 of Theorem~\ref{thm:main}.} The undirected harmful uniblock graphs are $F(b,c)$ (with $b\ge 2$ and $b+c\ge 3$), $W(k,m,\ell,p,q)$ (with $\ell\ge 1$ and $k+2m+\ell=q+2p+\ell\ge 3$) and $F(b,c)+F(b',c')$ (with $b'\le b, b\ge 2, b+2c=b'+2c'\ge 3$), the directed harmful uniblock graphs are $WD(c,b,c)$ (with $b,c\ge 1, b+c\ge 3)$. For each of them, the covering problem is NP-complete for simple input graphs as proven in Proposition~\ref{prop:NP-block}. The graphs $F(b,c)$ and $WD(b,c,b)$ are balanced, and thus if $H$ contains a harmful block graph of one of these two types, {\sc $H$-Cover} is NP-complete for simple graphs by  Proposition~\ref{prop:balanced}.

Consider $W(k,m,\ell,p,q)$ (with $\ell\ge 1$ and $k+2m+\ell=q+2p+\ell\ge 3$) being a monochromatic uniblock graph of a graph $H$. A bipartite graph $G'$ covers $W(k,m,\ell,p,q)$ if and only if it allows a vertex colouring by two colours such that every vertex has exactly $k+2m=q+2p$ neighbours of its own colour and $\ell$ neighbours of the other colour, i.e., if and only if it covers $W(k+2m,0,\ell,0,q+2p)$. If $f:G'\to W(k,m,\ell,p,q)$ is a covering projection, the vertex mapping $f:V(G')\to \{r,g\}$ is such a $(k+2m,\ell)$-colouring of $G'$, and so is the companion mapping $f':V(G')\to \{r,g\}$ defined by $f'(x)=r$ iff $f(x)=g$, which then extends to a covering projection $f':G'\to W(k,m,\ell,p,q)$. Take a simple bipartite graph $G'$ as an input of {\sc $(k+2m,\ell)$-Color} and construct a graph $G$ exactly as in the proof of Proposition~\ref{prop:balanced}. The argument about the companion vertex mapping implies that $G$ covers $H$ if and only if $G'$ allows a $(k+2m,\ell)$-colouring. And since {\sc $(k+2m,\ell)$-Color} is NP-complete for bipartite input graphs (as proven in \cite{n:BFHJK21-MFCS}), the NP-completeness of {\sc $W(k,m,\ell,p,q)$-Cover} for simple input graphs follows.

Finally, consider $W(b,c,0,c',b')=F(b,c)+F(b',c')$ with vertices $r,g$ (and with parameters $b+2c=b'+2c'\ge 3$, $b\ge 2$, $b'\le b$) being a monochromatic uniblock graph of a graph $H$. The parameters are such that {\sc $F(b,c)$-Cover} is NP-complete for simple graphs, as proven in \cite{n:BFHJK21-MFCS}. Take a connected simple graph   $G''$ as an input of this problem, and let $G'=2G''$ be the union of two disjoint copies of $G''$.    A covering projection onto $F(b,c)$ can be straightforwardly modified to a covering projection onto $F(b',c')$ (if $k=c'-c$, it is $b-b'=2k$ and taking the perfect matchings formed by preimages of the semi-edges of $F(b,c)$ two by two $k$ times, and mapping the union of two perfect matchings onto one loop of $F(b',c')$ creates the covering projection onto $F(b',c')$). Hence $G'$ covers $F(b,c)+F(b',c')$ if and only if $G''$ covers $F(b,c)$. Moreover, if $f:G'\to F(b,c)+F(b',c')$ is a covering projection, then the vertices of copy of $G''$ map onto $r$ and the vertices of the other copy onto $g$, and the companion mapping $f'$ is also a covering projection. Thus taking $G'$ and constructing a simple graph $G$ as in the proof of Proposition~\ref{prop:balanced}, we see that $G$ covers $H$ if and only if $G'$ covers   $F(b,c)+F(b',c')$, which happens if and only if $G''$ covers $F(b,c)$. Therefore {\sc $H$-Cover} is NP-complete for simple input graphs.

The monochromatic interblock graphs are $FW(c)$ (with $c\ge 3$) and $WW(b,c)$ (with $b,c\ge 1, b+c\ge 3$). For each of them, the covering problem is NP-complete for simple input graphs as proven in Proposition~\ref{prop:NP-interblock-easy}, and since both of them are balanced, the {\sc $H$-Cover} problem is NP-complete for any graph $H$ that has a monochromatic interblock graph isomorphic to one of them by Proposition~\ref{def:balanced}.  
\qed

\medskip\noindent
{\em Proof of Part 3 of Theorem~\ref{thm:main}.} If $H$ contains a harmful monochromatic uniblock or interblock graph, then {\sc $H$-Cover} is NP-complete by Part 2 of Theorem~\ref{thm:main}. If $H$ does not contain any harmful block graph, but contains a dangerous one, then $H$ contains a block graph $H'$ which is reducible to one of the graphs $A_1, A_2, A'_2, B_k, k\ge 1, B'_k, k\ge 2, C_k, k\ge 0, C'_k, k\ge 1, D_k, k\ge 1, D'_k, k\ge 1, E_{(2,2)}, E'_{(2,2)}, E_{(2,1)}, E'_{(2,1)}, F, F', H_k, k\ge 3, H'_k, k\ge 3, J, J', L_k, k\ge 1, L'_k, k\ge 1, M_k, k\ge 1, M'_k, k\ge 1, N$ and $N'$ (Proposition~\ref{prop:NP-interblock-uneasy}). For each of these graphs, say $H''$, the NP-completeness of {\sc $H''$-Cover} for simple input graphs is proven in Proposition~\ref{prop:reducedFW(2)NPc}. Therefore for the corresponding block graph $H'$ (which is reducible to the $H''$), the {\sc $H'$-Cover} problem is NP-complete for simple input graphs by the nature of reducibility. In all cases except of $E_{(2,1)}$ and $E'_{(2,1)}$, the graph $H''$, and therefore also $H'$, is balanced, and hence the NP-completeness of {\sc $H$-Cover} for simple input graphs follows from Proposition~\ref{prop:balanced}.

The graphs $E_{(2,1)}$ and  $E'_{(2,1)}$  need an extra care. Consider first $E_{(2,1)}$. In the proof of this case in Proposition~\ref{prop:reducedFW(2)NPc}, a simple graph $G_{\Phi}$ is constructed such that $G_{\Phi}$ covers $E_{(2,1)}$ if and only if it covers $E_{(2,2)}$ which happens if and only if it covers $D_1$. For such an input graph, the companion vertex mapping $f'$ to a covering projection $f$ again extends to a covering projection onto $E_{(2,2)}$, and hence also onto $E_{(2,1)}$. Therefore, the proof of Proposition~\ref{prop:balanced} applies because of the nature of the input graph, despite the fact that $E_{(2,1)}$ is not balanced.
For $E'_{(2,1)}$, the argumentation is analogous.
\qed

\subsection{Proof of Theorem~\ref{thm:newmain}}
Now we have everything ready to prove the main result of the paper.

\medskip\noindent
{\em Proof of Theorem~\ref{thm:newmain}.}
Suppose $H$ is a connected graph in which each equivalence class of the degree partition contains at most two vertices. The {\sc $H$-Cover} problem can be solved in constant or linear time if $H$ is a tree or a cycle or a path (possibly ending with semi-edges). Otherwise, consider the reduced graph $H$, reduced via the {\dar} of Definition~\ref{def:reduction}. It is important that $H$ also has at most two vertices in each equivalence class of its degree partition. If $H$ is a path or a cycle, then {\sc $H$-Cover} is solvable in polynomial time, and so is {\sc $H$-Cover}, due to Observation~\ref{obs:reduction}.

If $H$ contains a vertex of degree greater than 2, then  all vertices of $H$ have degrees greater than 2.  If all monochromatic uniblock and interblock graphs of $H$ are harmless, then {\sc $H$-Cover} is polynomially solvable for general input graphs, and so is {\sc $H$-Cover}, due to Observation~\ref{obs:reduction}. If $H$ contains a harmful or a dangerous uniblock or interblock graph, then {\sc $H$-Cover} is NP-complete for simple input graphs by Parts 2 and 3 of Theorem~\ref{thm:main}. If $G$ is a simple graph as an input to the {\sc $H$-Cover} problem, the reverse operation to the {\dar} gives a simple graph $G$ such that $G\to H$ if and only if $G \to H$. Hence {\sc $H$-Cover} is also NP-complete for simple input graphs. This concludes the proof.
\qed

\section{Concluding remarks}\label{sec:concl}

The polynomial algorithm described in Section~\ref{sec:poly} combines two approaches: finding perfect matchings and solving 2-SAT. Both of these problems are well known to be solvable in polynomial time. However, it is somewhat surprising that their combination also remains polynomially solvable. This contrasts with the so-called compatible 2-factor problem~\cite{kratochvil1992compatible}, where instances that are solvable in polynomial time fall into two distinct categories --- one solved by a reduction to perfect matching and the other solved by 2-SAT. When restrictions from both categories are present in the same instance, the problem becomes NP-complete.

Moreover, the tractability of the polynomial-time solvable cases does not depend on the target graph being fixed. If $H$ is a graph with at most two vertices in each block of the degree partition, and all monochromatic block and interblock graphs are harmless, then the algorithm described in Section~\ref{sec:poly} remains polynomial even when $H$ is part of the input.

We believe that the method developed above has a much wider potential and we conjecture the following:

\begin{conjecture2}
Let $H$ be a  block graph of a graph $H'$. Then {\sc $H$-Cover} for simple input graphs polynomially reduces to {\sc $H'$-Cover} for simple input graphs.
\end{conjecture2}

Ultimately, the overarching goal is to prove (or disprove) the Strong Dichotomy Conjecture for graph covers parameterized by the target graph. Ideally, this would include a comprehensive catalogue of all polynomially solvable cases.

\bibliographystyle{plainurl}
\bibliography{0-main,bib/knizky,bib/nakryti,bib/sborniky}

\end{document}